 \DeclareMathOperator{\imag}{{Im}}
\DeclareMathOperator{\real}{{Re}}
\renewcommand{\ll}{\left\langle}
\newcommand{\rr}{\right\rangle}
\numberwithin{equation}{section}
\newtheorem{theorem}{Theorem}
\newtheorem{proposition}{Proposition}
\newtheorem{lemma}{Lemma}
\newtheorem{remark}{Remark}
\newtheorem{definition}{Definition}
\theoremstyle{definition}
\numberwithin{equation}{section}
\numberwithin{theorem}{section}
\numberwithin{lemma}{section}
\numberwithin{proposition}{section}
\numberwithin{corollary}{section}
\numberwithin{corollary}{section}
\numberwithin{example}{section}
\numberwithin{definition}{section}
\numberwithin{remark}{section}
\renewcommand{\epsilon}{\eps}
\newcommand{\C}{\mathbb{C}}
\newcommand{\R}{{\mathbb R}}
\newcommand{\eps}{\varepsilon}
\newcommand{\pnorm}[2][]{\if #1'' \left|#2\right|_p \else \left|#2\right|_{#1} \fi}
\newcommand{\loc}{{\rm loc}}
\newcommand{\dive}{\mbox{div} \, }
\newcommand{\hu}{\hat u}
\newcommand{\hA}{\hat A}
\newcommand{\hSigma}{\hat \Sigma}
\newcommand{\D}{\mbox{Data}}
\newcommand{\bH}{{\bf H}}
\newcommand{\bA}{{\bf A}}
\newcommand{\bSigma}{{\bf \Sigma}}
\renewcommand{\theta}{\vartheta}
\def\XXint#1#2#3{{\setbox0=\hbox{$#1{#2#3}{\int}$ }
		\vcenter{\hbox{$#2#3$ }}\kern-.6\wd0}}
\newcommand{\VV}{\mathcal{V}}
\newcommand{\HH}{\mathcal{H}}
\newcommand{\LL}{{L}^2(\R^3)}
\newcommand{\Hcurl}{{H}_{{curl}}(\mathbb{R}^3)}
\newcommand{\cA}{{\hat A}}
\newcommand{\cU}{{\hat u}}
\DeclareMathOperator{\supp}{supp}
\title{Negative index materials: some mathematical perspectives}
\author[H.-M. Nguyen]{Hoai-Minh Nguyen}
\address{Hoai-Minh Nguyen \newline\indent
Department of Mathematics \newline\indent
	EPFL SB CAMA \newline\indent
	Station 8 CH-1015 Lausanne, Switzerland}
\email{hoai-minh.nguyen@epfl.ch}
\begin{document}

\begin{abstract}

Negative index materials are artificial structures whose  refractive index has a negative value over some frequency range. These materials were postulated and  investigated theoretically by Veselago  in 1964 and were confirmed experimentally by Shelby, Smith, and Schultz in 2001.  New fabrication techniques now allow for the construction of negative index materials at
scales that are interesting for applications, which has made them a very active topic of investigation. 
In this paper, we report various mathematical results on the properties of negative index materials and their applications. The topics discussed herein include  superlensing using complementary media, cloaking using complementary media,  cloaking an object via anomalous localized resonance, and the well-posedness and the finite speed propagation in media consisting of dispersive metamaterials. Some of the results have been refined and have simpler proofs than the original ones. 
\end{abstract}

\maketitle

\tableofcontents

\section{Introduction}

Negative index materials (NIMs) are artificial structures whose  refractive index has a negative value over some frequency range. These materials were postulated and  investigated theoretically by Veselago  in 1964  \cite{Veselago} and  their existence  was confirmed experimentally by Shelby, Smith, and Schultz in 2001 \cite{ShelbySmithSchultz}.  New fabrication techniques now allow for the construction of NIMs at
scales that are interesting for applications, which has made them a very active topic of investigation.  NIMs have  attracted a lot of attention from the scientific community, not only because of potentially interesting applications, but also because of challenges involved in understanding
their peculiar properties due to the sign changing coefficients in the equations modelling the phenomena.

Concerning the electromagnetic wave, wave phenomena in the time harmonic regime are modeled  by  Maxwell equations 
\begin{equation}\label{eq-electromagnetic}
\left\{ \begin{array}{cl}
\nabla \times E = i k \mu H  & \mbox{ in } \R^3, \\[6pt]
\nabla \times H = - i k \eps E + j & \mbox{ in } \R^3. 
\end{array} \right. 
\end{equation}
Here $\eps$ and  $\mu$ are $3 \times 3$ matrix-valued functions 
corresponding  to the permittivity and permeability of the medium, respectively,  $j$ is the density of charge,  $k > 0$ is the frequency, and $i$ is the pure imaginary complex number ($i^2 = -1$). NIMs fall into  the region in which both $\eps$ and $\mu$ are negative (in the matrix sense); for a  standard material, both $\eps$ and $\mu$ are positive. Concerning the acoustic wave, phenomena in the time harmonic regime are modeled  by the Helmholtz equation
\begin{equation}\label{eq-acoustic}
\dive (\bA \nabla u) + k^2 \bSigma u = f \mbox{ in } \R^d, 
\end{equation}
with $d = 2, \, 3$ where $\bA$ is a $d \times d$ matrix-valued function  and $\bSigma$ is a function that describes the properties of the medium. For NIMs, $\bA$ and $\bSigma$ are negative;  for a  standard material, both $\bA$ and $\bSigma$ are positive. In the acoustic quasistatic regime $k = 0$, the medium is therefore characterized by the matrix $\bA$. This  regime will be discussed in detail in this paper to illustrate the phenomena and  mathematical ideas used to investigate NIMs with an exception in Section~\ref{sect-Maxwell} where only the electromagnetic setting is considered. Facts  related to  the finite frequency regime and the electromagnetic setting are also mentioned. 

To correctly investigate these equations, one adds some loss (friction or damping effects) into the region of NIMs  and then studies these equations as the loss goes to 0. 
Mathematically, the study of media consisting of NIMs  faces two difficulties. First, the equations describing the wave phenomena have sign changing coefficients, hence the ellipticity and the compactness  are lost in general. Second, a localized resonance, i.e., the field (the solution) explodes in some regions and remains bounded in some others as the loss goes to 0, might appear. In this paper, we report various mathematical results on the properties of NIMs and their applications. The topics are superlensing using complementary media (Section~\ref{sect-SCM}), cloaking using complementary media (Section~\ref{sect-CCM}), cloaking {\it an object} via anomalous localized resonance (Section~\ref{sect-CALR-object}), and the well-posedness and the finite speed propagation properties of electromagnetic waves in the time domain for media consisting of dispersive metamaterials (Section~\ref{sect-Maxwell}). 
Concerning the first three topics, refined results in comparison with the original works will be presented.
Other aspects on NIMs, such as the stability of NIMs and cloaking a source via anamlous localized resonance, will be mentioned  briefly in the last section (Section~\ref{sect-perspectives})  in which we also discuss future directions. This report can be considered  as a  companion to the one in \cite{Ng-survey-NIM-1} written in 2015 in which superlensing using complementary media, cloaking using complementary media,  and the stability of NIMs and cloaking a source via anomalous localized resonance are discussed in the spirit of the original works. 

\section{Superlensing using complementary media}\label{sect-SCM}

Superlensing using NIMs  was suggested by Veselago in his seminal paper \cite{Veselago}. In this paper,   he considered  a slab lens with $\eps = \mu = - I$, where $I$ denotes the identity matrix,  and investigated its lensing property  using  ray theory. Later, the study of  cylindrical lenses in the two dimensional  quasistatic regime, the Veselago slab lens, cylindrical lenses and   spherical lenses in the finite frequency regime were respectively suggested and examined  by Nicorovici, McPhedran, and Milton  \cite{NicoroviciMcPhedranMilton94},  Pendry  \cite{PendryNegative, PendryCylindricalLenses}, and  Pendry and Ramakrishna  \cite{PendryRamakrishna0, PendryRamakrishna} for constant isotropic objects. 

In this section, we present superlensing schemes in the spirit of  \cite{Ng-Superlensing} in which we  established  superlensing using complementary media for arbitrary objects. The superlensing  schemes in \cite{Ng-Superlensing} are inspired by  the ones  suggested in \cite{NicoroviciMcPhedranMilton94, PendryNegative, PendryCylindricalLenses, PendryRamakrishna} but different from there. The lenses  in \cite{Ng-Superlensing}  also have their roots from  \cite{Ng-Complementary} in which  complementary media were defined and investigated  from mathematical point of views. It was shown later in \cite{Ng-CALR-object} that the modification proposed in \cite{Ng-Superlensing} in comparison with \cite{NicoroviciMcPhedranMilton94, PendryNegative, PendryCylindricalLenses, PendryRamakrishna} was necessary in order to achieve superlensing (see also Section~\ref{sect-CALR-object}).

We next mathematically  describe superlensing using complementary media. Let  $B_r$ denote the ball in $\R^d$ centered at the origin and of radius $r > 0$.  We first consider the quasistatic acoustic setting in  a  two dimensional, bounded domain.  
To magnify $m$ times ($m>1$)  an {\it arbitrary} object in $B_{r_0}$ with  $r_0 > 0$, one uses a  lens  consisting  of two layers as follows. The first layer in $B_{r_1} \setminus B_{r_0}$ is characterized by the identity matrix $I$,  and the second one in $B_{r_2} \setminus B_{r_1}$ is characterized by the matrix $-I$. Here $r_1$ and $r_2$ are defined by 
\begin{equation}\label{choice-r12}
r_1 = m^{1/2} r_0 \quad \mbox{ and } \quad  r_2 =  m r_0.
\end{equation}
Different choices for $r_1$ and $r_2$ are possible. Nevertheless, there are some restrictions on them. In particular, $r_1/r_0$ cannot be too close to 1 (see Remark~\ref{rem-r1-r2}). The choice considered in \eqref{choice-r12} has the advantage that the system is somehow  stable for small loss (see \eqref{SCM-key-point-1}) and our proof of  superlensing is quite simple in this case.

Assume that the object inside $B_{r_0}$ is characterized by a  symmetric, uniformly elliptic, matrix-valued function $a$, i.e., for some constant $\Lambda \ge 1$, 
$$
\Lambda^{-1} |\xi|^2 \le a(x) \xi \cdot \xi  \le \Lambda |\xi|^2 \mbox{ for a.e. } x \in B_{r_0}  \mbox{ and for } \xi \in \R^2, 
$$
and the medium outside $B_{r_2}$ is homegeneous and, hence, is  characterized by the identity matrix $I$. Then,  
with the loss being described by a small, positive parameter $\delta$, the medium considered is characterized by $A_\delta := s_\delta A$, \footnote{$A_0$ plays the role of $\bA$ in \eqref{eq-acoustic}.}
where 
\begin{equation}\label{SCM-defA-ss}
A = \left\{ \begin{array}{cl} a & \mbox{ in } B_{r_0},\\[6pt]
I  & \mbox{ otherwise}, 
\end{array} \right. 
\quad \mbox{ and } \quad  
s_\delta = \left\{ \begin{array}{cl} -1 - i \delta  & \mbox{ in } B_{r_2} \setminus B_{r_1}, \\[6pt]
1 & \mbox{ otherwise},
\end{array} \right.   \mbox{ for } \delta \ge 0. 
\end{equation}

The superlensing property of the considered lens  is confirmed by the following theorem. 
\begin{theorem}\label{SCM-thm1} Let  $0 < \delta < 1$, $\Omega$ be a smooth, bounded, open subset of $\R^2$,   $f \in L^{2}(\Omega)$,  and set $r_3 = r_2^2/ r_1$. Assume that $B_{r_3} \subset \subset \Omega$ and   $\supp f \subset \Omega \setminus B_{r_{3}}$, and let $u_\delta \in H^{1}_0(\Omega)$ be the unique solution of the equation 
$$
\dive(A_\delta \nabla u_\delta ) = f \mbox{ in } \Omega.
$$
We have 
\begin{equation}\label{SCM-key-point-1}
\| u_{\delta} -  \hu \|_{H^1(\Omega)} \le C \delta^{1/2} \| f\|_{L^2(\Omega)}
\end{equation}
and 
\begin{equation}\label{SCM-key-point-2}
\| u_{\delta} -  \hu \|_{H^1(\Omega \setminus B_{r_3})} \le C \delta \| f\|_{L^2(\Omega)},  
\end{equation}
for some positive constant $C$ independent of $f$ and $\delta$.  In particular,  
\begin{equation}\label{SCM-key-point-3}
u_\delta \to \hu \mbox{ in } H^1(\Omega \setminus B_{r_3}) \mbox{ as } \delta \to 0. 
\end{equation}
Here $\hu \in H^1_0(\Omega)$ is the unique solution of the equation 
\begin{equation*}
\dive (\hA \nabla \hu ) = f \mbox{ in } \Omega, 
\mbox{ where } 
\hA = \left\{ \begin{array}{cl} a( \cdot /m) & \mbox{ in } B_{r_2},\\[6pt]
I  & \mbox{ otherwise}. 
\end{array} \right. 
\end{equation*}
\end{theorem}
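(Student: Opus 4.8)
The plan is to combine the \emph{reflecting technique} --- Kelvin transforms across the interfaces $\partial B_{r_1}$ and $\partial B_{r_2}$ --- with a uniform-in-$\delta$ a priori estimate and a comparison argument. Preliminarily, for $\delta>0$ the problem $\dive(A_\delta\nabla\cdot)=f$ in $\hsob$ is well posed: testing the homogeneous equation with $\overline{u_\delta}$ and taking the imaginary part forces $\nabla u_\delta=0$ in $B_{r_2}\setminus B_{r_1}$, whence $u_\delta\equiv 0$ by unique continuation and the transmission conditions together with the well-posedness of the genuinely elliptic problem in $B_{r_1}$; existence then follows from the Fredholm alternative (or, equivalently, from the reflecting technique below). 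The limit problem for $\hu$ is well posed by Lax--Milgram, $\hA$ being symmetric and uniformly elliptic.

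Next comes the geometry, which is exactly what dictates the choice \eqref{choice-r12}. Put $F_j(x)=r_j^2 x/|x|^2$ for $j=1,2$. Then $r_1^2/r_2=r_0$ and $r_2^2/r_3=r_1$, so $F_1$ maps the complementary shell $B_{r_2}\setminus\overline{B_{r_1}}$ \emph{exactly onto} the matched shell $B_{r_1}\setminus\overline{B_{r_0}}$, $F_2$ maps $B_{r_3}\setminus\overline{B_{r_2}}$ onto $B_{r_2}\setminus\overline{B_{r_1}}$, and $F_2\circ F_1(x)=mx$ --- the source both of the magnified object $a(\cdot/m)$ and of the radius $r_3=mr_1$. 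Since in dimension two the Kelvin transform is conformal, it carries $\dive(cI\nabla\cdot)$ into itself for every constant $c$; this is the analytic form of the complementarity of the two shells $\pm I$. Using it I would build a comparison function $U_\delta\in\hsob$ by gluing $\hu$ on $\Omega\setminus B_{r_3}$ and on $B_{r_3}\setminus\overline{B_{r_2}}$, the reflection $\hu\circ F_2$ on $B_{r_2}\setminus\overline{B_{r_1}}$, and the dilation $x\mapsto\hu(mx)$ on $B_{r_1}$. The conformal identities make $U_\delta$ solve $\dive(A_\delta\nabla U_\delta)=0$ in the interior of each shell, and the transmission conditions for $\hu$ make all Dirichlet traces match across $\partial B_{r_0}$, $\partial B_{r_1}$, $\partial B_{r_2}$, $\partial B_{r_3}$, so $U_\delta\in\hsob$ and $\dive(A_\delta\nabla U_\delta)=f+g_\delta$ where $g_\delta$ is a single-layer distribution on $\partial B_{r_1}\cup\partial B_{r_2}$ carrying the sole mismatch --- the loss factor $1+i\delta$ against $1$ --- whose density is a trace of $\hu$ on a sphere on which $\hu$ is harmonic; hence $\|g_\delta\|_{H^{-1}(\Omega)}\le C\delta\|f\|_{L^2(\Omega)}$. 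By construction $U_\delta=\hu$ on $\Omega\setminus B_{r_3}$.

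The heart of the matter, and the step I expect to be the main obstacle, is the uniform a priori estimate $\|v\|_{H^1(\Omega)}\le C\,\|\dive(A_\delta\nabla v)\|_{H^{-1}(\Omega)}$ for $v\in\hsob$, with $C$ independent of $\delta\in(0,1)$ --- i.e.\ the absence of localized resonance. In the favourable geometry \eqref{choice-r12} this can be obtained from the reflecting technique itself: reflecting $v$ across $\partial B_{r_1}$ \emph{unfolds} the sign-changing problem --- precisely because $F_1$ sends $B_{r_2}\setminus B_{r_1}$ onto $B_{r_1}\setminus B_{r_0}$ --- into a genuinely elliptic problem on $\Omega$ with uniformly elliptic coefficients, for which the bound is classical, while the dissipation inequality $\delta\int_{B_{r_2}\setminus B_{r_1}}|\nabla v|^2\le\|\dive(A_\delta\nabla v)\|_{H^{-1}(\Omega)}\|v\|_{H^1(\Omega)}$ absorbs the $O(\delta)$ terms produced by the loss. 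Granting this, applying the estimate to $u_\delta-U_\delta$ and invoking $\|g_\delta\|_{H^{-1}(\Omega)}\le C\delta\|f\|_{L^2(\Omega)}$ gives $\|u_\delta-U_\delta\|_{H^1(\Omega)}\le C\delta\|f\|_{L^2(\Omega)}$; restricting to $\Omega\setminus B_{r_3}$, where $U_\delta=\hu$, yields \eqref{SCM-key-point-2} and therefore \eqref{SCM-key-point-3}, while \eqref{SCM-key-point-1} is the corresponding global estimate, its coarser rate $\delta^{1/2}$ stemming from the square root in the dissipation inequality.
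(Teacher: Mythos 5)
Your geometric setup and your comparison function are sound, and in fact they coincide with the paper's: your $U_\delta$ is exactly the function $u_0$ of \eqref{def-u0} (note it does not depend on $\delta$), and your computation that the residual $g_\delta=\dive\big((A_\delta-A_0)\nabla u_0\big)$ is a single-layer term of size $O(\delta)$ on $\partial B_{r_1}\cup\partial B_{r_2}$, with $\|g_\delta\|_{H^{-1}(\Omega)}\le C\delta\|f\|_{L^2(\Omega)}$, is correct. The genuine gap is precisely the step you call the heart of the matter: the claimed $\delta$-uniform a priori estimate $\|v\|_{H^1(\Omega)}\le C\,\|\dive(A_\delta\nabla v)\|_{H^{-1}(\Omega)}$ for all $v\in H^1_0(\Omega)$ is false. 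The limiting problem has contrast exactly $-1$ across the smooth interfaces $\partial B_{r_1}$ and $\partial B_{r_2}$, which is the non-Fredholm case: surface-plasmon quasi-resonances of high angular frequency accumulate at contrast $-1$, so the solution operator blows up (roughly like $1/\delta$) as $\delta\to 0$ for data concentrated near the shell, e.g.\ supported in $B_{r_3}\setminus B_{r_2}$. This is the anomalous localized resonance exploited in Section~\ref{sect-CALR-object}, and it is exactly why the theorem assumes $\supp f\subset\Omega\setminus B_{r_3}$; the only $\delta$-robust bound available is \eqref{stability-1} of Lemma~\ref{lem-stability}, which carries a factor $1/\delta$. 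Your proposed justification --- reflecting $v$ across $\partial B_{r_1}$ to ``unfold'' the problem into a uniformly elliptic one --- is the T-coercivity idea, and it degenerates precisely at contrast $-1$: the reflection matches Cauchy data on the interface but does not produce an elliptic problem with a $\delta$-independent constant. There is also an internal inconsistency that signals the trouble: if your estimate held, applying it to $u_\delta-U_\delta$ would give the rate $\delta$ \emph{globally}, whereas you then assert \eqref{SCM-key-point-1} with only $\delta^{1/2}$ ``from the square root in the dissipation inequality''; in fact, without the false estimate your argument derives neither rate.

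For comparison, the paper's route is: \eqref{SCM-key-point-1} follows by applying \eqref{stability-1} of Lemma~\ref{lem-stability} \emph{twice} to $u_\delta-u_0$, first to get $\|u_\delta-u_0\|_{H^1(\Omega)}\le C\|f\|_{L^2(\Omega)}$ and then, feeding this back and using that the right-hand side $i\delta\,\dive(\chi_{B_{r_2}\setminus B_{r_1}}\nabla u_0)$ is itself $O(\delta)$, to gain the factor $\delta^{1/2}$. The improvement to the rate $\delta$ in \eqref{SCM-key-point-2} is not obtained from any comparison function built from $\hu$: it requires reflecting $u_\delta$ itself (the functions $u_{1,\delta}$, $u_{2,\delta}$) and assembling the auxiliary function $\hu_\delta$ of \eqref{def-hu-lens} (the removing-localized-singularity device), which solves a transmission problem for the genuinely elliptic operator $\dive(\hA\nabla\cdot)$ with jump data of size $O(\delta)$ controlled through \eqref{SCM-key-point-1}; Lemma~\ref{lem-stability-S} then yields the rate $\delta$, and one concludes because $\hu_\delta=u_\delta$ outside $B_{r_3}$. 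These two mechanisms --- the bootstrap on the $1/\delta$ estimate and the auxiliary function $\hu_\delta$ --- are what your proposal is missing, and they cannot be replaced by a uniform resolvent bound, which does not hold here.
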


%In this paper, we denote  
%\begin{equation*}
%W^{1}(\R^2) = \left\{ u \in L^1_{\loc}(R^2); \;  \frac{u(x)}{\ln(2 + |x|) \sqrt{1 + |x|^2}}  \in L^2(\R^2) \mbox{ and } \nabla u \in [L^2(\R^2)]^2 \right\} 
%\end{equation*}
%and 
%\begin{equation*}
%W^{1}(\R^3) = \left\{ u \in L^1_{\loc}(\R^3); \frac{u(x)}{\sqrt{1 + |x|^2}}  \in L^2(\R^3) \mbox{ and } \nabla u \in [L^2(\R^3)]^3 \right\}.
%\end{equation*}
%
%
%\begin{remark} \rm The reader who is not familiar with the space $W^{1}(\R^d)$ can consider the bounded setting by  replacing $\R^d$ with an bounded open subset $B_{r_3} \subset \subset  \Omega \subset \R^d$ and using a standard boundary condition on $\partial \Omega$ such as the zero Dirichlet or the zero Neumann conditions. 
%\end{remark}

\begin{proof} We first prove \eqref{SCM-key-point-1}. 
The key idea of its proof is to  construct a solution $u_0 \in H^1_0(\Omega)$ to  the equation $\dive (A_0 \nabla u_0) = f$ in $\Omega$. To motivate the construction of $u_0 \in H^1_0(\Omega)$ below, we first assume that there exists such a solution $u_0$. 
Let $u_{1, 0}$ be the reflection of $u_0$ in $B_{r_2}$ through $\partial B_{r_2}$ via the Kelvin transform $F$ associated with $\partial B_{r_2}$, i.e., 
\begin{equation}
u_{1, 0} (x) = u_0 \circ F^{-1} \mbox{ for } x \in \R^2 \setminus B_{r_2}, \mbox{ where } F(x) := r_2^2 x / |x|^2 \mbox{ for } x \in \R^2. 
\end{equation}
Note that $F$ respectively transforms $\partial B_{r_1}$ onto $\partial B_{r_3}$ and $\partial B_{r_0}$ onto $\partial B_{r_2}$; the constant $r_3$ appears naturally here.  
Since $\Delta u_{0} = 0$ in $B_{r_2} \setminus \bar B_{r_1}$ and in $B_{r_1}$, it follows that 
$$
\Delta u_{1, 0} =  0 \mbox{ in } B_{r_3} \setminus \bar B_{r_2} \mbox{ and in } \R^2 \setminus \bar B_{r_3}. 
$$ 
Moreover, by taking into account the continuity $u_0$ and its flux on $\partial B_{r_2}$ and $\partial B_{r_3}$,  we have 
$$
u_{1, 0} = u_{0} \quad \mbox{ and } \quad  \partial_r u_{1, 0} = -  \partial_r u_{0} |_{int} = \partial_r u_{0} |_{ext} \mbox{ on } \partial B_{r_2}
$$
and 
$$
u_{1, 0} |_{ext} = u_{1, 0} |_{int}   \quad \mbox{ and } \quad \partial_r u_{1, 0} |_{ext}  = - \partial_r u_{1, 0} |_{int}  \mbox{ on } \partial B_{r_3}. 
$$
Here and in what follows, for a smooth, bounded, open subset $D$ of $\R^d$, on its boundary $\partial D$,  $u |_{ext}$ and $u |_{int} $ denote the restriction of $u$ in $\R^d \setminus \bar D$ and the restriction of $u$ in $D$, respectively,  for an appropriate function $u$.  We also denote $[u]$ as the quantity $u |_{ext} - u |_{int}$ on $\partial D$ and use similar notations for $A \nabla u \cdot \nu$ for an appropriate function $u$ where $A$ is a matrix and $\nu$ is the unit normal vector on $\partial D$ directed to the exterior of $D$. 

Let $u_{2, 0}$ be the reflection of $u_{1, 0}$ in $B_{r_3}$ through $\partial B_{r_3}$ via the Kelvin transform $G$ associated with $\partial B_{r_3}$, i.e., 
\begin{equation}
u_{2, 0} (x) = u_{1, 0} \circ G^{-1} \mbox{ for } x \in B_{r_3}, \mbox{ where } G(x) := r_3^2 x / |x|^2 \mbox{ for } x \in \R^2. 
\end{equation}
We then have 
\begin{equation}\label{eq-u2}
\dive (\hA \nabla  u_{2, 0} ) = 0 \mbox{ in } B_{r_3}
\end{equation}
$$
u_{2, 0}  = u_{1, 0} |_{int}   \quad \mbox{ and } \quad \partial_r u_{2, 0}  =  \partial_r u_{1, 0} |_{int}  \mbox{ on } \partial B_{r_3}. 
$$
The definition of $\hA$ in $B_{r_3}$ appears naturally here. 
Since $\hat A = I$ in $B_{r_3} \setminus B_{r_2}$ by the choices of $r_1$ and $r_2$ ($G\circ F(\partial B_{r_0}) = \partial B_{m r_0}  = \partial B_{r_2}$), it follows from \eqref{eq-u2} that 
$$
\Delta u_{2, 0} = 0 \mbox{ in } B_{r_3}. 
$$
Set 
\begin{equation}
w_{0} = \left\{ \begin{array}{cl} u_{0} &  \mbox{ in } \Omega \setminus \bar B_{r_2},\\[6pt]
u_{2, 0} & \mbox{ in } B_{r_2}. 
\end{array} \right.
\end{equation}
Then 
\begin{equation}
\dive (\hA \nabla w_0) = f \mbox{ in } \Omega \setminus \partial B_{r_2}, \quad [w_0] = 0 \mbox{ on } \partial B_{r_2}, \quad \mbox{ and } \quad [\hA \nabla w_0 \cdot \nu] = 0 \mbox{ on } \partial B_{r_2}.
\end{equation}
It follows that $w_0 \in H^1_0(\Omega)$ is a solution of 
\begin{equation}
\dive (\hA \nabla w_0) = f \mbox{ in } \Omega. 
\end{equation}
We derive that 
$$
\hu = w_0 \mbox{ in } \Omega. 
$$

Inspired by the heuristic arguments above, we define 
\begin{equation}\label{def-u0}
u_0 = \left\{\begin{array}{cl} \hu &  \mbox{ in } \Omega \setminus B_{r_2}, \\[6pt]
\hu \circ F & \mbox{ in } B_{r_2} \setminus B_{r_1}, \\[6pt]
\hu \circ G \circ F = \hu(m \, \cdot ) & \mbox{ in } B_{r_0}.
\end{array} \right. 
\end{equation}
It is clear from the definition of $\hu$ that 
\begin{equation}\label{u0-p1}
\Delta u_0 = f \mbox{ in } (\Omega \setminus \bar B_{r_2}) \cup (B_{r_2} \setminus \bar B_{r_1}) \quad \mbox{ and } \quad \dive(a \nabla u_0) = 0 \mbox{ in } B_{r_1}. 
\end{equation}
Moreover, one can check that 
\begin{equation}\label{u0-p2}
[u_0] = 0 \quad \mbox{ and } \quad [s_0 A \nabla u_0 \cdot \nu] = 0 \mbox{ on } (\partial B_{r_2} \cup \partial B_{r_1}).
\end{equation}
Combining \eqref{def-u0}, \eqref{u0-p1}, and \eqref{u0-p2}  yields that $u_0 \in H^1_0(\Omega)$ is a solution of the equation $\dive (A_0 \nabla u_0) = f$ in $\R^2$.

We have 
\begin{equation}\label{eq-diff}
\dive \big(A_\delta \nabla (u_\delta - u_0) \big) = - \dive \big((A_\delta - A_0) \nabla u_0 \big) = i \delta \dive (\chi_{B_{r_2} \setminus B_{r_1}}\nabla u_0) \mbox{ in } \Omega, 
\end{equation}
where $\chi_{D}$ denotes the characteristic function of a given subset $D$ of $\R^2$. Applying \eqref{stability-1} of Lemma~\ref{lem-stability} below, we have 
$$
\|u_\delta - u_0 \|_{H^1(\Omega)} \le C  \| \nabla u_0\|_{L^2(B_{r_2} \setminus B_{r_1})} \le C \| f\|_{L^2(\Omega)}.  
$$
This yields, by \eqref{eq-diff} and \eqref{stability-1} of Lemma~\ref{lem-stability} again, 
$$
\|u_\delta - u_0 \|_{H^1(\Omega)} \le C  \delta^{1/2}  \| f\|_{L^2(\Omega)}, 
$$
which is \eqref{SCM-key-point-1}. 

\medskip 
We next establish \eqref{SCM-key-point-2}.  Similar to the definition of $u_{1, 0}$ and $u_{2, 0}$, we define $u_{1, \delta}$ in $\R^2 \setminus \bar B_{r_2}$ and $u_{2, \delta}$ in $B_{r_3}$ as follows  
\begin{equation*}
u_{1, \delta} = u_{\delta} \circ F^{-1} \mbox{ in } \R^2 \setminus \bar B_{r_2} \quad \mbox{ and } \quad u_{2, \delta} = u_{1, \delta} \circ G^{-1} \mbox{ in } B_{r_3}. 
\end{equation*}
As above, one can verify that 
\begin{equation}\label{SCM-t1}
\Delta u_{1, \delta} = 0 \mbox{ in } B_{r_3} \setminus \bar B_{r_2}, \quad u_{1, \delta} = u_{\delta} \mbox{ on } \partial B_{r_2}, \quad (1 + i \delta) \partial_r u_{1, \delta} = \partial_r u_{\delta} |_{ext} \mbox{ on } \partial B_{r_2}, 
\end{equation}
\begin{equation}\label{SCM-t2}
\dive (\hA \nabla u_{2, \delta}) = 0 \mbox{ in } B_{r_3}, \quad u_{2, \delta} = u_{1, \delta} \mbox{ on } \partial B_{r_3}, \quad \mbox{ and } \quad  \partial_r u_{2, \delta} = (1 + i \delta) \partial_r u_{1, \delta} |_{int} \mbox{ on } \partial B_{r_3}. 
\end{equation}

Define
\begin{equation}\label{def-hu-lens}
\hu_\delta = \left\{\begin{array}{cl} u_\delta & \mbox{ in } \Omega \setminus B_{r_3}, \\[6pt]
u_\delta - (u_{1, \delta} - u_{2, \delta}) & \mbox{ in } B_{r_3} \setminus B_{r_2}, \\[6pt]
u_{2, \delta} & \mbox{ in } B_{r_2}. 
\end{array}\right. 
\end{equation}
One can check that 
\begin{equation*}
\dive\big(\hA \nabla (\hu_\delta - \hu_0)  \big) = 0 \mbox{ in } \Omega \setminus (\partial B_{r_2} \cup \partial B_{r_3}). 
\end{equation*}
Moreover, by \eqref{SCM-t1} and \eqref{SCM-t2}, one has 
\begin{equation*}
[\hu_\delta - \hu_0] = u_{\delta} - u_{1, \delta} = 0,  \quad [\hat A \nabla (u_\delta - \hu_0) \cdot e_r] = \hA \nabla (u_{\delta}|_{ext} - u_{1, \delta}) \cdot e_r  = i \delta \partial_r u_{1, \delta} |_{int}  \quad \mbox{ on } \partial B_{r_2},
\end{equation*}
and
\begin{equation*}
[\hu_\delta - \hu_0] = u_{1, \delta} - u_{2, \delta} = 0,  \quad [\hat A \nabla (u_\delta -  \hu_0) \cdot e_r] = \partial_r (u_{1, \delta}|_{int} - u_{2, \delta})   = - i \delta \partial_r u_{1, \delta} |_{int}  \quad \mbox{ on } \partial B_{r_3}. 
\end{equation*}
From Lemma~\ref{lem-stability-S} below, it follows that
\begin{equation*}
\| \hu_\delta - \hu_0 \|_{H^1(\Omega \setminus (\partial B_{r_2} \cup \partial B_{r_3}))} \le C \Big( \|\delta \partial_r u_{1, \delta} |_{int} \|_{H^{-1/2}(\partial B_{r_3})} +  \|\delta \partial_r u_{1, \delta} |_{ext} \|_{H^{-1/2}(\partial B_{r_2})} \Big) \le C \delta \| f\|_{L^2(\Omega)}. 
\end{equation*}
In the last inequality, we use \eqref{SCM-key-point-1}. Since $\hu_\delta = u_\delta$ in $\Omega \setminus \bar B_{r_{3}}$, assertion~\eqref{SCM-key-point-2} follows. 

\medskip 
The proof is complete. 
\end{proof}

\begin{remark} \rm Assertion \eqref{SCM-key-point-3}  in  a more general setting,  the setting of complementary media,  is given in \cite{Ng-Complementary}. In \cite{Ng-Complementary}, $s_\delta$ is defined by  $-1 + i \delta$ in $B_{r_2} \setminus B_{r_1}$; nevertheless, this point is not essential. The proof of \eqref{SCM-key-point-1} also has its roots from \cite{Ng-Complementary}. 
The idea is to use reflections to derive Cauchy's problems from the original equation with sign changing coefficients and then use the unique continuation principle, see, e.g., \cite{Protter60}. This can be applied  for a general structure via the change of variables rule, see Lemma~\ref{lem-TO} below. Assertion   \eqref{SCM-key-point-2} is new in comparison with \cite{Ng-Complementary} whose method only yields  $\delta^{1/2}$ instead of $\delta$ as the rate of the convergence. The key ingredient in the proof is the introduction of the auxiliary function $\hu_\delta$. This auxiliary function  was introduced in the technique of  removing localized singularity by the author to handle the localized resonance associated with NIMs in  cloaking and superlensing applications, see our previous work \cite{Ng-Negative-Cloaking, Ng-Superlensing},  Section~\ref{sect-CCM}, and Remark~\ref{rem-r1-r2}). Interestingly, it is also useful  even in stable cases for improving the convergence rate. The motivation of \eqref{SCM-key-point-2} comes from simulations obtained in the master project of Droxler at EPFL under the supervision of Hesthaven and the author. 
\end{remark}

\begin{remark} \label{rem-r1-r2} \rm The choice of $r_1$ and $r_2$ in \eqref{choice-r12} is not strict for ensuring \eqref{SCM-key-point-3}. In previous work \cite{Ng-Superlensing}, we showed that it is possible to choose 
\begin{equation*}
r_1 = m^{1/4} r_0 \quad \mbox{ and } \quad r_2 = m^{1/2} r_1. 
\end{equation*}
In fact, the approach in \cite{Ng-Superlensing} also works for the choice 
\begin{equation}\label{SCM-tt3}
r_1 \ge m^{1/4} r_0 \quad \mbox{ and } \quad  r_2 = m^{1/2} r_1. 
\end{equation} 
Instead of introducing $\hu$ as in \eqref{def-hu-lens}, we define $\hu_\delta$ as follows 
\begin{equation}\label{SCM-hu-delta}
\hu_\delta = \left\{\begin{array}{cl} u_\delta & \mbox{ in } \Omega \setminus B_{r_3}, \\[6pt]
u_\delta - (u_{1, \delta} - u_{2, \delta}) & \mbox{ in } B_{r_3} \setminus B_{m r_0}, \\[6pt]
u_{2, \delta} & \mbox{ in } B_{m r_0}. 
\end{array}\right. 
\end{equation}
Recall that, if $v \in H^1(B_{R_3} \setminus B_{R_1})$ satisfies $\Delta v = 0$ in $B_{R_3} \setminus B_{R_1}$ for $0 < R_1 < R_2 < R_3$, then  
\begin{multline*}
\| v\|_{H^{1/2}(\partial B_{R_2})} + \| \partial_r v\|_{H^{-1/2}(\partial B_{R_2})} \\[6pt] \le C \left(\| v\|_{H^{1/2}(\partial B_{R_1})} +  \| \partial_r v\|_{H^{-1/2}(\partial B_{R_1})} \right)^{\alpha} \left(\| v\|_{H^{1/2}(\partial B_{R_3})} +  \| \partial_r v\|_{H^{-1/2}(\partial B_{R_3})} \right)^{1 - \alpha},\end{multline*}
with $ \alpha = \ln (R_3/ R_2)/ \ln (R_3/ R_1)$ \footnote{This inequality can be obtained from the following representation of $v$ in $B_{R_3} \setminus B_{R_1}$: 
$$
v(r, \theta) = a_0 + b_0 \ln r + \sum_{n = 1}^\infty \sum_{\pm} ( a_{n, \pm} r^n + b_{n, \pm} r^{-n} ) e^{\pm i n \theta}  \mbox{ in } B_{R_3} \setminus B_{R_1}. $$
See also \cite[Lemma 6]{Ng-Cloaking-Maxwell}.}.  Using this inequality, one can prove that 
$$
\| u_{1, \delta} - u_\delta \|_{H^{1/2}(\partial B_{m r_0})} + \| \partial_r (u_{1, \delta} - u_\delta) \|_{H^{-1/2}(\partial B_{m r_0})} \le C \delta^{\alpha} \| v \|_{H^1(\Omega \setminus B_{r_3})}, 
$$
with $\alpha = \ln (r_3/r_2) / \ln (r_3/ r_1)$ which is greater than or equal to $1/ 2$ by \eqref{SCM-tt3} and the fact that $r_3 = r_2^2/r_1$. Applying the approach used in the proof of \eqref{SCM-key-point-1}, one can reach \eqref{SCM-key-point-3} in the case in which $\alpha > 1/2$, which is equivalent to $r_1 > m^{1/4} r_0$. The case in which $\alpha = 1/2$, corresponding to the choice $r_1 = m^{1/4} r_0$, requires further arguments; in this case, the convergence in \eqref{SCM-key-point-3} is replaced by the weak convergence.  The interested reader can find the details in \cite[the proof of (2.36)]{Ng-Superlensing}. 
\end{remark}

In the proof of Theorem~\ref{SCM-thm1}, we used the following stability result on $u_\delta$. 

\begin{lemma}\label{lem-stability} Let $d \ge 2$, $\delta_0 > 0$, $0< r_1 < r_2$,   $\Omega$ be a smooth, open subset of $\R^d$ with $B_{r_2} \subset \subset \Omega$, let $A$ be a uniformly elliptic, matrix-valued function defined in $\Omega$, and let 
$g \in H^{-1}(\Omega)$ \footnote{$H^{-1}(\Omega)$ denotes the dual space of $H^1_0(\Omega)$.}. Set $A_\delta = s_\delta A$, where $s_\delta$ is defined in  
\eqref{SCM-defA-ss}. For $0 < \delta < \delta_0$, there exists a unique solution $v_\delta \in H^1_0(\Omega)$ of
\begin{equation*}
\dive (A_\delta \nabla v_\delta) = g  \mbox{ in } \Omega. 
\end{equation*}
Moreover,
\begin{equation}\label{stability-1}
\| v_\delta \|_{H^1(\Omega)}^2 \le  \frac{C}{\delta} \left| \int_\Omega g \bar v_\delta \right|
\end{equation}
and 
\begin{equation}\label{stability-1-2}
\| v_\delta \|_{H^1(\Omega)}^2 \le  \frac{C}{\delta} \left| \Im \int_\Omega g \bar v_\delta \right| + C  \| g\|_{L^2(\Omega)}^2. 
\end{equation}
%In particular, 
%\begin{equation}\label{stability-2}
%\| v_\delta \|_{H^1(B_R)} \le  \frac{C_R}{\delta}  \| g\|_{H^{-1}}. 
%\end{equation}
Here  $C$ denotes a positive constant  independent of $g$ and $\delta$.
% \footnote{See Remark~\ref{rem-stability} for comments on \eqref{stability}}.
\end{lemma}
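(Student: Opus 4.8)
The plan is to prove Lemma~\ref{lem-stability} by the standard Lax--Milgram/Fredholm machinery adapted to the sign-changing coefficient, exploiting that the sign change is confined to the \emph{bounded} shell $B_{r_2}\setminus B_{r_1}$ where $A_\delta = -(1+i\delta)A$, while $A_\delta = A$ is positive elsewhere. First I would set up the sesquilinear form $a_\delta(u,v) = \int_\Omega A_\delta \nabla u \cdot \nabla \bar v$ and observe that taking the imaginary part of $a_\delta(v_\delta,v_\delta) = -\int_\Omega g \bar v_\delta$ gives
\begin{equation*}
\delta \int_{B_{r_2}\setminus B_{r_1}} A \nabla v_\delta \cdot \nabla \bar v_\delta = \Im \int_\Omega g \bar v_\delta,
\end{equation*}
which by uniform ellipticity of $A$ yields the bound $\delta \|\nabla v_\delta\|_{L^2(B_{r_2}\setminus B_{r_1})}^2 \le C |\Im \int_\Omega g\bar v_\delta| \le C|\int_\Omega g \bar v_\delta|$. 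This controls $v_\delta$ only in the shell; the work is to propagate this into an $H^1(\Omega)$ estimate.

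The key step I expect to be the main obstacle is the \emph{propagation of the shell estimate to the whole domain}. The way I would do this is a Caccioppoli-type / three-spheres argument: since $\Delta v_\delta = A^{-1}$-divergence-free data makes $v_\delta$ harmonic-like in $B_{r_1}$ and in the annulus, and since we control the Cauchy data $(v_\delta, \partial_r v_\delta)$ on $\partial B_{r_1}$ and $\partial B_{r_2}$ through the shell estimate (using a trace inequality combined with interior elliptic regularity for the equation $\dive(A\nabla v_\delta)=g$ in neighborhoods of these spheres, where $A_\delta$ has a definite sign), one can solve the Cauchy problem in $B_{r_1}$ and bound $\|v_\delta\|_{H^1(B_{r_1})}$ in terms of the shell norm plus $\|g\|$. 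Alternatively, and more in the spirit of the remark following the theorem, one can use the reflection technique to reduce to a Cauchy problem and invoke unique continuation quantitatively. Combined with a standard positive-coefficient energy estimate on $\Omega \setminus B_{r_1}$ (where $A_\delta$ is positive-definite, so Lax--Milgram applies with $g$ and the controlled trace data on $\partial B_{r_1}$ as source), this closes the loop: $\|v_\delta\|_{H^1(\Omega)}^2 \le C\,\delta^{-1}|\int_\Omega g \bar v_\delta| + \text{(lower order)}$.

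For the two distinct bounds \eqref{stability-1} and \eqref{stability-1-2}: \eqref{stability-1} follows directly from the above with the crude bound $|\Im\int g\bar v_\delta| \le |\int g \bar v_\delta|$ and absorbing the lower-order term (if the $\|g\|_{L^2}^2$ term appears, it is dominated by $\delta^{-1}|\int g\bar v_\delta|$ after using $\delta<\delta_0$ and a contradiction/compactness argument, or one simply tolerates it since $\delta^{-1}|\int g \bar v_\delta| \gtrsim \|v_\delta\|^2 \gtrsim$ the needed scale). For \eqref{stability-1-2}, one keeps $\Im\int g \bar v_\delta$ and pays an explicit $C\|g\|_{L^2(\Omega)}^2$ for the part of the estimate coming from the positive-coefficient region where only the real part of the data is available --- this is the sharper inequality needed to extract the $\delta^{1/2}$ rate in \eqref{SCM-key-point-1}.

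Finally, for existence and uniqueness of $v_\delta$: uniqueness is immediate from the a priori estimate (set $g=0$), and existence follows by Fredholm alternative, writing $a_\delta = a_+ + $ (finite-rank-type perturbation from the sign-changing shell), noting that the shell contribution is a compact perturbation of a coercive form after a suitable integration-by-parts / Dirichlet-to-Neumann rewriting on $\partial B_{r_1}$ and $\partial B_{r_2}$; then the a priori estimate rules out a nontrivial kernel, giving solvability. The main subtlety I anticipate is making the propagation argument \emph{quantitative and uniform in $\delta$} --- the three-spheres constant and the Cauchy-problem stability constant must not degenerate as $\delta \to 0$, which is exactly why the geometric configuration ($r_1 < r_2$ with $B_{r_2}\subset\subset\Omega$, independent of $\delta$) is imposed in the hypotheses.
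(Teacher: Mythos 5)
The main gap is in your route to \eqref{stability-1}. The paper needs no propagation machinery there at all: multiplying the equation by $\bar v_\delta$ and integrating by parts gives the single identity
$\int_{\Omega\setminus(B_{r_2}\setminus B_{r_1})} A\nabla v_\delta\cdot\nabla\bar v_\delta-(1+i\delta)\int_{B_{r_2}\setminus B_{r_1}}A\nabla v_\delta\cdot\nabla\bar v_\delta=-\int_\Omega g\bar v_\delta$;
its imaginary part bounds the shell gradient by $\delta^{-1}\big|\Im\int_\Omega g\bar v_\delta\big|$, and its \emph{real} part then bounds the gradient outside the shell by the shell gradient plus $\big|\Re\int_\Omega g\bar v_\delta\big|$, so $\|\nabla v_\delta\|_{L^2(\Omega)}^2\le C\delta^{-1}\big|\int_\Omega g\bar v_\delta\big|$ and the Poincar\'e inequality finishes. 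Your plan instead propagates the shell bound via Cauchy problems, three-spheres inequalities and quantitative unique continuation, which produces a remainder $C\|g\|_{L^2(\Omega)}^2$ that you then propose to absorb into $\delta^{-1}\big|\int_\Omega g\bar v_\delta\big|$. That absorption is false in general: take $g$ highly oscillatory with $\|g\|_{L^2(\Omega)}=1$ but small $H^{-1}(\Omega)$ norm; then $v_\delta$, and hence $\int_\Omega g\bar v_\delta$, is small while $\|g\|_{L^2}^2=1$. Moreover \eqref{stability-1} is asserted for $g$ merely in $H^{-1}(\Omega)$, where $\|g\|_{L^2}$ need not even be finite. So as written your argument does not give \eqref{stability-1}.

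For \eqref{stability-1-2} your overall shape (keep only $\Im\int_\Omega g\bar v_\delta$ for the shell, pay $C\|g\|_{L^2(\Omega)}^2$ elsewhere) matches the paper, but the concrete missing step is the control of $\|v_\delta\|_{L^2(B_{r_2}\setminus B_{r_1})}$: the imaginary part only bounds the \emph{gradient} in the shell, and to obtain trace data on $\partial B_{r_1}\cup\partial B_{r_2}$ you need the full $H^1$ norm there. The paper supplies this via the Poincar\'e-type claim \eqref{claim-sta-1}, $\|v_\delta\|_{L^2(B_{r_2}\setminus B_{r_1})}\le C\big(\|\nabla v_\delta\|_{L^2(B_{r_2}\setminus B_{r_1})}+\|g\|_{L^2(\Omega)}\big)$, proved by a compactness/contradiction argument; this is precisely where the $\|g\|_{L^2}^2$ term enters. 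Once the traces are controlled, no Cauchy problem or unique continuation is needed: the remaining regions are $B_{r_1}$ and $\Omega\setminus B_{r_2}$ (not $\Omega\setminus B_{r_1}$, which still contains the sign-changing shell), where the coefficient is $A$ and one solves well-posed Dirichlet problems by standard elliptic estimates; three-spheres tools are used elsewhere in the paper (e.g.\ in the proof of Theorem~\ref{CCM-thm1}), not in this lemma, and would typically degrade the bound to a H\"older power. Finally, existence and uniqueness for fixed $\delta>0$ do not require your Fredholm/compact-perturbation setup: writing $P=\int_{\Omega\setminus(B_{r_2}\setminus B_{r_1})}A\nabla v\cdot\nabla\bar v$ and $Q=\int_{B_{r_2}\setminus B_{r_1}}A\nabla v\cdot\nabla\bar v$, one checks $|a_\delta(v,v)|=|(P-Q)-i\delta Q|\ge c\,\delta\,\|\nabla v\|_{L^2(\Omega)}^2$, so the inf-sup (Lax--Milgram type) argument applies directly, with uniqueness also following from \eqref{stability-1}.
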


Here and in what follows, for a complex number $z$, we denote $\Im z$ and $\Re z$ as the imaginary part and the real part of $z$, respectively. 

\begin{remark} \rm Various variants of Lemma~\ref{lem-stability} are used in the study of NIMs, see,  e.g., \cite{Ng-Complementary, Ng-CALR-object}. In inequality~\eqref{stability-1-2}, one only considers the imaginary part of $\int_\Omega g \bar v_\delta$. This is useful for improvements on the convergent rate of cloaking effects considered later in Sections~\ref{sect-CCM} and \ref{sect-CALR-object}. 
Nevertheless, the proof presented below is quite standard and in the same spirit. 
\end{remark}

\begin{proof} Multiplying the equation of $v_\delta$ by $\bar v_\delta$ (the conjugate of $v_\delta$), integrating by parts, and considering the imaginary part and the real part of the obtained expression, one has 
$$
\| \nabla v_\delta \|_{L^2(\Omega)}^2 \le  \frac{C}{\delta} \left| \int_\Omega g \bar v_\delta \right|.
$$
This implies \eqref{stability-1} by the Poincar\'e inequality. 

To obtain \eqref{stability-1-2}, we proceed as follows. Multiplying the equation of $v_\delta$ by $\bar v_\delta$, considering the imaginary part, one has 
\begin{equation}\label{stability-1-p1}
\| \nabla v_\delta \|_{L^2(B_{r_2} \setminus B_{r_1})}^2 \le  \frac{C}{\delta} \left| \Im \int_\Omega g \bar v_\delta \right|.
\end{equation}
We claim that 
\begin{equation}\label{claim-sta-1}
\| v_\delta\|_{L^2(B_{r_2} \setminus B_{r_1})} \le C \big( \| \nabla v_\delta \|_{L^2(B_{r_2} \setminus B_{r_1})} + \| g\|_{L^2(\Omega)} \big). 
\end{equation}
Assuming this, we obtain 
$$
\| v_\delta\|_{H^1(B_{r_2} \setminus B_{r_1})} \le C \big( \| \nabla v_\delta \|_{L^2(B_{r_2} \setminus B_{r_1})} + \| g\|_{L^2(\Omega)} \big). $$
This implies, by the trace theory,  
$$
\| v_\delta \|_{H^{1/2}(\partial B_{r_2} \cup \partial B_{r_1})} \le C \big( \| \nabla v_\delta \|_{L^2(B_{r_2} \setminus B_{r_1})} + \| g\|_{L^2(\Omega)} \big). 
$$
Using the equation of $v_\delta$ in $\Omega \setminus B_{r_3} $ and in $B_{r_1}$, we derive from the standard theory of elliptic equations that 
$$
\| v_\delta \|_{H^{1}((\Omega \setminus B_{r_2})) \cup B_{r_1})} \le C \big( \| \nabla v_\delta \|_{L^2(B_{r_2} \setminus B_{r_1})} + \| g\|_{L^2(\Omega)} \big), 
$$
and the conclusion follows from \eqref{stability-1-p1}. 

It remains to prove \eqref{claim-sta-1}, which we establish  by contradiction. Suppose that there exist a sequence $\delta_n \to 0$ (by \eqref{stability-1}) and a sequence $g_n \to 0$ in $L^2(\Omega)$  such that 
\begin{equation}\label{claim-sta-2}
1 = \| v_{\delta_n}\|_{L^2(B_{r_2} \setminus B_{r_1})} \ge n  \big( \| \nabla v_{\delta_n} \|_{L^2(B_{r_2} \setminus B_{r_1})} + \| g_n\|_{L^2(\Omega)} \big), 
\end{equation}
where $v_{\delta_n}$ is the solution corresponding to $\delta_n$ and $g_n$. 
By the trace theory, one has
$$
\| v_{\delta_n} \|_{H^{1/2}(\partial B_{r_2} \cup \partial B_{r_1})} \le C
$$
for some positive constant $C$ independent of $n$. This in turn implies that 
$$
\| v_{\delta_n} \|_{H^{1}(\Omega)} \le C. 
$$
Without loss of generality, one can assume that $v_{\delta_n}$ converges to $v_0 \in H^1_0(\Omega)$ weakly  in $H^1(\Omega)$ and strongly  in  $L^2(B_{r_2} \setminus B_{r_1})$. Moreover, 
$$
\dive(A_0 \nabla v_0) = 0 \mbox{ in } \Omega \quad \mbox{ and } \quad  v_0 \mbox{ is constant in } B_{r_2} \setminus B_{r_1}. 
$$
Since, by multiplying the equation of $v_0$ with $\bar v_0$ and integrating by parts,  
$$
\int_{\Omega} A_0 \nabla v_0 \cdot \nabla v_0 = 0,  
$$
and $v_0$ is constant in $B_{r_2} \setminus B_{r_1}$, it follows that 
$$
\int_{\Omega} |\nabla v_0|^2 = 0. 
$$
We derive that  $v_0 = 0$ in $\Omega$ since $v_0 \in H^1_0(\Omega)$. This contradicts the fact that $\int_{B_{r_2} \setminus B_{r_1}} |v_0|^2 = \lim_{n \to + \infty} \int_{B_{r_2} \setminus B_{r_1}} |v_{\delta_n}|^2 = 1$. 
\end{proof}

The following lemma is standard and was used in the proof of Theorem~\ref{SCM-thm1}. 

\begin{lemma} \label{lem-stability-S}  Let $d =2, 3$,  $\Omega$ be a smooth, open subset of $\R^d$, and let $A$ be a symmetric, uniformly elliptic,  matrix-valued function defined in $\Omega$, and let $f \in L^2(\Omega)$. 
Let $D \subset \subset  \Omega$ be a smooth, bounded, open subset of $\R^d$, let $g \in H^{1/2}(\partial D)$, and $h \in H^{-1/2}(\partial D)$. Assume that  $v \in H^1(\Omega \setminus \partial D)$ satisfies  
\begin{equation*}
\left\{\begin{array}{cl}
\dive (A \nabla v) = f & \mbox{ in } \Omega \setminus \partial D, \\[6pt]
[v] = g \mbox{ and } [A \nabla v \cdot \nu] = h & \mbox{ on } \partial D, \\[6pt]
v = 0 & \mbox{ on } \partial \Omega. 
\end{array}\right. 
\end{equation*}
Then 
\begin{equation*}
\| v\|_{H^1(\Omega \setminus \partial D)} \le C \left( \| f \|_{L^2(\Omega)} + \| g\|_{H^{1/2}(\partial D)} + \| h \|_{H^{-1/2}(\partial D)} \right), 
\end{equation*}
for some positive constant $C$ depending only on $D$, $\Omega$,  and the ellipticity of $A$. 
\end{lemma}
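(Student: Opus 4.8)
The plan is to reduce the transmission problem to a standard coercive problem on $H^1_0(\Omega)$ by lifting the Dirichlet jump $g$ and treating the flux jump $h$ as a distributional source supported on $\partial D$. First I would observe that the formulation is meaningful: since $f\in L^2(\Omega)$, the equation $\dive(A\nabla v)=f$ holds in $L^2$ on each side of $\partial D$, so the one-sided conormal derivatives $(A\nabla v\cdot\nu)|_{int}$ and $(A\nabla v\cdot\nu)|_{ext}$ lie in $H^{-1/2}(\partial D)$ and the condition $[A\nabla v\cdot\nu]=h$ makes sense. Next, I would choose any extension $\tilde g\in H^1(D)$ with $\tilde g|_{\partial D}=g$ in the trace sense and $\|\tilde g\|_{H^1(D)}\le C\|g\|_{H^{1/2}(\partial D)}$, and set $w:=-\tilde g$ on $D$ and $w:=0$ on $\Omega\setminus\overline D$. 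Then $w\in H^1(\Omega\setminus\partial D)$, $[w]=g$ on $\partial D$, $w$ vanishes in a neighborhood of $\partial\Omega$ (because $D\subset\subset\Omega$), and $\|w\|_{H^1(\Omega\setminus\partial D)}\le C\|g\|_{H^{1/2}(\partial D)}$.

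Put $V:=v-w$. Since $[V]=0$ on $\partial D$, the two pieces of $V$ glue into a function of $H^1(\Omega)$, and since $V=0$ on $\partial\Omega$ in fact $V\in H^1_0(\Omega)$. For any $\varphi\in H^1_0(\Omega)$, integrating $\dive(A\nabla v)=f$ against $\varphi$ by parts separately over $D$ and over $\Omega\setminus\overline D$ — the $\partial\Omega$ boundary terms vanish, and on $\partial D$ the continuity of $\varphi$ makes the two interface terms combine into $-\langle[A\nabla v\cdot\nu],\varphi\rangle=-\langle h,\varphi\rangle_{H^{-1/2}(\partial D),H^{1/2}(\partial D)}$ — and using that $w=-\tilde g$ on $D$ and $w=0$ elsewhere, one obtains
\begin{equation*}
\int_\Omega A\nabla V\cdot\nabla\varphi \;=\; -\int_\Omega f\varphi \;-\;\langle h,\varphi\rangle_{\partial D}\;+\;\int_D A\nabla\tilde g\cdot\nabla\varphi .
\end{equation*}
By Cauchy--Schwarz, by the $H^{-1/2}$--$H^{1/2}$ duality together with the trace bound $\|\varphi|_{\partial D}\|_{H^{1/2}(\partial D)}\le C\|\varphi\|_{H^1(\Omega)}$ (valid because $D\subset\subset\Omega$), and by the extension estimate for $\tilde g$, the right-hand side is a bounded linear functional of $\varphi$:
\begin{equation*}
\Big|\int_\Omega A\nabla V\cdot\nabla\varphi\Big|\;\le\; C\big(\|f\|_{L^2(\Omega)}+\|h\|_{H^{-1/2}(\partial D)}+\|g\|_{H^{1/2}(\partial D)}\big)\,\|\varphi\|_{H^1(\Omega)} .
\end{equation*}

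Finally I would take $\varphi=\overline V$ (or $\varphi = V$ in the real-valued case): uniform ellipticity of $A$ gives $\Lambda^{-1}\|\nabla V\|_{L^2(\Omega)}^2\le\Re\int_\Omega A\nabla V\cdot\nabla\overline V$, and the Poincar\'e inequality $\|V\|_{H^1(\Omega)}\le C\|\nabla V\|_{L^2(\Omega)}$ applies since $V\in H^1_0(\Omega)$; combining with the bound above yields $\|V\|_{H^1(\Omega)}\le C(\|f\|_{L^2(\Omega)}+\|h\|_{H^{-1/2}(\partial D)}+\|g\|_{H^{1/2}(\partial D)})$. Since $v=V+w$ and $V\in H^1(\Omega)$, the triangle inequality gives the assertion. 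I do not expect a genuine obstacle here; the step requiring the most care is the derivation of the variational identity — namely getting the signs and the $H^{-1/2}$--$H^{1/2}$ pairing right when integrating by parts across $\partial D$ — together with the observation that nothing beyond $f\in L^2$ is needed for the one-sided conormal derivatives, hence for $h$ and for the whole reduction, to be well defined. Symmetry of $A$ is not actually used; alternatively one may simply quote the well-posedness of the elliptic transmission problem and identify $v$ with its unique solution.
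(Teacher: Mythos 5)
Your argument is correct: lifting the Dirichlet jump $g$ by an extension supported in $D$, gluing $V=v-w$ into $H^1_0(\Omega)$, deriving the variational identity with the flux jump $h$ appearing through the $H^{-1/2}$--$H^{1/2}$ pairing on $\partial D$, and concluding by ellipticity, Poincar\'e, and taking $\varphi=\overline V$ is exactly the standard route. The paper does not supply a proof of this lemma (it is simply labelled standard), so there is nothing to compare against beyond noting that your derivation, including the signs coming from the convention $[u]=u|_{ext}-u|_{int}$ and the observation that $f\in L^2$ suffices for the one-sided conormal traces to lie in $H^{-1/2}(\partial D)$, is the expected argument and is complete.
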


The approach used in the proof of Theorem~\ref{SCM-thm1} can be extended to the finite frequency regime as well as higher dimensions. The additional tool  is the following  change of variables rule, see, e.g., \cite[Lemma 2]{Ng-Complementary}.  

\begin{lemma}\label{lem-TO} Let $d \ge 2$,  $D_1 \subset \subset D_2 \subset \subset D_3$  be three smooth, bounded, open subsets of $\R^d$. Let $a \in [L^\infty(D_2 \setminus D_1)]^{d \times d}$,  $\sigma \in L^\infty(D_2 \setminus D_1)$, and let  ${\mathcal T}$ be a bijective  from $D_2 \setminus \bar D_1$ onto $D_3 \setminus \bar D_2$ such that ${\mathcal T} \in C^1(\bar{D_2} \setminus D_1)$ and ${\mathcal T}^{-1} \in C^1(\bar D_3 \setminus D_2)$. Assume that   $u \in H^1(D_2 \setminus D_1)$ and set $v = u \circ {\mathcal T}^{-1}$. Then
\begin{equation*}
\dive (a \nabla u) +  \sigma u = f \mbox{ in } D_2 \setminus D_1, 
\end{equation*}
for some $f \in L^2(D_2 \setminus D_1)$,  if and only if
\begin{equation}\label{TO-eq}
\dive ({\mathcal T}_*a \nabla v) + {\mathcal T}_*\sigma v = {\mathcal T}_* f \mbox{ in } D_3 \setminus D_2. 
\end{equation}
Assume in addition that ${\mathcal T}(x) = x$ on $\partial D_2$. Then  
\begin{equation}\label{TO-bdry}
v = u \quad \mbox{ and } \quad {\mathcal T}_*a \nabla v \cdot \nu = - a \nabla u \cdot \nu  \mbox{ on } \partial D_2.
\end{equation}
Here 
\begin{equation}\label{def-F*}
{\mathcal T}_*a(y) = \frac{D{\mathcal T}(x) a(x) \nabla {\mathcal T}(x)^T}{|\det \nabla {\mathcal T}(x)|} \quad \mbox{ and } \quad {\mathcal T}_*\sigma(y) = \frac{ \sigma(x) }{|\det \nabla {\mathcal T}(x)|}, \quad  \mbox{ where } x = {\mathcal T}^{-1}(y). 
\end{equation}
\end{lemma}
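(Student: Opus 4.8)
The plan is to reduce both assertions to the weak (variational) formulation of the equations together with the change-of-variables formula for integrals under a bi-Lipschitz map. First I would record the regularity facts that make the statement meaningful: since ${\mathcal T}\in C^1(\bar{D_2}\setminus D_1)$ and ${\mathcal T}^{-1}\in C^1(\bar D_3\setminus D_2)$, the map ${\mathcal T}$ is bi-Lipschitz from $\bar D_2\setminus D_1$ onto $\bar D_3\setminus D_2$, so $|\det\nabla{\mathcal T}|$ is bounded above and below by positive constants; hence $v=u\circ{\mathcal T}^{-1}\in H^1(D_3\setminus D_2)$ with the a.e.\ chain rule $\nabla u(x)=\nabla{\mathcal T}(x)^T\,\nabla v({\mathcal T}(x))$, and ${\mathcal T}_*a\in[L^\infty(D_3\setminus D_2)]^{d\times d}$, ${\mathcal T}_*\sigma\in L^\infty(D_3\setminus D_2)$, ${\mathcal T}_*f\in L^2(D_3\setminus D_2)$, all by the substitution $y={\mathcal T}(x)$, $dy=|\det\nabla{\mathcal T}(x)|\,dx$.

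For the equivalence \eqref{TO-eq}, I would write the weak form of $\dive(a\nabla u)+\sigma u=f$, namely $-\int a\nabla u\cdot\nabla\varphi+\int\sigma u\varphi=\int f\varphi$ for every $\varphi\in H^1_0(D_2\setminus D_1)$, and substitute $y={\mathcal T}(x)$ with $\varphi=\psi\circ{\mathcal T}$, $\psi\in H^1_0(D_3\setminus D_2)$. The chain rule gives $\nabla\varphi(x)=\nabla{\mathcal T}(x)^T\nabla\psi(y)$, hence $a(x)\nabla u(x)\cdot\nabla\varphi(x)=\big(\nabla{\mathcal T}(x)\,a(x)\,\nabla{\mathcal T}(x)^T\,\nabla v(y)\big)\cdot\nabla\psi(y)$, and dividing by $|\det\nabla{\mathcal T}(x)|$ as $dx$ becomes $dy$ produces exactly the matrix ${\mathcal T}_*a$ of \eqref{def-F*}; the zeroth-order and source terms produce ${\mathcal T}_*\sigma$ and ${\mathcal T}_*f$ in the same way. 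Because $\varphi\mapsto\varphi\circ{\mathcal T}^{-1}$ is a bijection of $H^1_0(D_2\setminus D_1)$ onto $H^1_0(D_3\setminus D_2)$, the transformed identity holding for all $\psi$ is equivalent to the original, which is \eqref{TO-eq}.

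Finally, under the extra hypothesis ${\mathcal T}=\mathrm{id}$ on $\partial D_2$, the trace identity $v=u$ on $\partial D_2$ is immediate since ${\mathcal T}^{-1}$ is the identity there. For the conormal identity I would invoke Green's formula on each annulus: for $\varphi\in C^1(\bar D_2\setminus D_1)$ vanishing near $\partial D_1$ one has $\langle a\nabla u\cdot\nu,\varphi\rangle_{\partial D_2}=\int_{D_2\setminus D_1}\big(a\nabla u\cdot\nabla\varphi+(f-\sigma u)\varphi\big)$ (the conormal trace being well defined since $\dive(a\nabla u)=f-\sigma u\in L^2$), and likewise on $D_3\setminus D_2$ for $\psi=\varphi\circ{\mathcal T}^{-1}$ with $\nu'$ the outward normal of $D_3\setminus D_2$ along $\partial D_2$; the change of variables from the previous paragraph identifies the two right-hand sides, so $\langle {\mathcal T}_*a\nabla v\cdot\nu',\psi\rangle_{\partial D_2}=\langle a\nabla u\cdot\nu,\varphi\rangle_{\partial D_2}$, and since $\psi=\varphi$ on $\partial D_2$ while $\nu'=-\nu$, this yields ${\mathcal T}_*a\nabla v\cdot\nu=-a\nabla u\cdot\nu$ in $H^{-1/2}(\partial D_2)$, i.e.\ \eqref{TO-bdry}. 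The main obstacle is bookkeeping rather than conceptual: one must (i) justify the chain rule and the change of variables in the $H^1$/$L^2$ setting using only the stated $C^1$ regularity of ${\mathcal T}$ and ${\mathcal T}^{-1}$, and (ii) keep the orientation straight — the sign in \eqref{TO-bdry} comes precisely from the fact that ${\mathcal T}$ reverses the normal direction across $\partial D_2$, so that $\partial D_2$ is the \emph{outer} boundary of $D_2\setminus D_1$ but the \emph{inner} boundary of $D_3\setminus D_2$, i.e.\ $\nu'=-\nu$; this is where a sign error is most likely to slip in.
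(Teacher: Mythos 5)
Your proposal is correct: the paper itself states this lemma without proof (it is quoted as a standard change-of-variables rule, citing Lemma 2 of \cite{Ng-Complementary}), and your argument—passing to the weak formulation, transplanting test functions via $\varphi=\psi\circ{\mathcal T}$ with the chain rule and Jacobian factor to produce ${\mathcal T}_*a$, ${\mathcal T}_*\sigma$, ${\mathcal T}_*f$, and then reading off the conormal trace from Green's formula on the two annuli—is exactly the standard route to it. Your bookkeeping of the sign in \eqref{TO-bdry} (the outward normal of $D_3\setminus D_2$ along $\partial D_2$ is $-\nu$, while ${\mathcal T}=\mathrm{id}$ on $\partial D_2$ identifies the traces of the test functions) is the right explanation and is consistent with how the lemma is used in the paper.
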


Let $a$  be a symmetric, uniformly elliptic, matrix-valued function  and $\sigma$ be a bounded complex function both defined in $B_{r_0}$ such that $\Re \sigma > c > 0$ and $\Im \Sigma \ge 0$ in $B_{r_0}$ for some $c > 0$. Assuming \eqref{choice-r12},  
we have the following result which is a variant of Theorem~\ref{SCM-thm1} in the finite frequency regime in both two and three dimensions. 

\begin{theorem}\label{SCM-thm2}
Let  $d=2, \, 3$, $0 < \delta < 1$, $k> 0$,  $R_0 > r_3$,   $f \in L^{2}(\R^d)$,  and set $r_3 = r_2^2/ r_1$. Assume that 
$\supp f \subset B_{R_0}\setminus B_{r_{3}}$,   and let $u_\delta$ be the unique outgoing solution of the equation 
$$
\dive(A_\delta \nabla u_\delta ) + k^2 \Sigma_\delta = f \mbox{ in } \R^d, 
$$ 
where $(A_\delta, \Sigma_\delta) = (s_\delta A, s_\delta \Sigma)$ and 
\begin{equation}\label{SCM-defA-ss-2}
A, \Sigma  = \left\{ \begin{array}{cl} a, \sigma & \mbox{ in } B_{r_0},\\[6pt]
F^{-1}_*I, F^{-1} _*1 &  \mbox{ in } B_{r_2} \setminus B_{r_1}, \\[6pt]
I , 1 & \mbox{ otherwise}, 
\end{array} \right. 
\quad  \mbox{ and } \quad 
s_\delta = \left\{ \begin{array}{cl} -1 - i \delta  & \mbox{ in } B_{r_2} \setminus B_{r_1}, \\[6pt]
1 & \mbox{ otherwise}.
\end{array} \right. 
\end{equation}
We have 
\begin{equation}\label{SCM-key-point-1-thm2}
\| u_{\delta} -  \hu \|_{H^1(B_R)} \le C_R \delta \| f\|_{L^2(\R^d)}
\end{equation}
and 
\begin{equation}\label{SCM-key-point-2-thm2}
\| u_{\delta} -  \hu \|_{H^1(B_R \setminus B_{r_3})} \le C_R \delta \| f\|_{L^2(\R^d)},  
\end{equation}
for some positive constant $C_R$ independent of $f$ and $\delta$. 
In particular,
$$
u_\delta \to \hu \mbox{ in } H^1_{\loc}(\R^d \setminus B_{r_3}) \mbox{ as } \delta \to 0. 
$$
Here, $\hu$ is the unique outgoing  solution of the equation 
\begin{equation*}
\dive (\hA \nabla \hu ) + k^2 \hSigma \hu= f \mbox{ in } \R^d, 
\mbox{ where } 
\hA, \hSigma= \left\{ \begin{array}{cl} m^{2-d}a(x/m), m^{-d} \sigma(x/m) & \mbox{ in } B_{r_2},\\[6pt]
I , 1 & \mbox{ otherwise}. 
\end{array} \right. 
\end{equation*}
\end{theorem}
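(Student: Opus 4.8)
The plan is to rerun the proof of Theorem~\ref{SCM-thm1}, with the explicit two-dimensional harmonic Kelvin computations replaced by the change of variables rule of Lemma~\ref{lem-TO}, which is dimension-free and transforms Helmholtz coefficients correctly. Write $F(x)=r_2^2x/|x|^2$ and $G(x)=r_3^2x/|x|^2$ for the Kelvin transforms of $\partial B_{r_2}$ and $\partial B_{r_3}$, so that $F$ interchanges $B_{r_2}\setminus B_{r_1}$ with $B_{r_3}\setminus B_{r_2}$ and $G\circ F$ is the dilation $x\mapsto mx$ (recall $r_3=r_2^2/r_1$ and \eqref{choice-r12}). By \eqref{def-F*}, the coefficients $F^{-1}_*I,\,F^{-1}_*1$ on $B_{r_2}\setminus B_{r_1}$ and $m^{2-d}a(x/m),\,m^{-d}\sigma(x/m)$ on $B_{r_2}$ appearing in \eqref{SCM-defA-ss-2} are exactly the pushforwards, under $F$, of the free medium on $B_{r_3}\setminus B_{r_2}$ and, under $G\circ F$, of $(a,\sigma)$ on $B_{r_0}$; this is precisely what makes the reflections below produce distributional solutions. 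I will also use the finite-frequency analogue of Lemma~\ref{lem-stability}: for $\delta$ small, $u_\delta$ exists, is unique and outgoing, and $\|u_\delta\|_{H^1(B_R)}^2\le\frac{C_R}{\delta}\big|\Im\int_{\R^d}f\bar u_\delta\big|+C_R\|f\|_{L^2(\R^d)}^2$, so that $\|u_\delta\|_{H^1(B_R)}\le C_R\|f\|_{L^2(\R^d)}$; its proof is that of Lemma~\ref{lem-stability}, with Rellich's lemma and the Sommerfeld condition supplying the sign of the boundary term at infinity, unique continuation replacing Poincar\'e's inequality, and the contradiction argument behind \eqref{claim-sta-1} run on the shell $B_{r_2}\setminus B_{r_1}$ and then propagated off the shell by interior elliptic estimates and at infinity by the radiation condition (see \cite{Ng-Complementary,Ng-CALR-object} for such variants).

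\emph{Step 1: a $\delta=0$ solution and a preliminary bound.} With $\hat u$ as in the statement I would set
\[
u_0=\begin{cases}\hat u & \text{in }\R^d\setminus B_{r_2},\\[3pt]\hat u\circ F & \text{in }B_{r_2}\setminus B_{r_1},\\[3pt]\hat u(m\,\cdot) & \text{in }B_{r_1},\end{cases}
\]
and, applying Lemma~\ref{lem-TO} across $\partial B_{r_2}$ and across $\partial B_{r_1}$ together with the transmission identities \eqref{TO-bdry} (and the free-Helmholtz character of $\hat u$ on $B_{r_3}\setminus B_{r_2}$ and the $(\hat A,\hat\Sigma)$-equation on $B_{r_2}$), check that $u_0\in H^1_{\loc}(\R^d)$ is an outgoing distributional solution of $\dive(A_0\nabla u_0)+k^2\Sigma_0u_0=f$ in $\R^d$ with $u_0=\hat u$ on $\R^d\setminus B_{r_2}$. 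Since $A_\delta-A_0=-i\delta\,\chi_{B_{r_2}\setminus B_{r_1}}F^{-1}_*I$ and $\Sigma_\delta-\Sigma_0=-i\delta\,\chi_{B_{r_2}\setminus B_{r_1}}F^{-1}_*1$,
\[
\dive\big(A_\delta\nabla(u_\delta-u_0)\big)+k^2\Sigma_\delta(u_\delta-u_0)=i\delta\Big[\dive\big(\chi_{B_{r_2}\setminus B_{r_1}}F^{-1}_*I\,\nabla u_0\big)+k^2\chi_{B_{r_2}\setminus B_{r_1}}F^{-1}_*1\,u_0\Big]\ \text{ in }\R^d,
\]
and the stability estimate gives first $\|u_\delta-u_0\|_{H^1(B_R)}\le C_R\|f\|_{L^2(\R^d)}$ and then, inserting this into the imaginary-part estimate as in the proof of \eqref{SCM-key-point-1}, $\|u_\delta-u_0\|_{H^1(B_R)}\le C_R\delta^{1/2}\|f\|_{L^2(\R^d)}$.

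\emph{Step 2: removing the localized singularity and reaching rate $\delta$.} Following \eqref{def-hu-lens}, set $u_{1,\delta}=u_\delta\circ F^{-1}$ near $\partial B_{r_2}$, $u_{2,\delta}=u_{1,\delta}\circ G^{-1}$ in $B_{r_3}$, and
\[
\hat u_\delta=\begin{cases}u_\delta & \text{in }\R^d\setminus B_{r_3},\\[3pt]u_\delta-(u_{1,\delta}-u_{2,\delta}) & \text{in }B_{r_3}\setminus B_{r_2},\\[3pt]u_{2,\delta} & \text{in }B_{r_2}.\end{cases}
\]
By Lemma~\ref{lem-TO} and \eqref{TO-bdry}, $\hat u_\delta-\hat u$ is an outgoing solution of $\dive(\hat A\nabla(\hat u_\delta-\hat u))+k^2\hat\Sigma(\hat u_\delta-\hat u)=0$ off $\partial B_{r_2}\cup\partial B_{r_3}$ whose defect is $O(\delta)$: its value is continuous across $\partial B_{r_2}\cup\partial B_{r_3}$, while its flux jumps by $\pm i\delta\,\partial_r u_{1,\delta}$ there. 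As $(\hat A,\hat\Sigma)$ is uniformly elliptic with $\Re\hat\Sigma>0$ and $\Im\hat\Sigma\ge0$, the finite-frequency analogue of Lemma~\ref{lem-stability-S}---well-posedness for this Helmholtz equation with transmission data, with no loss of a factor $1/\delta$---yields
\[
\|\hat u_\delta-\hat u\|_{H^1(B_R)}\le C_R\big(\|\delta\,\partial_r u_{1,\delta}\|_{H^{-1/2}(\partial B_{r_2})}+\|\delta\,\partial_r u_{1,\delta}\|_{H^{-1/2}(\partial B_{r_3})}\big)\le C_R\delta\|f\|_{L^2(\R^d)},
\]
the last step by the trace theorem (for $u_{1,\delta}$, which by Lemma~\ref{lem-TO} solves a fixed Helmholtz equation between $\partial B_{r_2}$ and $\partial B_{r_3}$) and the bound of Step 1. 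Since $\hat u_\delta=u_\delta$ on $\R^d\setminus B_{r_3}$ this is \eqref{SCM-key-point-2-thm2}; feeding the now-$O(\delta)$ trace of $u_\delta-u_0$ on $\partial B_{r_3}$ back into the Step 1 equation on $B_{r_3}$ and into the imaginary-part estimate (whose $1/\delta$ is absorbed, as the right-hand side is $O(\delta)$ and its imaginary part $O(\delta^2)$) gives \eqref{SCM-key-point-1-thm2}, whence the stated $H^1_{\loc}$ convergence.

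\emph{Main obstacle.} The genuinely new ingredient is the finite-frequency stability result quoted above. With $A_\delta,\Sigma_\delta$ sign-changing there is neither ellipticity nor compactness, so existence and uniqueness of the outgoing $u_\delta$ for small $\delta$ require a limiting-absorption/Fredholm argument, its uniqueness part combining the energy identity (which forces $\nabla u_\delta\equiv0$ on the lossy shell), Rellich's lemma and unique continuation; and the quantitative bound with the correct dependence on $1/\delta$ and on the imaginary part forces the contradiction argument of \eqref{claim-sta-1} to be run on $B_{r_2}\setminus B_{r_1}$ and transferred to $B_R$. A secondary but necessary point is tracking the conformal weights the Kelvin transforms carry when $d=3$: Lemma~\ref{lem-TO} absorbs these into the pushforward coefficients via \eqref{def-F*}, but one must verify that the reflected media glue to exactly the $A_0,\Sigma_0$ and $\hat A,\hat\Sigma$ of the statement, including the factors $m^{2-d}$ and $m^{-d}$.
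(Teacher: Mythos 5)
Your overall route is the one the paper intends: its proof of Theorem~\ref{SCM-thm2} consists precisely of ``repeat the proof of Theorem~\ref{SCM-thm1}, using Lemma~\ref{lem-TO} and finite-frequency variants of Lemmas~\ref{lem-stability} and \ref{lem-stability-S}'', and your two steps (the reflected solution $u_0$, then the removing-localized-singularity function $\hu_\delta$) together with the limiting-absorption variants you invoke are exactly that. However, the verification you postpone in your last paragraph is where the argument, run against the coefficients as printed in \eqref{SCM-defA-ss-2}, actually breaks, and it is not only a $d=3$ issue. In $B_{r_1}\setminus B_{r_0}$ the printed medium is $(I,1)$ with $s_\delta=1$, while your $u_0=\hu(m\,\cdot)$ there is the pull-back under the dilation $G\circ F(x)=mx$ of $\hu$, which on $B_{r_3}\setminus B_{r_2}$ solves the Helmholtz equation with $(\hA,\hSigma)=(I,1)$; by \eqref{def-F*} the pulled-back equation is $\dive(m^{d-2}\nabla u_0)+k^2m^{d}u_0=0$, i.e.\ wavenumber $mk$, not $k$. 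Equivalently, $(G\circ F)_*(I,1)=(m^{2-d}I,m^{-d})\neq(I,1)$ on $B_{r_3}\setminus B_{r_2}$, so your $u_0$ is not a solution of the $\delta=0$ problem, and in Step 2 the function $u_{2,\delta}$ does not solve the $(\hA,\hSigma)$ equation on $B_{r_3}\setminus B_{r_2}$, so $\hu_\delta-\hu$ is not a solution of the transmission problem you feed into the analogue of Lemma~\ref{lem-stability-S}. The repair is the same device as in the cloaking scheme \eqref{CCM-choice-A}--\eqref{CCM-GF}: the layer $B_{r_1}\setminus B_{r_0}$ must carry $(m^{d-2}I,\,m^{d})$, so that its image under $G\circ F$ is $(I,1)$; with that matched layer (which is how the original superlensing construction is arranged) both of your steps go through. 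This must be confronted explicitly rather than left as a point ``to verify'', since with \eqref{SCM-defA-ss-2} taken literally the check fails already for $d=2$ through the $\Sigma$-part.

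Second, your closing bootstrap aimed at \eqref{SCM-key-point-1-thm2} does not hold up. Inside $B_{r_2}$ the limit of $u_\delta$ is $u_0$ ($=\hu\circ F$ in the shell and $\hu(m\,\cdot)$ in the core), not $\hu$, so $\|u_\delta-\hu\|_{H^1(B_R)}$ does not even tend to $0$; the meaningful interior estimate compares $u_\delta$ with $u_0$, exactly as the proof of \eqref{SCM-key-point-1} does. Moreover, what your Steps 1 and 2 deliver there is the rate $\delta^{1/2}$, not $\delta$: on $B_{r_3}\setminus B_{r_2}$ one has $u_\delta-\hu_\delta=u_{1,\delta}-u_{2,\delta}$, which these estimates control only at order $\delta^{1/2}$ (it inherits the Step 1 rate), so ``feeding the $O(\delta)$ trace back'' cannot upgrade the interior bound; the imaginary-part iteration you allude to is the one used in Theorems~\ref{CCM-thm1} and \ref{thm-CALR}, where the limiting problem carries no loss region, and it does not produce a rate $\delta$ inside $B_{r_3}$ here. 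You should therefore state and prove $\|u_\delta-u_0\|_{H^1(B_R)}\le C_R\,\delta^{1/2}\|f\|_{L^2(\R^d)}$ together with the exterior estimate \eqref{SCM-key-point-2-thm2} at rate $\delta$, which is what your argument actually yields and what matches Theorem~\ref{SCM-thm1}.
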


Recall that a solution $v \in H^1_{\loc}(\R^d \setminus B_{R})$ of the equation
\begin{equation*}
\Delta v + k^2 v = 0 \mbox{ in } \R^d \setminus B_{R}, 
\end{equation*} 
for some $R>0$, is said to satisfy the outgoing condition if 
\begin{equation*}
\partial_r v - i k v = o(r^{-\frac{d-1}{2}}) \mbox{ as } r = |x| \to + \infty. 
\end{equation*}

\begin{proof} The proof of Theorem~\ref{SCM-thm2} is similar to the one of Theorem~\ref{SCM-thm1} by using Lemma~\ref{lem-TO} and applying variants of Lemmas~\ref{lem-stability} and \ref{lem-stability-S}, see,  e.g., \cite[Lemma 1]{Ng-Complementary} or \cite[Lemma 2.1]{Ng-CALR-frequency} for  variants of Lemma~\ref{lem-stability}.  The details are left to the reader. 
\end{proof}

\begin{remark} \rm Superlensing using complementary media is justified mathematically  for the electromagnetic wave  \cite{Ng-Superlensing-Maxwell}. The idea of using reflections is also useful in establishing superlensing using hyperbolic metamaterials, an interesting type of metamaterials,  \cite{BonnetierNguyen}
\end{remark}

\begin{remark} \rm Using the change of variables in Lemma~\ref{lem-TO}, one can design a general superlensing scheme in which one does not require $F$ (and also G) to be a Kelvin transform, and the lens is not required  to be radially symmetric, see \cite[Theorems 1 and 2 and Corollary 2]{Ng-Complementary} and \cite[Theorem 2]{Ng-Superlensing-Maxwell} for a discussion on the acoustic and electromagnetic settings, respectively. 
\end{remark}

\section{Cloaking using complementary media} \label{sect-CCM}

Cloaking using  complementary media was suggested by Lai et al.  \cite{LaiChenZhangChanComplementary}. The idea is to cancel the effect of an object by its complementary medium, a concept considered in \cite{PendryRamakrishna0}, see \cite{Ng-Complementary} for a discussion of this concept from  mathematical point of views. The study of cloaking using complementary media faces two difficulties.  Firstly, this problem is unstable since the equations describing the phenomenon  have sign changing coefficients, hence the ellipticity and the compactness are lost in general. Secondly,  localized resonance might appear, as shown in simulations in \cite{LaiChenZhangChanComplementary}. 

Cloaking using complementary media was mathematically justified for acoustic waves  \cite{Ng-Negative-Cloaking}  and for electromagnetic waves \cite{Ng-Cloaking-Maxwell}. The schemes that were used in \cite{Ng-Negative-Cloaking} and \cite{Ng-Cloaking-Maxwell} are inspired by the work of Lai. et al. and the study of complementary concept in \cite{Ng-Complementary,Ng-Superlensing-Maxwell}. Nevertheless, these schemes are different from the ones in \cite{LaiChenZhangChanComplementary}. The modification, mentioned below,  is necessary, as shown in the acoustic setting in \cite{Ng-CALR-object}; without the modification, cloaking might not be achieved (see also Section~\ref{sect-CALR-object}, Proposition~\ref{pro-CALR},  in particular,  and the comments following). 

Let us describe how to cloak the region $B_{2r_2} \setminus B_{r_2}$ for some $r_{2}> 0$ in the spirit of \cite{Ng-Negative-Cloaking}. We first consider the quasistatic regime. Assume that the cloaked region is  characterized by a matrix $a$,  which is  symmetric and uniformly elliptic in $B_{2 r_2} \setminus B_{r_2}$. The cloaking device consists of {\it two parts}. The first one, in $B_{r_2} \setminus B_{r_1}$,  makes use of reflecting complementary media to cancel the effect of the cloaked region, and the second  one in $B_{r_1}$,   fills the space that ``disappears" from the cancellation by the homogeneous medium. 
For the first part, we modify  the strategy in \cite{LaiChenZhangChanComplementary}. Instead of $B_{2r_2} \setminus B_{r_2}$, we consider $B_{r_3} \setminus B_{r_2}$ for some $r_3 > 0$ as the cloaked region in which the medium is given by the matrix 
\begin{equation*}
a_e = \left\{ \begin{array}{cl} a & \mbox{ in } B_{2 r_2} \setminus B_{r_2}, \\[6pt]
I & \mbox{ in } B_{r_3} \setminus B_{2 r_2}. 
\end{array} \right. 
\end{equation*} 
We assume that 
\begin{equation}\label{SCM-Lipschitz}
a_e \in C^1(\bar B_{r_3} \setminus B_{r_2}). 
\end{equation}
The complementary medium in $B_{r_2} \setminus B_{r_1}$ is given by 
\begin{equation*}
- \big(F^{-1}\big)_*a_e, 
\end{equation*}
where $F: B_{r_2} \setminus \bar B_{r_1} \to B_{r_3} \setminus \bar B_{r_2}$ is the Kelvin transform with respect to  $\partial B_{r_2}$.  Concerning the second part, the medium in $B_{r_1}$ is given by 
\begin{equation}\label{CCM-choice-A}
\Big(r_3^2/r_2^2 \Big)^{d-2} I,  
\end{equation}
which is also different from that suggested by Lai et al.  \cite{LaiChenZhangChanComplementary}. 
The reason for this choice is to ensure that 
\begin{equation}\label{CCM-GF}
G_{*} F_{*} A = I \mbox{ in } B_{r_{3}},  
\end{equation}
where $A$ is defined in \eqref{CCM-defA-ss} below. 
In two dimensions, the medium in $B_{r_1}$ is $I$, as used by Lai et al. \cite{LaiChenZhangChanComplementary}, while it is not $I$ in three dimensions.  With the loss, the medium is characterized by $A_\delta : = s_\delta A$,  where 
\begin{equation}\label{CCM-defA-ss}
A = \left\{ \begin{array}{cl} 
a_e & \mbox{ in } B_{r_3} \setminus B_{r_2}, \\[6pt]
F^{-1}_*a_e & \mbox{ in } B_{r_2} \setminus B_{r_1}, \\[6pt]
\Big(r_3^2/r_2^2 \Big)^{d-2} I & \mbox{ in } B_{r_1},\\[6pt]
I & \mbox{ otherwise}, 
\end{array} \right. 
\quad \mbox{ and } \quad 
 s_\delta = \left\{ \begin{array}{cl} -1 - i \delta  & \mbox{ in } B_{r_2} \setminus B_{r_1}, \\[6pt]
1 & \mbox{ otherwise}, 
\end{array} \right.  \mbox{ for } \delta \ge 0.
\end{equation}

Let  $\Omega$ be a smooth bounded open subset of  $\R^d$ with $B_{r_3} \subset \subset \Omega$, and let  $f \in L^2(\Omega)$. Denote $u_\delta, \, \hu \in H^1_0(\Omega)$, respectively, the unique solution of 
\begin{equation}\label{sys-CCM1}
\dive (A_\delta \nabla u_\delta) = f \mbox{ in } \Omega \quad \mbox{ and } \quad \Delta \hu = f \mbox{ in } \Omega. 
\end{equation}
The cloaking property of this scheme is given in the following theorem. 

\begin{theorem}\label{CCM-thm1} Let  $d=2, \, 3$, $0< \delta < 1$, and $f \in L^{2}(\Omega)$ with $\supp f \subset \Omega \setminus  B_{r_{3}}$. Let $u_\delta, u \in H^1_0(\Omega)$ be the uniques solutions defined by \eqref{sys-CCM1}. 
For any $0< \alpha < 1$, there exists $\ell > 0$, depending only on $r_2$, $\alpha$,  and the ellipticity and the Lipschitz constants of $a_e$  such that if $r_3 > \ell r_2$ then   
\begin{equation}\label{CCM-key-point-1}
\| u_{\delta}  \|_{H^1(\Omega)}  \le C \delta^{(\alpha - 1) / 2} \| f\|_{L^2(\Omega)}, 
\end{equation}
and
\begin{equation}\label{CCM-key-point-2}
\| u_{\delta} - \hu \|_{H^1(\Omega \setminus B_{r_3})}  \le C \delta^{\alpha} \| f\|_{L^2(\Omega)}, 
\end{equation}
for some positive constant $C$ independent of $\delta$ and $f$. In particular, we have 
\begin{equation}\label{CCM-key-point-2}
u_{\delta} \to \hu \mbox{ in } H^{1}(\Omega \setminus \bar B_{r_3}) \mbox{ as } \delta \to 0. 
\end{equation}
\end{theorem}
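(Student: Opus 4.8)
The plan is to mimic closely the two-part structure of the proof of Theorem~\ref{SCM-thm1}, replacing the Kelvin reflections used there with the transformation-optics change of variables in Lemma~\ref{lem-TO}, which is what makes the $a_e$-general (non-isotropic, non-constant) cloaked region tractable. First I would construct a ``natural'' solution $u_0\in H^1_0(\Omega)$ of $\dive(A_0\nabla u_0)=f$ in $\Omega$, where $A_0=s_0 A$. Since $\supp f\subset\Omega\setminus B_{r_3}$ and by \eqref{CCM-GF} the pushed-forward medium $G_*F_*A=I$ in $B_{r_3}$, the equation for $\hu$ (which is just $\Delta\hu=f$) already ``sees'' the cloak as vacuum; so I define $u_0=\hu$ in $\Omega\setminus B_{r_3}$, then reflect through $\partial B_{r_2}$ via $F$ into $B_{r_2}\setminus B_{r_1}$, and reflect once more via $G$ (the Kelvin transform associated with $\partial B_{r_3}$) into $B_{r_1}$, exactly paralleling \eqref{def-u0}. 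Using Lemma~\ref{lem-TO} — in particular the flux-sign relation \eqref{TO-bdry}, ${\mathcal T}_*a\nabla v\cdot\nu=-a\nabla u\cdot\nu$, which is precisely what cancels against the $-1$ in $s_\delta$ — one checks $[u_0]=0$ and $[s_0 A\nabla u_0\cdot\nu]=0$ on $\partial B_{r_1}\cup\partial B_{r_2}\cup\partial B_{r_3}$, so that $u_0\in H^1_0(\Omega)$ indeed solves $\dive(A_0\nabla u_0)=f$. The Lipschitz hypothesis \eqref{SCM-Lipschitz} on $a_e$ is what guarantees these reflected coefficients are regular enough for the elliptic machinery.

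Next I would estimate $u_\delta-u_0$. As in \eqref{eq-diff}, $\dive(A_\delta\nabla(u_\delta-u_0))=i\delta\,\dive(\chi_{B_{r_2}\setminus B_{r_1}}A\nabla u_0)$ in $\Omega$, and applying the variant of Lemma~\ref{lem-stability} appropriate to this $A$ gives a first crude bound $\|u_\delta-u_0\|_{H^1(\Omega)}\le C\|\nabla u_0\|_{L^2(B_{r_2}\setminus B_{r_1})}$, then a bootstrapped bound of order $\delta^{1/2}$. However — and this is the essential difference from Theorem~\ref{SCM-thm1} — here $u_0$ need not be controlled by $\|f\|_{L^2}$ alone, because the reflected field $\hu\circ F$ can be large in $B_{r_2}\setminus B_{r_1}$ near $\partial B_{r_1}$: this is exactly the localized resonance. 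The remedy is the ``removing localized singularity'' technique: rather than proving $u_\delta$ itself is small, one introduces the auxiliary function $\hu_\delta$ (the analogue of \eqref{def-hu-lens}) built from the reflections $u_{1,\delta}=u_\delta\circ F^{-1}$, $u_{2,\delta}=u_{1,\delta}\circ G^{-1}$, which satisfies $\dive(\hA\nabla(\hu_\delta-\hu))=0$ away from $\partial B_{r_2}\cup\partial B_{r_3}$ with transmission jumps of size $O(\delta\,\partial_r u_{1,\delta})$; feeding this into Lemma~\ref{lem-stability-S} yields \eqref{CCM-key-point-2} with rate $\delta^\alpha$ once $r_3/r_2$ is large. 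The role of the largeness condition $r_3>\ell r_2$ is to make the three-sphere / harmonic-interpolation inequality quoted in Remark~\ref{rem-r1-r2} have exponent $\alpha<1$ close enough to $1$: the geometric ratio $\ln(r_3/r_2)/\ln(r_3/r_1)$ must be controlled, and $\ell$ is chosen in terms of $\alpha$, $r_2$, and the ellipticity/Lipschitz constants of $a_e$ so that the resonant growth in $B_{r_2}\setminus B_{r_1}$ is beaten by the $\delta$-smallness of the jump.

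Concretely, the steps in order: (1) build $u_0$ by double reflection and verify via Lemma~\ref{lem-TO} that it solves $\dive(A_0\nabla u_0)=f$ in $\Omega$; (2) write the equation for $u_\delta-u_0$, apply the Lemma~\ref{lem-stability} variant twice to get $\|u_\delta-u_0\|_{H^1(\Omega)}\le C\delta^{1/2}(\|f\|_{L^2}+\|\nabla u_0\|_{L^2(B_{r_2}\setminus B_{r_1})})$ — but here one must \emph{not} try to absorb $\|\nabla u_0\|$ into $\|f\|$; (3) instead, via the three-sphere inequality of Remark~\ref{rem-r1-r2} applied to the harmonic function $u_\delta$ in $B_{r_3}\setminus B_{r_2}$ (whence to its reflection $u_{1,\delta}$ on $\partial B_{mr_0}$-type surfaces), bound $\|u_{1,\delta}-u_\delta\|_{H^{1/2}}+\|\partial_r(u_{1,\delta}-u_\delta)\|_{H^{-1/2}}$ by $C\delta^\alpha\|u_\delta\|_{H^1(\Omega\setminus B_{r_3})}$, with $\alpha$ controlled by the ratio $r_3/r_2$; (4) define $\hu_\delta$ as in \eqref{def-hu-lens}, check it nearly solves $\dive(\hA\nabla\hu_\delta)=f$ with small transmission data, apply Lemma~\ref{lem-stability-S} to get $\|\hu_\delta-\hu\|_{H^1(\Omega\setminus(\partial B_{r_2}\cup\partial B_{r_3}))}\le C\delta^\alpha\|f\|_{L^2}$; since $\hu_\delta=u_\delta$ outside $B_{r_3}$, this is \eqref{CCM-key-point-2}; (5) combine with the crude $H^1(\Omega)$ bound and the resonance estimate to close the loop and obtain \eqref{CCM-key-point-1} with rate $\delta^{(\alpha-1)/2}$. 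The main obstacle is step (3)–(5): unlike in the superlensing case — where the choice \eqref{choice-r12} made the system stable and the naive argument sufficed — here the cloak is genuinely resonant, so the coupling between ``$u_\delta$ is possibly large in the cloak'' and ``$u_\delta-\hu$ is small outside'' has to be decoupled by hand using the auxiliary field and the interpolation exponent, and one has to track carefully how $\ell$ (equivalently $\alpha$) enters every estimate.
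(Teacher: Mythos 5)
Your step (1) is where the argument breaks down. In the cloaking setting there is, in general, \emph{no} solution $u_0\in H^1_0(\Omega)$ of the limit problem $\dive(A_0\nabla u_0)=f$, and your explicit candidate does not satisfy the equation: taking $u_0=\hu$ in $\Omega\setminus B_{r_2}$ would require $\dive(a_e\nabla \hu)=0$ in $B_{r_3}\setminus B_{r_2}$, which fails in the object region $B_{2r_2}\setminus B_{r_2}$ where $a_e=a$ is arbitrary (consequently the reflected field $\hu\circ F$ does not solve $\dive(F^{-1}_*a_e\nabla\cdot)=0$ in the layer either, so the transmission check you invoke via Lemma~\ref{lem-TO} never gets off the ground). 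More structurally, if a limit solution existed, unique continuation through the complementary layer would force $u_0=\hu$ in $B_{r_3}\setminus B_{2r_2}$, and then $u_0$ would have to solve the Cauchy problem for $\dive(a\nabla\cdot)=0$ in $B_{2r_2}\setminus B_{r_2}$ with data $(\hu,\partial_\nu\hu)$ prescribed on $\partial B_{2r_2}$, which is ill-posed and generically unsolvable. This is precisely why the theorem only asserts the blow-up bound $\delta^{(\alpha-1)/2}$ in \eqref{CCM-key-point-1}, and it is the essential difference from Theorem~\ref{SCM-thm1}, where the object sits in the core $B_{r_0}$ of the lens and the construction \eqref{def-u0} does produce an exact limit solution. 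So your steps (1)--(2), and any comparison of $u_\delta$ with a constructed $u_0$, have no counterpart here; your diagnosis that the only problem is the size of $\hu\circ F$ near $\partial B_{r_1}$ misses the actual obstruction.

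The paper's proof works directly with $u_\delta$ and contains two ingredients absent from your outline. First, all estimates are run through the quantity $\D(f,\delta)$ of \eqref{def-D}, obtained from the refined stability estimate \eqref{stability-1-2}; after setting $u_{1,\delta}=u_\delta\circ F^{-1}$ and $u_{2,\delta}=u_{1,\delta}\circ G^{-1}$, the mismatch $u_\delta-u_{1,\delta}$ is controlled by a three-sphere inequality for the \emph{variable-coefficient} operator, applied to $(u_{1,\delta}-u_\delta)\chi_{B_{r_3}\setminus B_{r_2}}-w_\delta$, where $w_\delta$ is an auxiliary corrector absorbing the $O(\delta)$ flux jump on $\partial B_{r_2}$ (see \eqref{CCM-thm1-p2} and \cite{Ng-Negative-Cloaking}). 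This is where the Lipschitz regularity of $a_e$ and the condition $r_3>\ell r_2$ enter, namely to make the interpolation exponent at least $\beta=(2+\alpha)/3$; the harmonic-annulus inequality of Remark~\ref{rem-r1-r2} that you propose in step (3) does not apply, since $u_\delta$ is not harmonic in $B_{2r_2}\setminus B_{r_2}$. Second, after introducing $\hu_\delta$ and applying Lemma~\ref{lem-stability-S}, one only obtains $\|\hu_\delta-\hu\|_{H^1}\le C\delta^{\beta}\D(f,\delta)^{1/2}\le C\delta^{\beta-1/2}\|f\|_{L^2(\Omega)}$, and $\beta-\tfrac12<\alpha$; to reach the stated rates the paper bootstraps: since $\Im\int_\Omega f\bar\hu=0$, the improved bound on $\hu_\delta-\hu$ feeds back into an improved bound on $\D(f,\delta)$, and iterating finitely many times yields \eqref{CCM-key-point-2} with rate $\delta^\alpha$ and, simultaneously, \eqref{CCM-key-point-1} with rate $\delta^{(\alpha-1)/2}$. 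Your plan contains no such iteration, and indeed never derives \eqref{CCM-key-point-1}; without the bootstrap the scheme you sketch cannot close at the claimed exponents.
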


For an observer outside $B_{r_3}$, the medium in $B_{r_3}$ given by $A_\delta$ looks as the homogeneous one by \eqref{CCM-key-point-1} for small $\delta$: one has cloaking. 

\begin{proof} Set 
$$
\beta = (2 + \alpha)/3.
$$
We have, by Lemma~\ref{lem-stability},  
\begin{equation}\label{CCM-thm1-p1}
\| u_\delta\|_{H^1(\Omega)} \le C \D(f, \delta), 
\end{equation}
where 
\begin{equation}\label{def-D}
\D(f, \delta) :=  \frac{1}{\delta} \left| \Im \int_{\Omega} f \bar u_\delta \right| + \| f \|_{L^2(\Omega)}. 
\end{equation}
As in the proof of Theorem~\ref{SCM-thm1}, define $u_{1, \delta} \in H^1_{\loc}(\R^d \setminus B_{r_2})$ and $u_{2, \delta} \in H^1(B_{r_3})$  as follows 
\begin{equation*}
u_{1, \delta} = u_\delta \circ F^{-1}  \mbox{ in } \R^d \setminus B_{r_2} \quad \mbox{ and } \quad u_{2, \delta} = u_{1, \delta} \circ G^{-1} = u_\delta \circ F^{-1} \circ G^{-1} \mbox{ in } B_{r_3}.  
\end{equation*}
We have, by Lemma~\ref{lem-TO},  
$$
\dive (A \nabla u_{1, \delta}) = 0 \mbox{ in } B_{r_{3}} \setminus B_{2 r_2}, \quad u_{1, \delta} = u_{\delta} \mbox{ on } \partial B_{r_2}, \quad \mbox{ and } \quad (1 + i \delta)A \nabla u_{1, \delta}  = A \nabla u_\delta |_{ext} \mbox{ on } \partial B_{r_2}. 
$$
Let ${\mathcal A}$ be a Lipschitz extension of $a_e$ in $B_{r_3}$ such that ${\mathcal A}(0) = I$ and let $w_\delta \in H^1_0(B_{r_3})$ be such that 
$$
\dive ({\mathcal A} \nabla w_\delta) = 0 \mbox{ in } B_{r_3} \setminus \partial B_{r_2} \quad \mbox{ and } \quad [\mathcal A \nabla w_\delta \cdot \nu]  = i \delta A \nabla u_{1, \delta} \mbox{ on } \partial B_{r_2}.  
$$
Then 
\begin{equation}\label{CCM-p3}
\| w_\delta \|_{H^1(B_{r_3})} \le C \delta \D(f, \delta)^{1/2}.  
\end{equation}
Applying a three-sphere inequality  \cite[Lemma 1]{Ng-Negative-Cloaking} to $(u_{1, \delta} - u _\delta) \chi_{B_{r_3} \setminus B_{r_2}} - w_\delta$ in $B_{r_3}$ \footnote{Recall that $\chi_D$ denotes the characteristic function of a subset $D$ of $\R^d$.} and using \eqref{CCM-p3},  we obtain, if $\ell$ is sufficiently large, that
\begin{equation}\label{CCM-thm1-p2}
\|u_\delta - u_{1, \delta} \|_{H^{1/2}(\partial B_{2 r_2})} + \| \partial_r (u_\delta - u_{1, \delta})|_{ext} \|_{H^{-1/2}(\partial B_{2 r_2})} \le C \delta^\beta \D(f, \delta)^{1/2}.  
\end{equation}
In the spirit of  \eqref{SCM-hu-delta}, we define
\begin{equation}\label{CCM-hu-delta}
\hu_\delta = \left\{\begin{array}{cl} u_\delta & \mbox{ in } \Omega \setminus B_{r_3}, \\[6pt]
u_\delta - (u_{1, \delta} - u_{2, \delta}) & \mbox{ in } B_{r_3} \setminus B_{2 r_2}, \\[6pt]
u_{2, \delta} & \mbox{ in } B_{2 r_2}. 
\end{array}\right. 
\end{equation}
We have 
\begin{equation*}
\Delta (\hu_\delta - \hu)  = 0 \mbox{ in } \Omega \setminus (\partial B_{r_3} \cup \partial B_{2 r_2}),  
\end{equation*}
\begin{equation*}
[\hu_\delta - \hu] = 0 \mbox{ on } \partial B_{r_3}, \quad [\partial_r (\hu_\delta - \hu)] = - i \delta \partial_r u_{1, \delta}|_{int} \mbox{ on } \partial B_{r_3},
\end{equation*}
and
\begin{equation*}
[\hu_\delta - \hu] = u_{\delta} - u_{1, \delta},  \quad [\partial_r (u_\delta - \hu) ] = \partial_r (u_{\delta}|_{ext} - u_{1, \delta})   \quad \mbox{ on } \partial B_{2 r_2}.  
\end{equation*}
By Lemma~\ref{lem-stability}, we obtain from \eqref{CCM-thm1-p1} and \eqref{CCM-thm1-p2} that 
\begin{equation}\label{CCM-p0}
\| \hu_\delta - \hu_0 \|_{H^1(\Omega)} \le C \delta^{\beta} \D(f, \delta)^{1/2}. 
\end{equation}
By \eqref{def-D}, this implies, since $\beta > 1/2$, that  
\begin{equation}\label{CCM-p1}
\| \hu_\delta \|_{H^1(\Omega \setminus B_{r_3})} \le C \|f \|_{L^2(\Omega)}. 
\end{equation}
We derive from \eqref{def-D} and \eqref{CCM-p1} that 
\begin{equation}\label{CCM-p2}
\D(f, \delta) \le C \delta^{-1} \| f\|_{L^2(\Omega)}^2
\end{equation}
and from \eqref{CCM-p0} and \eqref{CCM-p1} that 
\begin{equation}\label{CCM-p3-1}
\| \hu_\delta - \hu \|_{H^1(\Omega)} \le C \delta^{\beta - 1/2} \| f\|_{L^2(\Omega)}. 
\end{equation}

Up to this point, the analysis is in the spirit of  \cite{Ng-Negative-Cloaking}, and now we add some new ingredients to derive the desired conclusions. We have, by \eqref{CCM-p3-1},  
\begin{equation*}
\left|\int_{\Omega} f \bar \hu_\delta - \int_{\Omega} f \bar \hu \right| \le C \delta^{\beta - 1/2} \| f\|_{L^2(\Omega)}^2
\end{equation*}
and, by multiplying the equation of $\hu$ with $\bar \hu$ and considering the imaginary part,  
\begin{equation*}
\Im \int_{\Omega} f \bar \hu = 0. 
\end{equation*}
It follows from \eqref{def-D} that 
\begin{equation*}
\D(f, \delta) \le C \delta^{\beta - 3/2} \| f\|_{L^2(\Omega)}^2. 
\end{equation*}
From \eqref{CCM-p0}, we obtain 
\begin{equation*}
\| \hu_\delta - \hu \|_{H^1(\Omega)} \le C \delta^{3\beta/2 - 3/4} \| f\|_{L^2(\Omega)}. 
\end{equation*}
Repeating this process, one reaches, for $n \ge 1$,  that
\begin{equation*}
\D(f, \delta) \le C_n \delta^{\beta(1 + 1/2 + .. + 1/2^{n-1})  - (1/2 + .. + 1/2^{n}) - 1} \| f\|_{L^2(\Omega)}^2
\end{equation*}
and
\begin{equation*}
\| \hu_\delta - \hu \|_{H^1(\Omega)} \le C_{n} \delta^{\beta(1 + 1/2 + .. + 1/2^n)  - (1/2 + .. + 1/2^{n+1})} \| f\|_{L^2(\Omega)}, 
\end{equation*}
where $C_n$ is a positive constant independent of $\delta$ and $f$. 
The conclusion follows by taking  $n$ large enough. 
\end{proof}

\begin{remark} \rm One of the crucial steps of this proof is to introduce the function $\hu$. In general $u_{1, \delta} - u_\delta $ explodes in the region $B_{r_3}\setminus B_{2 r_2}$. A numerical simulation of this fact is given in the work of Lai. et al. \cite{LaiChenZhangChanComplementary}.  A mathematical illustration of this phenomenon can be seen from the explicit representation of $u_\delta - u_{1, \delta}$ in $B_{r_3} \setminus B_{2 r_2}$ using separation of variables, see \cite[Proof of Theorem 1]{Ng-Negative-Cloaking}.  The definition of $\hu$ is inspired by the concept of the normalizing energy used in the study of the Ginzburg-Landau equation, see, e.g.,  \cite{BBH}.  
\end{remark}

We next present the result in the finite frequency regime. 
Assume that the cloaked region is  characterized by a matrix $a$  that is  symmetric, uniformly elliptic and  a bounded complex function $\sigma$ that  satisfies $\Re \sigma > c > 0$  and $\Im \sigma \ge 0$ both defined in $B_{2 r_2} \setminus B_{r_2}$. 
As in the spirit of the zero-frequency case, we consider the layer $B_{r_3} \setminus B_{r_2}$ as the cloaked region that is characterized by 
$$
a_e, \sigma_e =  \left\{ \begin{array}{cl} a, \sigma & \mbox{ in } B_{2 r_2} \setminus B_{r_2}, \\[6pt]
I, 1 & \mbox{ in } B_{r_3} \setminus B_{2 r_2}. 
\end{array} \right. 
$$
The cloaking device consists of {\it two parts}. The first one, the complementary layer in $B_{r_2} \setminus B_{r_1}$, is characterized by 
\begin{equation*}
- \big(F^{-1}\big)_*a_e, - F^{-1}_*\sigma_e. 
\end{equation*}
Concerning the second part, the medium in $B_{r_1}$ is given by 
\begin{equation}\label{CCM-choice-A}
\Big(r_3^2/r_2^2 \Big)^{d-2} I, (r_3^2/r_2^2)^d.   
\end{equation}
Again, the reason for this choice is to ensure
\begin{equation}\label{CCM-GF}
G_{*} F_{*} A= I \quad \mbox{ and } \quad ,  G_{*} F_{*} \Sigma = 1 \mbox{ in } B_{r_{3}}, 
\end{equation}
where $A$ and $\Sigma$ are defined in \eqref{CCM-defAS-ss}.
We will assume that \eqref{SCM-Lipschitz} holds. 
Set  $A_\delta : = s_\delta A$  and $\Sigma_\delta : = s_\delta \Sigma$, where, for $\delta \ge 0$,  
\begin{equation}\label{CCM-defAS-ss}
A, \Sigma = \left\{ \begin{array}{cl} 
a_e, \sigma_e & \mbox{ in } B_{r_3} \setminus B_{r_2}, \\[6pt]
F^{-1}_*a_e, F^{-1}_*\sigma_e & \mbox{ in } B_{r_2} \setminus B_{r_1}, \\[6pt]
\Big(r_3^2/r_2^2 \Big)^{d-2} I,  (r_3^2/r_2^2)^d & \mbox{ in } B_{r_1},\\[6pt]
I, 1 & \mbox{ otherwise}, 
\end{array} \right. 
\quad \mbox{ and } \quad 
 s_\delta = \left\{ \begin{array}{cl} -1 - i \delta  & \mbox{ in } B_{r_2} \setminus B_{r_1}, \\[6pt]
1 & \mbox{ otherwise}. 
\end{array} \right. 
\end{equation}
Let  $k>0$,  $f \in L^2(\R^d)$ with compact support and  denote $u_\delta, \hu \in H^1_{\loc}(\R^d)$, respectively, the unique outgoing solutions of 
\begin{equation}\label{sys-CCM1}
\dive (A_\delta \nabla u_\delta) + k^2 \Sigma_\delta u_\delta = f \mbox{ in } \R^d \quad \mbox{ and } \quad \Delta \hu  + k^2 \hu = f \mbox{ in } \R^d. 
\end{equation}
Here is the variant of Theorem~\ref{CCM-thm1} for the finite frequency regime, which confirms the cloaking property of the scheme considered. 

\begin{theorem}\label{CCM-thm2} Let  $d=2, \, 3$, $k>0$, $0< \delta < 1$,  $R_0  > r_3$, $f \in L^{2}(\R^d)$ with $\supp f \subset B_{R_0} \setminus  B_{r_{3}}$. Let $u_\delta, u \in H^1_{\loc}(\R^d)$ be the unique outgoing solutions defined by \eqref{sys-CCM1}. 
For any $0< \alpha < 1$, there exists $\ell > 0$, depending only on $r_2$, $\alpha$,  and the ellipticity and the Lipschitz constants of $a_e$  such that if $r_3 > \ell r_2$ then   
\begin{equation}\label{CCM-key-point-1-2}
\| u_{\delta}  \|_{H^1(B_R)}  \le C \delta^{(\alpha - 1) / 2} \| f\|_{L^2(\R^d)}, 
\end{equation}
and
\begin{equation}\label{CCM-key-point-2-2}
\| u_{\delta} - \hu \|_{H^1(B_R \setminus B_{r_3})}  \le C \delta^{\alpha} \| f\|_{L^2(\R^d)}, 
\end{equation}
for some positive constant $C$ independent of $\delta$ and $f$. In particular, we have 
\begin{equation}\label{CCM-key-point-2-2}
u_{\delta} \to \hu \mbox{ in } H^{1}_{\loc} (\R^d \setminus \bar B_{r_3}) \mbox{ as } \delta \to 0. 
\end{equation}
\end{theorem}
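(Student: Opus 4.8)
The plan is to reproduce the proof of Theorem~\ref{CCM-thm1} almost verbatim, substituting the Helmholtz tools for the quasistatic ones. The ingredients not present in the proof of Theorem~\ref{CCM-thm1} are three: the change of variables rule of Lemma~\ref{lem-TO}, used to transport the equation under the Kelvin transforms $F$ (with respect to $\partial B_{r_2}$) and $G$ (with respect to $\partial B_{r_3}$), which is needed here since $A$ and $\Sigma$ are no longer $\pm I$ and $d$ may be $3$; the finite-frequency variants of Lemmas~\ref{lem-stability} and \ref{lem-stability-S} for $\dive(A_\delta \nabla \cdot) + k^2 \Sigma_\delta\, \cdot$ subject to the outgoing condition (see, e.g., \cite[Lemma~1]{Ng-Complementary} or \cite[Lemma~2.1]{Ng-CALR-frequency}); and the standard well-posedness of the limiting outgoing problem for $\hu$, whose uniqueness follows from Rellich's lemma, unique continuation, and $\Im \sigma_e \ge 0$. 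As in the proof of Theorem~\ref{CCM-thm1}, set $\beta = (2+\alpha)/3 > 1/2$ and $\D(f,\delta) := \frac1\delta \big| \Im \int_{\R^d} f \bar u_\delta \big| + \| f \|_{L^2(\R^d)}$; the finite-frequency stability lemma gives $\| u_\delta \|_{H^1(B_R)} \le C_R\, \D(f,\delta)^{1/2}$ for every $R > 0$.

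Next I would carry out the reflection and removing-localized-singularity construction. Thanks to \eqref{CCM-GF} --- the very reason the coefficients in $B_{r_1}$ were chosen as in \eqref{CCM-choice-A} --- Lemma~\ref{lem-TO} shows that $u_{1,\delta} := u_\delta \circ F^{-1}$ in $\R^d \setminus B_{r_2}$ and $u_{2,\delta} := u_{1,\delta} \circ G^{-1}$ in $B_{r_3}$ solve constant-coefficient Helmholtz equations in $B_{r_3} \setminus B_{2r_2}$ and in $B_{r_3}$ respectively, together with the transmission relations on $\partial B_{r_2}$ and $\partial B_{r_3}$ that are the obvious finite-frequency analogues of those in the proof of Theorem~\ref{CCM-thm1} (only the $(1+i\delta)$ factors enter). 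In the resonant shell $B_{r_3} \setminus B_{2r_2}$, where $a_e = I$ and $\sigma_e = 1$, the difference $u_{1,\delta} - u_\delta$ satisfies $\Delta w + k^2 w = 0$; introduce a corrector $w_\delta \in H^1_0(B_{r_3})$ solving $\dive(\mathcal{A} \nabla w_\delta) + k^2 \mathcal{B} w_\delta = 0$ off $\partial B_{r_2}$, with $\mathcal{A}, \mathcal{B}$ Lipschitz extensions of $a_e, \sigma_e$ and the prescribed $O(\delta)$ flux jump on $\partial B_{r_2}$, so that $\| w_\delta \|_{H^1(B_{r_3})} \le C\delta\, \D(f,\delta)^{1/2}$, and apply a three-sphere inequality for the Helmholtz operator (the finite-frequency form of \cite[Lemma~1]{Ng-Negative-Cloaking}) to $(u_{1,\delta} - u_\delta)\chi_{B_{r_3} \setminus B_{2r_2}} - w_\delta$ in $B_{r_3}$. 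Choosing $\ell$, hence $r_3/r_2$, large enough that the interpolation exponent exceeds $\beta$ yields
\[
\| u_\delta - u_{1,\delta} \|_{H^{1/2}(\partial B_{2r_2})} + \| \partial_r(u_\delta - u_{1,\delta})|_{ext} \|_{H^{-1/2}(\partial B_{2r_2})} \le C\delta^\beta\, \D(f,\delta)^{1/2}.
\]

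I would then define $\hu_\delta$ by the same formula \eqref{CCM-hu-delta}. The function $\hu_\delta - \hu$ satisfies $\Delta(\hu_\delta - \hu) + k^2(\hu_\delta - \hu) = 0$ off $\partial B_{r_3} \cup \partial B_{2r_2}$, is outgoing (it equals $u_\delta$ outside $B_{r_3}$), and has transmission jumps controlled by the quantity above; the finite-frequency variant of Lemma~\ref{lem-stability-S} then gives $\| \hu_\delta - \hu \|_{H^1(B_R)} \le C_R \delta^\beta\, \D(f,\delta)^{1/2}$. Since $\beta > 1/2$ and $u_\delta = \hu_\delta$ on $\supp f$, the argument of Theorem~\ref{CCM-thm1} applies verbatim to deduce first $\| \hu_\delta \|_{H^1(B_{R_0} \setminus B_{r_3})} \le C \|f\|$, hence $\D(f,\delta) \le C\delta^{-1} \|f\|^2$, and then $\| \hu_\delta - \hu \|_{H^1(B_R)} \le C_R \delta^{\beta - 1/2} \|f\|$.

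The remaining step is the bootstrap, and this is where the finite-frequency case is genuinely more delicate than Theorem~\ref{CCM-thm1}. There one exploits that $\Im \int f \bar \hu = 0$; here $\hu$ is outgoing, so $\Im \int_{\R^d} f \bar \hu$ equals a positive multiple of the radiated far-field energy, which is in general nonzero, and iterating naively with $\D(f,\delta)$ as above stalls at $\D \sim \delta^{-1}\|f\|^2$. The fix is to run the iteration on the power actually dissipated in the lossy shell $B_{r_2}\setminus B_{r_1}$: writing the energy identity for $u_\delta$ on a large ball, that dissipation equals $\Im\int f\bar u_\delta$ up to the radiated power and the power absorbed by the cloaked object, and the crux is to show these last two track their counterparts carried by the limiting fields closely enough --- at a rate $\lesssim \delta^\beta \D(f,\delta)^{1/2}$, via $\| \hu_\delta - \hu\|_{H^1(B_R)} \lesssim \delta^\beta \D(f,\delta)^{1/2}$ and its analogue inside $B_{r_3}$ --- so that only the genuinely small shell dissipation feeds back into $\D$; this is the technical heart of the finite-frequency analysis, available in the sharper form of the stability estimate of \cite{Ng-CALR-frequency}, and I expect its verification to be the main obstacle. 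Granting it, the recursion of Theorem~\ref{CCM-thm1} closes word for word: iterating gives, for $n \ge 1$,
\[
\D(f,\delta) \le C_n \delta^{\beta(1 + \frac12 + \cdots + \frac1{2^{n-1}}) - (\frac12 + \cdots + \frac1{2^n}) - 1} \|f\|^2, \qquad \| \hu_\delta - \hu \|_{H^1(B_R)} \le C_n \delta^{\beta(1 + \cdots + \frac1{2^n}) - (\frac12 + \cdots + \frac1{2^{n+1}})} \|f\|,
\]
and taking $n$ large enough yields \eqref{CCM-key-point-1-2} and \eqref{CCM-key-point-2-2}; since $u_\delta = \hu_\delta$ outside $B_{r_3}$, the convergence follows. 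Besides this, the only other point requiring care is the $\delta$-uniformity of the constants in the finite-frequency stability lemma, which rests on a Rellich/unique-continuation argument at $\delta = 0$ exactly as in the proof of \eqref{claim-sta-1}; the transport of the equation under $F, G$ and the Helmholtz three-sphere inequality are routine adaptations of their quasistatic counterparts.
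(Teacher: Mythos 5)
Your overall architecture (reflections via Lemma~\ref{lem-TO}, the auxiliary function from \eqref{CCM-hu-delta}, finite-frequency variants of Lemmas~\ref{lem-stability} and \ref{lem-stability-S}, then a bootstrap) matches the paper's, but the step you set aside as a ``routine adaptation'' is exactly the point the paper singles out as crucial, and your treatment of it does not yield the statement as written. You propose to obtain the key boundary estimate (the finite-frequency analogue of \eqref{CCM-thm1-p2}) by the quasistatic recipe: a corrector $w_\delta$ plus a three-sphere inequality for $\dive(\mathcal A\nabla \cdot)+k^2\mathcal B$ applied in balls. For variable Lipschitz coefficients with a zeroth-order term, the interpolation exponent in the generally available three-sphere inequalities degrades with the size of the lower-order coefficient, i.e.\ with $k$; since your argument needs that exponent to exceed $\beta$, the thickness condition $r_3>\ell r_2$ you would obtain has $\ell=\ell(k)$, whereas the theorem asserts that $\ell$ depends only on $r_2$, $\alpha$, and the ellipticity and Lipschitz constants of $a_e$. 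This is precisely the obstruction recorded in the remark following the theorem (the naive extension of \cite{Ng-Negative-Cloaking} forces the cloaked region to shrink as $k$ grows). The paper's proof avoids it by a structurally different derivation of \eqref{CCM-thm1-p2}: the Cauchy-data estimate \cite[Theorem 2]{MinhLoc2}, whose exponent depends only on $r_2$ and the ellipticity and Lipschitz constants of $a_e$ (not on $k$), combined with the constant-coefficient interpolation \cite[Lemma 6]{Ng-Cloaking-Maxwell}, in which only the constant $C_k$, and not the exponent $\xi=\ln\big(r_3/(4r_2)\big)/\ln\big(r_3/(2r_2)\big)$, depends on $k$. Unless you import that non-routine input, your argument proves at best a weaker theorem with a $k$-dependent thickness requirement.

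Concerning the bootstrap, you are right that the quasistatic identity $\Im\int f\bar{\hu}=0$ fails here (for the outgoing $\hu$ this imaginary part equals the radiated plus absorbed power), so the iteration of Theorem~\ref{CCM-thm1} cannot be copied verbatim; without some cancellation of the radiated and absorbed powers of $u_\delta$ against those of the limiting field, one stalls at the rate $\delta^{\beta-1/2}=\delta^{(1+2\alpha)/6}$, which is weaker than $\delta^{\alpha}$ as soon as $\alpha>1/4$, and at $\delta^{-1/2}$ for $u_\delta$ itself. Your proposed repair (iterating on the shell dissipation extracted from the energy identity) is the right kind of idea, but you explicitly leave its verification open, so this part of your proposal is a sketch rather than a proof; the paper's text does not spell it out either, deferring to the machinery of \cite{Ng-Complementary, Ng-CALR-frequency, MinhLoc2, Ng-Cloaking-Maxwell}, but a complete write-up must supply it. In short: the two places where the finite-frequency case genuinely differs from Theorem~\ref{CCM-thm1} --- the $k$-independent three-sphere-type estimate and the modified iteration --- are precisely the places where your proposal either substitutes an argument that does not give the stated $k$-independence of $\ell$ or leaves the step unproved.
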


\begin{proof} The proof of Theorem~\ref{CCM-thm2} is in the spirit of Theorem~\ref{CCM-thm1} with a crucial point being the establishment of  \eqref{CCM-thm1-p2} in the finite frequency regime. This can be done as follows. On one hand, we  have, by  \cite[Theorem 2]{MinhLoc2}, 
\begin{equation}\label{SCM-thm2-p1}
\|u_\delta - u_{1, \delta} \|_{\bH(\partial B_{2 r_2})} 
\le C \| \partial_r (u_\delta - u_{1, \delta})|_{ext} \|_{\bH(\partial B_{r_2})}^\alpha 
 \Big( \|u_\delta - u_{1, \delta} \|_{\bH(\partial B_{4 r_2})} + \|u_\delta |_{ext} - u_{1, \delta}  \|_{\bH(\partial B_{r_2})}  \Big)^{1-\alpha}.  
\end{equation}
for some positive constant $\tau$ depending only on $r_2$ and the ellipticity and the Lipschitz of $a_e$. 
Here we denote
$$
\|v \|_{\bH(\partial B_{r})} :=  \|v \|_{H^{1/2}(\partial B_{r})} + \|A \nabla v \cdot \nu \|_{H^{-1/2}(\partial B_{r})}.
$$
On the other hand, we obtain, by \cite[Lemma 6]{Ng-Cloaking-Maxwell}, that 
\begin{equation}\label{SCM-thm2-p2}
\|u_\delta - u_{1, \delta} \|_{\bH(\partial B_{4 r_2})} \le C_k \|u_\delta - u_{1, \delta} \|_{\bH(\partial B_{2 r_2})}^{\xi} \|u_\delta - u_{1, \delta} \|_{\bH(\partial B_{r_3})}^{1 - \xi}, 
\end{equation}
where $\xi = \ln \big(r_3/ (4 r_2) \big)/ \ln \big(r_3/ (2 r_2) \big)$. 
Combining \eqref{SCM-thm2-p1} and \eqref{SCM-thm2-p2} yields \eqref{CCM-thm1-p2} if $\ell$ is sufficiently large. The rest of the proof is in the spirit of Theorem~\ref{SCM-thm1}. The details are omitted. 
\end{proof}

\begin{remark} \rm Previous given proof of cloaking using complementary media \cite{Ng-Negative-Cloaking} can be extended to the finite frequency regime. Nevertheless, the size of the cloaked object (the cloaked region) is small as $k$ is large. In \cite{MinhLoc2}, we extended the approach in \cite{Ng-Negative-Cloaking} for the finite frequency regime in which  the size of the object can be independent of the frequency $k$. In fact, we showed that there exists $\lambda_0 > 1$ depending on the ellipticity and the Lipschitz of $a_e$ such that one can cloak an object inside $B_{\lambda_0 r_2} \setminus B_{r_2}$; nevertheless $\lambda_0$ can be smaller than $2$ but one can choose a large $r_2$  to compensate this.  The proof given here is again in the spirit of the work \cite{Ng-Cloaking-Maxwell} in which cloaking using complementary media for electromagnetic waves is investigated.  
\end{remark}

\begin{remark} \rm Using the change of variables in Lemma~\ref{lem-TO}, one can design a general cloaking scheme in which one does not requires $F$ (and also G) to be a Kelvin transform and the cloaking device is not necessary to be radially symmetric, see \cite{Ng-Cloaking-Maxwell} for a discussion in the electromagnetic setting.  
\end{remark}

\section{Cloaking an object via anomalous localized resonance}\label{sect-CALR-object}

In this section, we present another cloaking technique using NIMs namely cloaking {\it an object} via anomalous localized resonance. The advantage of this cloaking technique over the one using complementary media is that the cloaking devices used here are independent of the cloaked object. This cloaking technique was suggested in \cite{Ng-CALR-object} and inspired from \cite{MiltonNicorovici, Ng-CALR}.
In the two dimensional case, it was shown in \cite{Ng-CALR-object} that the negative index layer of the  lens considered in Section~\ref{sect-SCM} can act like a cloaking device for a finite size object near by, see Figure~\ref{fig-CALR}.  More precisely, in the quasistatic regime, we have 

\begin{theorem}\label{thm-CALR} Let $d=2$,  $0 <  r_0 <  r_1 < r_2 $, $x_1 \in \partial B_{r_1}$, and $x_2  \in \partial B_{r_2}$. Set  $r_3 := r_2^2/ r_1$ and  ${\mathcal C} := \big(B(x_1, r_0) \cap B_{r_1} \big) \cup  \big(B(x_2, r_0) \cap (B_{r_3}  \setminus B_{r_2}) \big)$, assume that $B_{r_3} \subset \subset \Omega$  and let $a_c$ be a symmetric uniformly elliptic matrix-valued function defined in ${\mathcal C}$. 
Define 
\begin{equation}\label{def-AS-CS}
A_c = \left\{\begin{array}{cl} 
a_c &  \mbox{ in } {\mathcal C}, \\[6pt]
I & \mbox{ otherwise},
\end{array}\right. \quad 
\mbox{ and } \quad 
%\begin{equation}\label{def-s-CS}
s_\delta = \left\{\begin{array}{cl}  -1 -  i \delta &  \mbox{ in } B_{r_2} \setminus B_{r_1}, \\[6pt]
1 & \mbox{ otherwise}.  
\end{array}\right. 
\end{equation} 
Given $f \in L^2(\Omega)$ with $\supp f \subset \Omega \setminus B_{r_3}$,  let  $u_\delta, \, \hu \in H^1_0(\Omega)$, respectively, be the unique solution to the equations
\begin{equation}
\dive(A_\delta \nabla u_\delta )    = f \mbox{ in } \Omega  \quad \mbox{ and } \quad 
\Delta \hu     = f \mbox{ in } \Omega. 
\end{equation}
For any $0 < \alpha < 1$, there exists $r_0(\alpha) > 0$ that depends only on $\alpha$, $r_1$, and $r_2$,  such that  if $r_0 < r_0(\alpha)$ then
\begin{equation}\label{result-CS}
\| u_\delta  -  \hu \|_{H^1(\Omega \setminus B_{r_3})} \le C \delta^{\alpha} \| f\|_{L^2(\Omega)}, 
\end{equation}
where $C$ is a positive constant independent of $f$, $\delta$, $r_0$, $x_1$, and $x_2$. 
\end{theorem}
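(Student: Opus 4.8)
The plan is to adapt the architecture of the proofs of Theorems~\ref{SCM-thm1} and~\ref{CCM-thm1} — the a priori estimate of Lemma~\ref{lem-stability}, reflections through $\partial B_{r_2}$ and $\partial B_{r_3}$, the auxiliary function obtained by removing the localized singularity, and the bootstrap used at the end of the proof of Theorem~\ref{CCM-thm1} — to the present setting, in the spirit of \cite{Ng-CALR-object}. Writing $A_\delta := s_\delta A_c$ and $A_0 := s_0 A_c$, the only new feature is the object $\mathcal{C}$: since $A_c = I$ throughout the complementary shell $B_{r_2}\setminus B_{r_1}$, reflecting the lens $-I$ across $\partial B_{r_2}$ produces $-I$ on $B_{r_3}\setminus B_{r_2}$, which formally cancels the background there, and what it fails to cancel is the inclusion $a_c$ on the small set $B(x_2,r_0)\cap(B_{r_3}\setminus B_{r_2})$; likewise the inclusion $a_c$ on $B(x_1,r_0)\cap B_{r_1}$ survives inside $B_{r_1}$. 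Thus, up to defects supported in sets of diameter $O(r_0)$, the configuration is the exact superlensing structure of Theorem~\ref{SCM-thm1} with $a=I$, for which $\hu$ solves $\Delta\hu=f$.

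I would first record the a priori bound $\|u_\delta\|_{H^1(\Omega)}\le C\,\D(f,\delta)$ from Lemma~\ref{lem-stability}, where $\D(f,\delta):=\delta^{-1}\big|\Im\int_\Omega f\bar u_\delta\big|+\|f\|_{L^2(\Omega)}$ is the quantity already used in the proof of Theorem~\ref{CCM-thm1}. Next, following the proof of Theorem~\ref{SCM-thm1}, I introduce the reflections $u_{1,\delta}=u_\delta\circ F^{-1}$ on $\R^2\setminus B_{r_2}$ and $u_{2,\delta}=u_{1,\delta}\circ G^{-1}$ on $B_{r_3}$ (with $F$, $G$ the Kelvin transforms of $\partial B_{r_2}$, $\partial B_{r_3}$), and use Lemma~\ref{lem-TO} to transfer the equations and the transmission conditions along $\partial B_{r_2}$ and $\partial B_{r_3}$. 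Because of the two inclusions, $u_{1,\delta}$ and $u_{2,\delta}$ no longer satisfy homogeneous equations but acquire a source supported in the (small) reflected image of $B(x_1,r_0)\cap B_{r_1}$, and the transmission data on $\partial B_{r_2}$, $\partial B_{r_3}$ acquire, in addition to the usual $\delta\,\partial_r u_{1,\delta}$ terms, the contribution of the inclusion near $x_2$. I then define the auxiliary function $\hu_\delta$ exactly as in~\eqref{SCM-hu-delta}/\eqref{CCM-hu-delta}, cutting along a sphere separating the two inclusions from the interfaces, so that $\hu_\delta-\hu$ solves a transmission problem whose data consists of the loss terms and the object defect.

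The heart of the matter — and the step I expect to be the main obstacle — is to show that the anomalous resonance, which may drive $\|\nabla u_\delta\|_{L^2(B_{r_2}\setminus B_{r_1})}$ to infinity as $\delta\to 0$, stays \emph{localized} near $\mathcal{C}$ and does not leak into $\Omega\setminus B_{r_3}$; equivalently, to estimate $u_\delta-u_{1,\delta}$ near $\partial B_{r_3}$. The key geometric point is that each piece of $\mathcal{C}$ is contained in a ball of radius $r_0$, so it is surrounded, inside the respective region $B_{r_1}$ or $B_{r_3}\setminus B_{r_2}$, by an annulus of modulus $\gtrsim 1/r_0$ on which $u_\delta$ is harmonic; a three-sphere (propagation of smallness) inequality for harmonic functions as in \cite[Lemma~1]{Ng-Negative-Cloaking}, applied in these large clean annuli, then attenuates the object's contribution with an exponent $\beta = \beta(r_0)$ that tends to $1$ as $r_0\to 0$. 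Combining this with the a priori bound and with the now-standard estimate of the loss terms (as in the proof of Theorem~\ref{CCM-thm1}) yields $\|\hu_\delta-\hu\|_{H^1(\Omega)}\le C\,\delta^{\beta}\,\D(f,\delta)^{1/2}$; the hypothesis $r_0<r_0(\alpha)$ is exactly what forces $\beta\ge(1+\alpha)/2$. The delicate bookkeeping is to verify that, after the two reflections, the object's defect genuinely passes through these large annuli and not across the interfaces $\partial B_{r_1},\partial B_{r_2}$, which is where the specific placement of the two pieces of $\mathcal{C}$ (one inside $B_{r_1}$, one in $B_{r_3}\setminus B_{r_2}$, each touching only one interface) is used.

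Finally, since $\hu_\delta=u_\delta$ outside $B_{r_3}$ and $\Im\int_\Omega f\bar\hu=0$ (multiply the equation for $\hu$ by $\bar\hu$ and take the imaginary part), the estimate just obtained feeds back into $\D(f,\delta)$ to give $\D(f,\delta)\le C\,\delta^{\beta-3/2}\|f\|_{L^2(\Omega)}^2$; reinserting it and iterating exactly as at the end of the proof of Theorem~\ref{CCM-thm1}, one obtains after $n$ steps a bound whose exponent converges to $2\beta-1\ge\alpha$. This yields~\eqref{result-CS}, hence $u_\delta\to\hu$ in $H^1(\Omega\setminus B_{r_3})$ as $\delta\to 0$; the independence of $C$ from $x_1$ and $x_2$ comes from the fact that all the constants above depend only on $r_0,r_1,r_2$ and on the ellipticity of $a_c$.
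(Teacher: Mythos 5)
Your overall architecture (Lemma~\ref{lem-stability} and the quantity $\D(f,\delta)$, the reflections $u_{1,\delta}=u_\delta\circ F^{-1}$, $u_{2,\delta}=u_{1,\delta}\circ G^{-1}$, the removal of the localized singularity via an auxiliary $\hu_\delta$ defined on domains that detour around the defects, and the final bootstrap with $\beta\ge(1+\alpha)/2$, i.e. limit exponent $2\beta-1\ge\alpha$) coincides with the paper's proof. But the step you yourself flag as the main obstacle is handled by a mechanism that does not work as stated, and it is precisely the step the paper treats as the new key ingredient, namely the estimates \eqref{claim1-pro2}--\eqref{claim2-pro2}. Your plan is to surround each piece of ${\mathcal C}$ by a ``clean'' annulus of large modulus \emph{inside} $B_{r_1}$, respectively inside $B_{r_3}\setminus B_{r_2}$, and apply the standard three-sphere inequality of \cite[Lemma 1]{Ng-Negative-Cloaking} there. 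No such annulus exists: $x_1\in\partial B_{r_1}$ and $x_2\in\partial B_{r_2}$, so each inclusion \emph{touches} an interface of the negative-index shell (and the doubly reflected image of the piece at $x_1$ touches $\partial B_{r_3}$ at $x_3=F(x_1)$). Consequently any annulus surrounding an inclusion crosses the interface, where the quantities you must estimate are not even harmonic functions of a single region: $u_{1,\delta}-u_\delta$ lives only in $B_{r_3}\setminus B_{r_2}$, and what you know about it from Lemma~\ref{lem-TO} is Cauchy-type smallness of order $\delta$ only on $\partial B_{r_2}\setminus\partial S$, i.e. on the part of the circle \emph{away} from the defect (see \eqref{transmission1-pro2}--\eqref{transmission2-pro2}); similarly for $u_{2,\delta}-u_{1,\delta}$ near $\partial B_{r_3}$. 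For the same reason your description of $\hu_\delta$ as obtained by ``cutting along a sphere separating the two inclusions from the interfaces'' is geometrically impossible; the cuts must be the non-spherical boundaries $\partial O_2$, $\partial O_3$ that bulge around $x_2$ and $x_3$.

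What is actually needed is a quantitative propagation of smallness from Cauchy data prescribed only on part of the boundary circle, carried around the defect — the ``three-sphere inequality with partial information'' of \cite[Section 3.1]{Ng-CALR-object}, whose proof uses conformal maps (and is the reason the result is only known in two dimensions, as the remark following the theorem points out); it is also where the condition $r_0<r_0(\alpha)$ enters, by making the exponent in that inequality at least $\beta=(2+\alpha)/3$. An interior three-sphere inequality in concentric annuli cannot substitute for it, so as written your argument has a genuine gap at its core, even though everything before (the reflections and transmission identities) and after (the estimate $\|\hu_\delta-\hu\|_{H^1(\Omega\setminus\partial O)}\le C\delta^{\beta}\D(f,\delta)^{1/2}$, the observation $\Im\int_\Omega f\bar\hu=0$, and the iteration borrowed from the proof of Theorem~\ref{CCM-thm1}) is correct and matches the paper.
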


\begin{figure}
\centering
\begin{tikzpicture}[scale=1]

\draw[fill = blue] ({cos(240)},{sin(240)}) circle (0.2cm);

\draw[fill = blue] ({1.8*cos(150)}, {1.8*sin(150)}) circle (0.2cm);

\draw[line width = 0.8cm, color = red] (0,0) circle (1.4cm);

\draw (0,0) circle (1.8cm);

\draw (0,0) circle (1.cm);

%\draw[fill= white] (0,0) circle (1cm);

\draw[->] (0,0) -- (1,0);
\draw (0.7, 0) node[below]{$r_1$};

\draw[->] (0,0) -- ({1.8*cos(60)}, {1.8*sin(60)});
\draw ({1.2*cos(60)}, {1.2*sin(60)}) node[right]{$r_2$};

\draw (-0.5, 0) node[]{$I$};

\draw (0., -1.5) node[]{$-I$};

\draw (-2.5, 0) node[]{$I$};

\draw (4, 0) node[]{$\Longleftrightarrow$};
 
\draw (6.2, 0) node[]{$I$};

\end{tikzpicture}
\caption{The red lens layer will cloak the blue region ${\mathcal C}$. An observer outside $B_{r_3}$ $(r_3 = r_2^2/ r_1)$ sees neither the red layer nor the blue regions.}
\label{fig-CALR}
\end{figure}
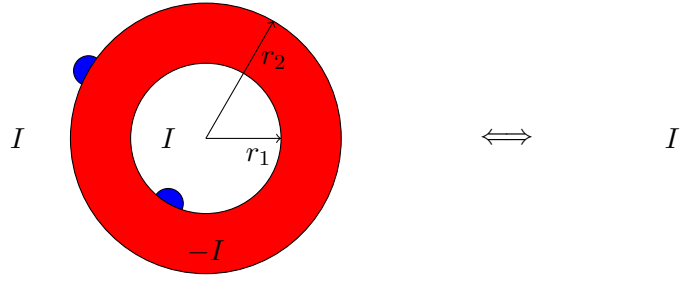

\begin{proof}
Set 
$$
\beta = (2 + \alpha)/3. 
$$
We have, by Lemma~\ref{lem-stability},  
\begin{equation}\label{part0-thm-CS}
\| u_\delta\|_{H^1(\Omega)} \le C \D(f, \delta), 
\end{equation}
where 
\begin{equation}\label{def-D-CS}
\D(f, \delta) :=  \frac{1}{\delta} \left| \Im \int_{\Omega} f \bar u_\delta \right| + \| f \|_{L^2(\Omega)}^2. 
\end{equation}
As in the proof of Theorem~\ref{SCM-thm1}, define $u_{1, \delta} \in H^1_{\loc}(\R^2 \setminus B_{r_2})$ and $u_{2, \delta} \in H^1(B_{r_3})$  as follows 
\begin{equation*}
u_{1, \delta} = u_\delta \circ F^{-1}  \mbox{ in } \R^2 \setminus B_{r_2} \quad \mbox{ and } \quad u_{2, \delta} = u_{1, \delta} \circ G^{-1}  \mbox{ in } B_{r_3}.  
\end{equation*}

Set 
$$
S = \big( B_{r_3} \setminus B_{r_2} \big)  \cap \big(B(x_2, r_0) \cup G \circ F (B(x_1, r_0) \cap B_{r_1}) \big).  
$$
By Lemma~\ref{lem-TO}, we have 
\begin{equation}\label{transmission1-pro2}
u_{1, \delta} -  u_\delta = 0  \quad \mbox{ and }  \quad \partial_r u_{1, \delta} |_{ext} - \partial_r u_{\delta}  = - i \delta \partial_r u_{1, \delta} |_{ext} \mbox{ on } \partial B_{r_2} \setminus \partial S,  
\end{equation}
and 
\begin{equation}\label{transmission2-pro2}
u_{2, \delta} - u_{1, \delta}  = 0 \quad  \mbox{ and }  \quad  \partial_r u_{2, \delta} |_{int}  - \partial_r u_{1, \delta} = i \delta \partial_r u_{1, \delta} |_{int} \mbox{ on } \partial  B_{r_3} \setminus \partial S. 
\end{equation}
Applying Lemma~\ref{lem-TO}, we obtain
$$
\Delta  u_{1, \delta}  = 0 \mbox{ in } B_{r_3} \setminus B_{r_2}
$$
and 
$$
\Delta u_{2, \delta}  = 0 \mbox{ in } B_{r_3}  \setminus \big( (G \circ F) (B(x_1, r_0) \cap B_{r_1}) \big).  
$$
Recall that 
\begin{equation}\label{haha121}
\Delta u_{\delta} = 0 \mbox{ in } (B_{r_3} \setminus B_{r_2} ) \setminus {\mathcal C}. 
\end{equation}
Denote $x_3 \in \partial B_{r_3}$  the image of $x_1$ by $F$. 
The new key ingredient in comparison with the approach used in the proof of Theorem~\ref{CCM-thm1} is the fact that there exist two constants $R_2, R_3 > 0$ such that if $r_0$ is small enough and if one defines 
$$
O_{2} = B_{r_2} \cup \{|z - x_2| <  R_2 \}, \quad O_{3} = B_{r_3} \setminus \{|z - x_3| <  R_3 \}, \quad \mbox{ and } \quad O = O_3 \setminus O_2,  
$$
then 
\begin{equation} \label{claim1-pro2}
\| u_{1,\delta} - u_{\delta}\|_{H^{1/2}(\partial O_{2})} + \| \partial_r (u_{1,\delta} - u_{\delta}) \|_{H^{-1/2}(\partial O_{2})}  \le C \delta^{\beta} \| u_{\delta}\|_{H^1(B_{r_3})}
\end{equation}
and
\begin{equation} \label{claim2-pro2}
\| u_{2,\delta} - u_{1, \delta}\|_{H^{1/2}(\partial O_{3})} + \| \partial_r (u_{2,\delta} - u_{1, \delta}) \|_{H^{-1/2}(\partial O_{3})}  \le C \delta^{\beta} \| u_{\delta}\|_{H^1(B_{r_3})}.  
\end{equation}
The details of the proof of this fact, which are out of the scope of this survey,  are given in \cite[Section 3.1]{Ng-CALR-object}. 
Define 
\begin{equation}\label{def-Udelta-1}
\hu_\delta  = \left\{\begin{array}{cl} u_\delta & \mbox{ in } \Omega \setminus O_3, \\[6pt]
u_{2, \delta} -   (u_{1, \delta} - u_{\delta}) & \mbox{ in } O, \\[6pt]
u_{2, \delta} &  \mbox{ in } O_2. 
\end{array}\right. 
\end{equation}
Then, $\hu_\delta \in H^1 \big(\Omega \setminus \partial O \big)$ with $\hu_\delta = 0$ on $\partial \Omega$ is a solution of the equation
\begin{equation*}
\Delta \hu_\delta = f \mbox{ in } \Omega \setminus  \partial O. 
\end{equation*}
This implies, by \eqref{claim1-pro2} and \eqref{claim2-pro2}, 
\begin{equation}\label{part0-pro2}
\|\hu_\delta - \hu \|_{H^1(\Omega \setminus \partial O)}
\le C \delta^{\beta} \| u_\delta \|_{H^1(B_{r_3})}. 
\end{equation}
Since $\beta > 1/2$,  it follows from \eqref{def-D-CS}  that 
\begin{equation}\label{t1}
\| \hu_\delta\|_{H^1(\Omega \setminus \partial O)} \le  C  \|f \|_{L^2(\Omega)}. 
\end{equation}
This in turn implies that 
\begin{equation}\label{t1}
\D(f, \delta) \le  C \delta^{- 1}  \|f \|_{L^2(\Omega)}^2
\end{equation}
and
\begin{equation}\label{ud-improve2}
\|\hu_\delta - \hu \|_{H^1(\Omega \setminus \partial O)}
\le C \delta^{\beta - 1/ 2} \| f\|_{L^2(\Omega)}. 
\end{equation}
Involving  the arguments used in the last part of the proof of Theorem~\ref{CCM-thm1}, we have, for $n \ge 1$, 
\begin{equation*}
\D(f, \delta) \le  C \delta^{\beta (1 + .. + 1/2^{n-1}) - (1 + .. + 1/2^n) - 1}  \|f \|_{L^2(\Omega)}^2
\end{equation*}
and
\begin{equation*} 
\|\hu_\delta - \hu \|_{H^1(\Omega \setminus \partial O)}
\le C \delta^{\beta (1 + .. + 1/2^n) - (1/ 2 + .. + 1/2^{n+1})} \| f\|_{L^2(\Omega)}. 
\end{equation*}
The conclusion follows by taking $n$ sufficiently large. 
\end{proof}

\begin{remark} \rm As mentioned, one of the key ingredients are \eqref{claim1-pro2} and \eqref{claim2-pro2}. This is based on a three-sphere inequality with a partial information, see \cite[Section 3.1]{Ng-CALR}. The proof of this result also involves the properties of conformal maps. A variant of these inequalities holds for the Helmholtz equation in two dimensions. Due to the use of the conformal maps in two dimensions, we do not know if the variants of \eqref{claim1-pro2} and \eqref{claim2-pro2} hold for three dimensions. Nevertheless, a modification of the cloaking construction can be made to obtain a cloaking device that can cloak some  finite region near by. The modification is based on the concept of doubly complementary media that was first introduced in \cite{Ng-CALR} with its roots in \cite{Ng-Complementary}.  The interested reader can find a detailed discussion in  \cite{Ng-CALR-object}.  
\end{remark}

Invoking   ideas similar to those in the proof of Theorem~\ref{thm-CALR}, we establish,  see \cite[Proposition 3.2]{Ng-CALR-object}, that  

\begin{proposition}\label{pro-CALR} Let $d=2$, $0< \delta < 1$,  $0 < r_0 <  r_1 < r_2 $,  and $x_3 \in \partial B_{r_3}$ with $r_3 = r_2^2/ r_1$. Assume that $B_{r_3} \subset \subset \Omega$ and let $f \in L^{2}(\Omega)$ with $\supp f \subset \Omega \setminus  B_{r_{3}}$. 
Let $a_c$ be a symmetric uniformly elliptic matrix-valued function defined in $B(x_3, r_0) \cap B_{r_3}$.
Let $u_\delta \in H^1_0(\Omega)$ be the unique solution of \eqref{sys-CCM1} where 
\begin{equation*}
\dive(s_\delta A \nabla u_\delta )  = f \mbox{ in } \Omega. 
\end{equation*}
Here $(A, \Sigma)$ is given by \eqref{CCM-defA-ss} where 
\begin{equation*}
a_e = \left\{\begin{array}{cl} a_c &  \mbox{ in } B(x_3, r_0) \cap B_{r_3}, \\[6pt]
I & \mbox{ in } (B_{r_3} \setminus B_{r_2}) \setminus B(x_3, r_0).  
\end{array}\right.
\end{equation*}
There exists $r_* > 0$ depending only on $r_1$ and $r_2$ such that if $r_0 < r_*$, then 
\begin{equation}\label{tt1}
u_\delta  \to \cU \mbox{ in } L^2(\Omega \setminus B_{r_3}). 
\end{equation}
Here $\cU \in H^1_{0}(\Omega)$ is the unique  outgoing solution to the equation  
\begin{equation}\label{tt2}
\dive(\cA \nabla \cU)  = f \mbox{ in } \Omega, \mbox{ where } \cA = \left\{\begin{array}{cl} a_c& \mbox{ in } B(x_0, r_0) \cap B_{r_3}, \\[6pt]
I& \mbox{ otherwise}. 
\end{array}\right.
\end{equation}
\end{proposition}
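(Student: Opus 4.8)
The plan is to rerun the argument of Theorem~\ref{thm-CALR} almost verbatim, the one decisive change being that the auxiliary function is now built so as to \emph{keep} the object $a_c$ pressed against $x_3\in\partial B_{r_3}$ instead of cloaking it; this is exactly why the limit is the medium $\cA$ of \eqref{tt2} --- negative-index shell and core invisible, the object still there --- rather than the homogeneous one. The three ingredients are the same as in Theorem~\ref{thm-CALR}: the stability estimate of Lemma~\ref{lem-stability}, giving $\|u_\delta\|_{H^1(\Omega)}\le C\,\D(f,\delta)$ with $\D(f,\delta):=\delta^{-1}\bigl|\Im\int_\Omega f\bar u_\delta\bigr|+\|f\|_{L^2(\Omega)}^2$; the change of variables rule of Lemma~\ref{lem-TO} applied to the two Kelvin reflections; and a three-sphere inequality with partial information, from \cite[Section~3.1]{Ng-CALR-object}.

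First I would set $u_{1,\delta}=u_\delta\circ F^{-1}$ on $\R^2\setminus B_{r_2}$ and $u_{2,\delta}=u_{1,\delta}\circ G^{-1}$ on $B_{r_3}$. Using Lemma~\ref{lem-TO} together with the design of the device --- the complementary layer $F^{-1}_*a_e$ in $B_{r_2}\setminus B_{r_1}$, the core filled by $(r_3^2/r_2^2)^{d-2}I=I$ in $B_{r_1}$ (here $d=2$), and the identity $G_*F_*A=I$ in $B_{r_3}$ that motivated that choice --- one checks, as for Theorem~\ref{SCM-thm1}, that $\dive(a_e\nabla u_{1,\delta})=0$ in $B_{r_3}\setminus B_{r_2}$, that $\Delta u_{2,\delta}=0$ in \emph{all} of $B_{r_3}$, and that on $\partial B_{r_2}$ and on $\partial B_{r_3}$ the pairs $(u_\delta,u_{1,\delta})$ and $(u_{1,\delta},u_{2,\delta})$ match up to a Cauchy defect of order $\delta\,\partial_ru_{1,\delta}$. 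The feature special to the present placement is that $a_e=a_c$ lives in the shell and touches $\partial B_{r_3}$, while neither $F^{-1}$ nor $(G\circ F)^{-1}$ carries a neighbourhood of $x_3$ into the core $B_{r_1}$; hence $u_{2,\delta}$ is genuinely harmonic throughout $B_{r_3}$, whereas $u_\delta$ and $u_{1,\delta}$ both solve $\dive(a_c\nabla\,\cdot\,)=0$ in a fixed small neighbourhood of $x_3$ in the shell.

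Fixing $\beta\in(\tfrac12,1)$ and a fixed ball $\{|z-x_3|<R\}$, I would then apply the partial-information three-sphere inequality of \cite[Section~3.1]{Ng-CALR}: for $r_0$ below a threshold $r_*=r_*(r_1,r_2)$ the $O(\delta)$ Cauchy data of $u_{1,\delta}-u_\delta$ on $\partial B_{r_2}$ and of $u_{2,\delta}-u_{1,\delta}$ on $\partial B_{r_3}$ propagate, through the region where these differences solve a divergence-form elliptic equation, to
\begin{equation*}
\|u_{1,\delta}-u_\delta\|_{H^{1/2}(\partial O_2)}+\|\partial_r(u_{1,\delta}-u_\delta)\|_{H^{-1/2}(\partial O_2)}+\|u_{2,\delta}-u_{1,\delta}\|_{H^{1/2}(\partial O_3)}+\|\partial_r(u_{2,\delta}-u_{1,\delta})\|_{H^{-1/2}(\partial O_3)}\le C\,\delta^{\beta}\,\|u_\delta\|_{H^1(B_{r_3})},
\end{equation*}
with $O_2=B_{r_2}$, $O_3=B_{r_3}\setminus\{|z-x_3|<R\}$ and $O=O_3\setminus O_2$. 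Then I would set
\begin{equation*}
\hu_\delta=\begin{cases}u_\delta & \text{in }\Omega\setminus O_3,\\[2pt] u_\delta-(u_{1,\delta}-u_{2,\delta}) & \text{in }O,\\[2pt] u_{2,\delta} & \text{in }O_2,\end{cases}
\end{equation*}
so that $\hu_\delta\in H^1(\Omega\setminus\partial O)$, $\hu_\delta=0$ on $\partial\Omega$, and --- because $\hu_\delta=u_\delta$ on the excised ball, which contains the object --- $\hu_\delta$ solves $\dive(\cA\nabla\hu_\delta)=f$ in $\Omega\setminus\partial O$ for exactly the $\cA$ of \eqref{tt2}. Lemma~\ref{lem-stability-S} applied to $\hu_\delta-\cU$ turns the displayed bound into $\|\hu_\delta-\cU\|_{H^1(\Omega\setminus\partial O)}\le C\delta^{\beta}\|u_\delta\|_{H^1(B_{r_3})}$; since $\beta>\tfrac12$, $u_\delta=\hu_\delta$ on $\Omega\setminus B_{r_3}\supset\supp f$, and $\Im\int_\Omega f\bar\cU=0$ (as $\cA$ is real symmetric), the iteration carried out at the end of the proof of Theorem~\ref{CCM-thm1} now runs and yields $u_\delta=\hu_\delta\to\cU$ in $H^1(\Omega\setminus\bar B_{r_3})$, in particular in $L^2(\Omega\setminus B_{r_3})$; inside $B_{r_3}$ the resonant piece $u_{1,\delta}-u_{2,\delta}$ absorbed into $\hu_\delta$ on $O$ may make $\|u_\delta\|_{H^1(\Omega)}$ blow up, which is why \eqref{tt1} is asserted only off $B_{r_3}$.

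The principal obstacle is the partial-information three-sphere inequality. In contrast with Theorem~\ref{CCM-thm1}, where an ordinary three-sphere inequality on round annuli suffices, here Cauchy smallness must be propagated across a shell from whose outer sphere $\partial B_{r_3}$ a region carrying the object --- merely uniformly elliptic, with no regularity --- has been deleted, and one needs the inequality to hold with constant and exponent $\beta>\tfrac12$ uniform as $r_0\downarrow0$; this is exactly the sharp two-dimensional, conformal-map-based estimate of \cite[Section~3.1]{Ng-CALR}, and it is what restricts the statement to $d=2$. A second delicate point is the bookkeeping that makes the scheme \emph{fail}: one must verify that, with the object at $\partial B_{r_3}$, the two reflections leave $B_{r_3}$ --- hence $u_{2,\delta}$ --- object-free, so that no localized singularity is relayed into the core and the reconstructed medium is exactly $\cA$; it is this asymmetry with the $x_1,x_2$ placements of Theorem~\ref{thm-CALR} that turns what would be a cloak into a non-cloak.
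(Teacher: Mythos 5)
Your proposal is correct and follows essentially the route the paper itself indicates for Proposition~\ref{pro-CALR} (adapting the proof of Theorem~\ref{thm-CALR} as in \cite[Proposition 3.2]{Ng-CALR-object}): the two Kelvin reflections $u_{1,\delta},u_{2,\delta}$, the $O(\delta)$ Cauchy defects away from $x_3$, the partial-data three-sphere inequality of \cite[Section 3.1]{Ng-CALR-object} giving the threshold $r_*$ and the exponent $\beta>1/2$, the auxiliary $\hu_\delta$ kept equal to $u_\delta$ on the excised neighbourhood of $x_3$ so that the limiting medium is exactly $\cA$, and then Lemma~\ref{lem-stability-S} together with the $\Im\int_\Omega f\bar\cU=0$ iteration. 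One cosmetic slip: $(G\circ F)^{-1}$ is the dilation $x\mapsto (r_1/r_3)x$ and so does carry a neighbourhood of $x_3$ into $B_{r_1}$; the correct (and sufficient) reason that $u_{2,\delta}$ is harmonic in all of $B_{r_3}$ is simply that the object lies in the shell and not in the core, while $u_{2,\delta}(x)=u_\delta(r_1x/r_3)$ samples $u_\delta$ only in $B_{r_1}$, where the medium is $I$.
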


From \eqref{tt1} and \eqref{tt2}, one concludes that the object in $B_{r_3} \setminus B_{r_2}$ is not cloaked by its complementary medium in $B_{r_2} \setminus B_{r_1}$ as suggested in  \cite{LaiChenZhangChanComplementary}  and as is  usually accepted in the literature.

\section{Electromagnetic wave propagation in media consisting of dispersive metamaterials} \label{sect-Maxwell}

The fundamental Maxwell's equations -- without source --  are
\begin{equation}\label{eq:EM-G}
\left\{ \begin{aligned}
& \partial_t D(t, x) = \nabla \times   H(t, x), \\
& \partial_t  B(t, x) =- \nabla \times  E(t, x),
\end{aligned} \right. \quad  \mbox{ for } t \in \R,\ x \in \R^3,
\end{equation}
where $ E \in \R^3$ (resp. $ H \in \R^3$) is the electric (resp. magnetic) field and $ D\in \R^3$ (resp. $ B\in \R^3$) is the electric (resp. magnetic) induction field.  In order to close the system \eqref{eq:EM-G}, one adds constitutive relations that express $D$ and $B$ as functions of $E$ and $H$. For dispersive media, these relations are frequency dependent.  Taking these constitutive relations into account, the corresponding system of \eqref{eq:EM-G} in the time domain has the form   
\begin{equation}\label{eq:EM-G-1}
\left\{ \begin{aligned}
& \eps_{rel}(x) \partial_t E(t, x) + (\lambda_{ee} * E)(t, x) +  (\lambda_{em} *H)(t, x) =  \nabla \times H(t, x), \\
&  \mu_{rel}(x) \partial_t H(t, x)+  (\lambda_{me}* E)(t, x) + (\lambda_{mm}*H)(t,x)   = -\nabla \times E(t, x),
\end{aligned}\right.
 \quad  t \in \R,\  x \in \R^3,
\end{equation}
where $*$ stands for the convolution with respect to time $t$. Here the following conventions/assumptions are imposed on $\eps_{rel}$, $\mu_{rel}$, and  $\lambda_{ij}$ for $i, j \in \{e, m \}$: 
\begin{equation}\label{epsmu}
\mbox{$\eps_{rel}$ and $\mu_{rel}$ are two  $3 \times 3$ real symmetric uniformly elliptic matrices defined in $\R^3$.}
\end{equation}
and
\begin{equation}\label{assumption-lambda}
\widehat{\lambda_{i j}},  \, \lambda_{i j} \in L^1_{\loc}\big(\R, L^\infty( \R^3)^{3 \times 3} \big), \ \mbox{ and } \ \lambda_{ij}  \mbox{ is real-valued},  \qquad \mbox{ for } (i, j) \in \big\{e, m \big\}^2. 
\end{equation}
 In this section, for a time-dependent quantity $X(t, x)$, its temporal Fourier transform  is given by
\begin{equation}\label{FT}
\widehat{ X}(\omega, x):= \frac{1}{\sqrt{2 \pi}}\int_{\R} X(t, x) e^{i\omega t}\, d t,\quad  \mbox{ for } \omega \in \R,\  x \in \R^3.
\end{equation}
Let  $\chi_{ij}$ be the susceptibilities that characterizes the dispersive effects of the medium. The connection between $\lambda_{ij}$ and $\chi_{ij}$ is 
\begin{equation}\label{def-lambda}
\widehat{\lambda_{ij}}(\omega, x) := - i \omega \widehat{\chi_{ij}}(\omega, x), \quad \mbox{ for } (i,j) \in \{e,m\}^2,\ \omega \in \R,\  x \in \R^3. 
\end{equation}
 The permittivity $\eps$ and the permeability $\mu$ of the medium are given  by
\begin{equation}
\widehat{\eps} : = \eps_{rel} + \widehat \chi_{ee} \quad \mbox{ and } \quad \widehat{\mu}: = \mu_{rel} + \widehat \chi_{mm}.
\end{equation}
The details of deriving \eqref{eq:EM-G-1} from \eqref{eq:EM-G} using the appropriate assumptions on dispersive media are given in \cite[Section 2]{Ng-Vinoles}. 

\medskip 
Two fundamental assumptions physically relevant to the model,  causality and passivity,  are imposed.  

\medskip 
\noindent {\bf Causality}: the effect
cannot precede the cause, i.e., the present states of the system depend only on its states in the past. Mathematically, one requires 
\begin{equation}\label{eq:causality}
\lambda_{ij}(t) = 0, \qquad \text{for all $t<0$ and for all $(i,j) \in \{e,m\}^2$}.
\end{equation}

Under this assumption, we have, for $(i, j)  \in \{e, m \}^2$, 
\begin{equation}\label{eq-causality1}
(\lambda_{ij} *  X)(t,\cdot) = \int_{-\infty}^t \lambda (t-\tau,\cdot) X(\tau,\cdot)\,  d \tau = \int_{0}^\infty \lambda (\tau,\cdot) X(t - \tau,\cdot)\,  d \tau,\quad  \mbox{ for } t \in \R. 
\end{equation}

%for $t >0$ and $(\lambda,X) \in \{ (\lambda_{ee}, E),(\lambda_{em},H),(\lambda_{me},E),(\lambda_{mm},H)\}$. 

%Furthermore, one need to prescribe the values of $E$ and $H$ not only at $t= 0$ (initial conditions) but also for all $t < 0$. In the rest of this paper we consider the simpler case $E(t,\cdot) = H(t,\dot)=0$ for all $t<0$ but our analysis can be extended to cases where $E$ and $H$ are not zero for $t<0$.

%\subsection{Passivity} \label{sec:passivity}
\medskip 
\noindent {\bf Passivity}: One assumes,  for almost every $x \in \R^3$, for almost every $\omega \in \R$, and for all $X \in \mathbb{C}^6$ \footnote{Here $\C$ denotes the set of complex numbers.}, that\footnote{Here $\cdot$ stands for the Euclidean scalar product in $\C^6$.}
\begin{equation}\label{eq:passivity}
\real \left( \begin{bmatrix}
\widehat{\lambda_{ee}}(\omega,x) & \widehat{\lambda_{em}}(\omega,x) \\ 
\widehat{\lambda_{me}}(\omega,x) & \widehat{\lambda_{mm}}(\omega,x)
\end{bmatrix} X \cdot \overline X \right) \ge 0, 
\end{equation}
Assumption \eqref{eq:passivity} means that the medium is dissipative, i.e., it 
does not produce electromagnetic energy by itself.  

%We emphasize that  no assumption on the sign of the real part of the $\chi_{ij}$ in  \eqref{eq:passivity-1}
%is required (or equivalently on the imaginary part of the $\lambda_{ij}$ in \eqref{eq:passivity}). Moreover, no symmetry on the $\chi_{ij}$ (or equivalently on the $\lambda_{ij}$) is assumed.  

%\medskip 
%Some comments on these assumptions are in order in the anisotropic case ($\lambda_{em} = \lambda_{me}  = 0$) and in the frequency domain. It is possible for some frequencies that $\widehat \eps$ and $\widehat \mu$ are both negative in some regions.
%This corresponds to NIMs (see Lorentz' and Drude's models below). It is also possible that $\widehat \eps$ and $\widehat \mu$ have both positive and negative eigenvalues in some region. In this case, one deals with HMMs.  

In the anisotropic case $(\chi_{em} = \chi_{me} = 0)$, condition~\eqref{eq:passivity} is equivalent to\footnote{Here for a $3 \times 3$ matrix $A$, we denote $A \le 0$ if $A x \cdot x \le 0$ for all $x \in \R^3$.}
\begin{equation}\label{sign-eps-mu}
\omega \imag \widehat \eps(\omega), \ \omega \imag \widehat \mu(\omega)  \ge 0,\quad  \mbox{ for almost all } \omega \in \R. 
\end{equation}
Condition~\eqref{sign-eps-mu} ensures that when small loss is added, the problem associated with the outgoing (Silver-M\"uller) condition at infinity is well-posed (see,  e.g., \cite{Ng-Superlensing-Maxwell}). Adding a small loss is the standard mechanism to study phenomena related to metamaterials in the frequency domain. 
Nevertheless, condition~\eqref{sign-eps-mu} does not exclude the ill-posedness in the frequency domain when the loss is 0 (see \cite[Proposition 2]{Ng-WP}). As one sees later, even if the problem is ill-posed in the frequency domain for some frequency, the well-posedness is roughly ensured for the problem in the time domain under   the causality and passivity conditions mentioned above (see Theorem~\ref{thm-WP}). 

%Under suitable technical assumptions,  the passivity assumption  \eqref{eq:passivity} implies the causality property \eqref{eq:causality} (see for instance \cite[Appendix B]{nussenzveig1972causality}). 

% by itself (see in particular the energy estimate \eqref{eq:approximateAPrioriEstimate} in Theorem \ref{thm:wellPosedness}). 
%We refer to \cite{cassier2017mathematical} and references therein for an advanced discussion about the different formulations of passivity (and their link with the theory of the so-called \emph{Herglotz functions}).

%\subsection{The Lorentz' and the Drude's models} 

\medskip 
One of typical classes of dispersive anisotropic media ($\chi_{me}  = \chi_{em} = 0$) satisfying condition \eqref{assumption-lambda},  the causality \eqref{eq:causality} and the passivity \eqref{eq:passivity} 
is the class of  media obeying \emph{Lorentz' model}. For a homogeneous isotropic medium, the susceptibilities $\chi_{ee}$ and $\chi_{mm}$ are of the form (see e.g.,  \cite[(7.51)]{jackson})
\begin{equation}\label{eq:LorentzModelFrequencyDomain}
\widehat{\chi}(\omega) = \sum_{\ell=1}^n \frac{\omega_{p,\ell}^2}{\omega_{0,\ell}^2 - \omega^2  - 2i \gamma_\ell \omega}\, I,\quad  \mbox{ for } \omega \in \R, 
\end{equation}
where $\omega_{p,\ell}$ (resp. $\omega_{0,\ell}$ and $\gamma_\ell$) are positive (resp. non-negative)  material constants (recall that $I$ is the identity matrix). Using the residue theorem, one can  show (see e.g.,  \cite[(7.110)]{jackson}) that for $t \in \R$ one has
\begin{equation}\label{eq:LorentzModelTimeDomain}
\chi(t) = \sqrt{2 \pi} \theta(t) \sum_{\ell=1}^n \omega_{p,\ell}^2\, \frac{\sin(\nu_\ell t)}{\nu_\ell}\, e^{-\gamma_\ell t }\, {I_3}
\quad \text{and} \quad \lambda(t) = \sqrt{2 \pi} \theta(t) \sum_{\ell=1}^n \omega_{p,\ell}^2\, \frac{d}{dt} \left( \frac{\sin(\nu_\ell t)}{\nu_\ell}\, e^{-\gamma_\ell t } \right) {I_3},
\end{equation}
where $\nu_\ell^2 = \omega_{0,\ell}^2-\gamma_\ell^2$ (if $\omega_{0, \ell} > \gamma_\ell$) and  $\theta$ is the Heaviside function, i.e., $\theta(t) = 1$ if $t \ge 0$ and $\theta(t) = 0$ otherwise. Here $\lambda$ is defined in such a way that $\widehat{\lambda}(\omega) = -  i \omega \widehat{\chi}(\omega)$ for $\omega \in \R$.

\medskip 
We study \eqref{eq:EM-G-1} under the form of the initial problem at the time $t = 0 $, assuming that the data are known in the past $t<0$. 
Set 
\begin{equation}
\label{eq:truncatedConvolutionProduct}
(\lambda_{ij} \star  X)(t,\cdot) := \int_{0}^t \lambda (t-\tau,\cdot) X(\tau,\cdot)\,   d \tau, \quad  \mbox{ for } t > 0. 
\end{equation}
For $X = E$ or $H$, under the causality assumption \eqref{eq:causality}-\eqref{eq-causality1}, one has for $t>0$ that
\begin{equation}\label{reformulate}
\begin{aligned}
(\lambda_{ij} *  X)(t,\cdot)&  =  \int_{0}^t \lambda_{ij} (t-\tau,\cdot) X(\tau,\cdot)\,  d \tau + \int_{-\infty}^0 \lambda_{ij} (t-\tau,\cdot) X(\tau,\cdot)\,  d \tau  \nonumber \\
& = (\lambda_{ij} \star  X)(t,\cdot) + \int_{-\infty}^0 \lambda_{ij} (t-\tau,\cdot) X(\tau,\cdot)\,  d \tau.
\end{aligned}
\end{equation}
Hence if  the data are known for the past $t < 0$, then the last term is known at time $t > 0$. 
With the presence of sources,  one can then reformulate system   \eqref{eq:EM-G-1} under the form
\begin{equation}\label{eq:problem0}
\left\{ 
\begin{aligned}
& \eps_{rel}(x) \partial_t E(t, x) + (\lambda_{ee} \star E)(t, x) +  (\lambda_{em} \star H)(t, x) =  \nabla \times H(t, x)+f_e(t,x), \\[6pt]
&  \mu_{rel}(x)  \partial_t H(t, x)+  (\lambda_{me}\star E)(t, x) + (\lambda_{mm}\star H)(t,x)   = - \nabla \times E(t, x)+f_m(t,x),\\[6pt]
& E(0,x) = E_0(x),  \ H(0,x) = H_0(x), 
\end{aligned}\right.
\end{equation}
for $t > 0$ and $x \in \R^3$. Here $E_0$ and $H_0$ are the initial data at time $t =0$, and $f_e, \ f_m$ are given fields that can be considered as ``effective" sources since they also take into account the last terms in \eqref{reformulate}. 

Set
\begin{equation}\label{eq:notations}
u := \begin{bmatrix}
E \\ H
\end{bmatrix}, \quad 
u_0 := \begin{bmatrix}
E_0 \\ H_0
\end{bmatrix}, \quad
f := \begin{bmatrix}
f_e \\ f_m
\end{bmatrix}, \quad
\mathbb A u := \begin{bmatrix}
\nabla \times H \\ - \nabla \times E
\end{bmatrix},
\end{equation}
\begin{equation}\label{eq:notations-1}
\Lambda := \begin{bmatrix}
\lambda_{ee} & \lambda_{em} \\
\lambda_{me} & \lambda_{mm} 
\end{bmatrix} \quad \mbox{ and } \quad 
M := \begin{bmatrix}
\eps_{rel} & 0 \\
0 & \mu_{rel} 
\end{bmatrix}.
\end{equation}
System \eqref{eq:problem0} can then be rewritten in the following compact form: 
\begin{equation}\label{eq:problem}
\left\{ 
\begin{aligned}
& M(x) \partial_t u(t,x) + (\Lambda \star u)(t,x) = \mathbb A u(t,x) + f(t,x), \\
& u(0,x) = u_0(x),  \\
\end{aligned}\right.  \quad  \mbox{ for } t> 0,\  x \in \R^3.
\end{equation}

Define
\begin{equation}
\HH := \LL^3 \times \LL^3 \quad \mbox{ and } \quad \VV := \Hcurl \times \Hcurl, 
\end{equation}
equipped with the standard inner products induced from $\LL^3$ and $\Hcurl$.  One can verify that $\HH$ and $\VV$ are Hilbert spaces. 

We also denote 
\begin{equation}\label{def-M6}
\mbox{$\mathcal M_6( L^\infty(\R^3))$ as the space of $6 \times 6$ real matrices whose entries are $ L^\infty(\R^3)$ functions.}
\end{equation}
In what follows, in the time domain, we only consider {\it real} quantities. 

\medskip
Concerning  the well-posedness of \eqref{eq:problem}, we prove, see \cite[Theorem 3.1]{Ng-Vinoles},

\begin{theorem}\label{thm-WP}
Let $T \in (0,+\infty)$, $u_0  \in \HH$, $f \in {L}^1(0,T;\HH)$,  and $\Lambda \in  L^1\big(0,T;\mathcal M_6( L^\infty(\R^3) \big)$. Assume that \eqref{epsmu}, \eqref{assumption-lambda}, \eqref{eq:causality} and
\eqref{eq:passivity} hold.  There exists a unique weak solution $u \in  L^\infty(0,T;\HH)$  of \eqref{eq:problem} on $(0,T)$. Moreover, the following estimate holds 
\begin{equation}\label{est-WP}
\ll M u(t, \cdot), u(t, \cdot) \rr_\HH  \le  \left(\ll M u_0, u_0 \rr_\HH^{1/2}   + C \int_0^t \|  f(s, \cdot) \|_{\HH} \,   ds\right)^2 \quad \mbox{ in } (0, T), 
\end{equation}
where $C$ is a positive constant depending only on the coercivity of $M$. 
\end{theorem}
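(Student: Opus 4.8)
The plan is to read \eqref{eq:problem} as an abstract Volterra integro-differential Cauchy problem in $\HH$, to solve it by combining the theory of unitary groups with a fixed-point argument adapted to the memory term, and then to extract \eqref{est-WP} from the passivity condition \eqref{eq:passivity}. First I would record the functional-analytic structure. Integrating by parts over $\R^3$ (no boundary contributions), the Maxwell operator $\mathbb{A}$ with domain $\VV$ is skew-adjoint on $\HH$. By \eqref{epsmu}, $M$ is a bounded, symmetric, coercive multiplication operator, so $\ll \cdot, \cdot \rr_M := \ll M\, \cdot, \cdot \rr_\HH$ is an inner product on $\HH$ equivalent to the standard one, and $M^{-1}\mathbb{A}$ (again with domain $\VV$) is skew-adjoint for $\ll \cdot, \cdot \rr_M$ since $\mathbb{A}$ is skew-adjoint and $M$ is a bounded positive isomorphism; by Stone's theorem $M^{-1}\mathbb{A}$ generates a one-parameter group $(S(t))_{t \in \R}$ that is unitary for $\ll \cdot, \cdot \rr_M$. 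Rewriting \eqref{eq:problem} as $\partial_t u = M^{-1}\mathbb{A} u + M^{-1} f - M^{-1}(\Lambda \star u)$, I would seek $u$ as a fixed point of
\[
\Phi(u)(t) := S(t) u_0 + \int_0^t S(t-s)\, M^{-1}\big( f(s) - (\Lambda \star u)(s) \big)\, ds .
\]

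For existence and uniqueness I would solve this fixed-point equation in $C([0,T];\HH) \subset L^\infty(0,T;\HH)$. Since $\Lambda \in L^1\big(0,T;\mathcal M_6(L^\infty(\R^3))\big)$, Young's inequality shows that $u \mapsto \Lambda \star u$ maps $C([0,T];\HH)$ into itself, and the convolution is of Volterra type, so $\|\Phi(u)(t) - \Phi(v)(t)\|_\HH \le C \int_0^t \int_0^s \|\Lambda(s-\tau)\|\, \|u(\tau) - v(\tau)\|_\HH \, d\tau\, ds$, with $\|\Lambda(\sigma)\|$ the operator norm of the matrix $\Lambda(\sigma)$. Measuring $C([0,T];\HH)$ with the equivalent norm $\sup_{t \in [0,T]} e^{-kt}\|u(t)\|_\HH$, one checks that $\Phi$ becomes a contraction once $k$ is large (using $\int_0^T \|\Lambda(\sigma)\| e^{-k\sigma}\, d\sigma \to 0$ as $k \to \infty$). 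Banach's theorem then produces a unique $u \in C([0,T];\HH)$, depending Lipschitz-continuously on $(u_0, f) \in \HH \times L^1(0,T;\HH)$; unwinding $\Phi$ shows that $u$ is a weak solution of \eqref{eq:problem}, and conversely every weak solution is a mild solution by variation of constants, hence coincides with $u$.

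The heart of the matter is the estimate \eqref{est-WP}, and it is precisely here that passivity is used. I would first establish the dissipativity of $\Lambda$: for every $t \in (0,T]$ and every $v \in L^2(0,t;\HH)$,
\[
\int_0^t \ll (\Lambda \star v)(s), v(s) \rr_\HH \, ds \ge 0 .
\]
Extending $v$ by zero outside $(0,t)$ — legitimate because $\Lambda$ is causal, see \eqref{eq:causality} — and applying Parseval's identity in time turns the convolution into a pointwise product in frequency, so the left-hand side equals, up to a positive constant, $\int_\R \real \ll \widehat\Lambda(\omega)\widehat v(\omega), \widehat v(\omega) \rr \, d\omega$, which is nonnegative by \eqref{eq:passivity} applied pointwise in $x$ and $\omega$ and then integrated in $x$; the only delicate point is checking the integrability needed to justify Parseval, which is where \eqref{assumption-lambda} enters. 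Granting this, I would run the energy identity on regularized data: approximating $u_0$ in $\HH$ by elements of $\VV$ and $f$ in $L^1(0,T;\HH)$ by smooth fields, the corresponding solutions $u^n$ are regular enough that, pairing the equation with $u^n$ and using $\ll \mathbb{A} u^n, u^n \rr_\HH = 0$ together with $\frac{d}{dt}\ll M u^n, u^n \rr_\HH = 2\ll M \partial_t u^n, u^n \rr_\HH$, integration in time gives
\[
\tfrac12 \ll M u^n(t), u^n(t) \rr_\HH = \tfrac12 \ll M u^n_0, u^n_0 \rr_\HH + \int_0^t \ll f^n(s), u^n(s) \rr_\HH \, ds - \int_0^t \ll (\Lambda \star u^n)(s), u^n(s) \rr_\HH \, ds .
\]
The last term is $\ge 0$ by dissipativity; hence, writing $\psi := \ll M u^n, u^n \rr_\HH$ and using that coercivity of $M$ gives $\|u^n\|_\HH \le C\sqrt{\psi}$ with $C$ depending only on the coercivity constant of $M$, one obtains $\psi(t) \le \psi(0) + 2C\int_0^t \|f^n(s)\|_\HH \sqrt{\psi(s)}\, ds$; the elementary fact that $\psi(t) \le \psi(0) + \int_0^t b(s)\sqrt{\psi(s)}\,ds$ with $b \ge 0$ implies $\sqrt{\psi(t)} \le \sqrt{\psi(0)} + \tfrac12\int_0^t b(s)\,ds$ then yields $\sqrt{\psi(t)} \le \sqrt{\psi(0)} + C\int_0^t \|f^n(s)\|_\HH\, ds$, which is \eqref{est-WP} for $u^n$. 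Letting $n \to \infty$ and invoking the continuous dependence from the previous step, \eqref{est-WP} follows for $u$; uniqueness can also be recovered from \eqref{est-WP} applied to the difference of two solutions with vanishing data.

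The step I expect to be the main obstacle is the passivity-to-dissipativity lemma: converting the frequency-domain sign condition \eqref{eq:passivity} into the time-domain positivity of $\int_0^t \ll \Lambda \star v, v \rr_\HH$, with the attendant Fourier-analytic bookkeeping (the convolvand is only locally integrable in time and $\widehat{\lambda_{ij}}$ is only locally integrable), together with propagating enough regularity through the memory term that the formal energy identity is licit on the approximating sequence. Both are dealt with by a careful approximation, in the spirit of \cite{Ng-Vinoles}.
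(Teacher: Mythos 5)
Your proposal is sound, but it takes a genuinely different route from the paper for the existence part. The paper constructs the solution by a Galerkin method: one works with finite-dimensional approximations $u_n$, for which the energy identity is automatically licit, uses the key time-domain dissipativity inequality \eqref{key-Galerkin} (the same passivity-to-positivity lemma you isolate, obtained from \eqref{eq:causality}, \eqref{assumption-lambda} and Plancherel applied to \eqref{eq:passivity}) to get the a priori bound \eqref{eq:tmp3}, then Gronwall and a standard compactness argument yield both existence and \eqref{est-WP} simultaneously, with uniqueness handled afterwards as for the wave equation. You instead exploit that $M^{-1}\mathbb A$ is skew-adjoint for the $M$-weighted inner product, invoke Stone's theorem, and solve the Volterra fixed-point equation by a contraction in an exponentially weighted norm; the estimate is then recovered a posteriori by regularizing the data and passing to the limit. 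What your route buys is continuity in time of the solution and Lipschitz dependence on $(u_0,f)$ for free; what it costs is two steps the Galerkin route avoids: (i) the identification of $L^\infty(0,T;\HH)$ weak solutions in the sense of Definition~\ref{def-WS} with mild solutions (a Ball-type argument, which you invoke as ``variation of constants'' but which deserves a proof, treating $f-\Lambda\star u\in L^1(0,T;\HH)$ as a given source), and (ii) a regularity bootstrap through the memory term so that the energy identity is justified for the approximating strong solutions, whereas in the Galerkin scheme the identity holds at the discrete level with no extra work. You also correctly flag the one genuinely delicate point common to both approaches: in proving \eqref{key-Galerkin} one cannot simply truncate $\Lambda$ in time (passivity \eqref{eq:passivity} is a condition on the full Fourier transform $\widehat\Lambda$, not of truncations), so the Parseval step must be justified by an approximation argument using both $\lambda_{ij},\widehat{\lambda_{ij}}\in L^1_{\loc}$; your square-root Gronwall step and the dependence of $C$ only on the coercivity of $M$ match the statement.
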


The notion of weak solutions for \eqref{eq:problem} is: 

\begin{definition}\label{def-WS} \rm
Let $T \in (0,+\infty)$, $u_0 \in \HH$ and $f \in  L^1(0,T;\HH)$. A function  $u \in  L^\infty(0,T;\HH)$ is called a \emph{weak solution} of \eqref{eq:problem} on $[0,T]$ if 
\begin{equation}\label{eq-WS}
\frac{ d}{ dt} \ll M u(t, \cdot ),v \rr_\HH + \ll (\Lambda \star u)(t, \cdot),v\rr_\HH   =  \ll u(t, \cdot), \mathbb A v \rr_\HH  + \ll f(t, \cdot),v \rr_\HH \mbox{ in } (0, T)  \mbox{ for all $v \in \VV$},
\end{equation}
and \begin{equation}\label{eq-IC}
u(0, \cdot) = u_0. 
\end{equation}
\end{definition}

%\begin{remark} \rm One can easily check that if $u$ is a smooth solution and decays enough at infinity, then $u$ is a weak solution by integration by parts.
%\end{remark} 

\begin{remark} \rm One can easily check that if $u$ is a smooth solution and decays sufficiently at infinity, then $u$ is a weak solution by integration by parts, and that if $u$ is a weak solution and smooth, then $u$ is a classical solution. 
\end{remark} 

%Some comments on Definition~\eqref{defi:weakSolution} are in order. Equation \eqref{eq:defWeakSolution} is understood in the distributional sense. Initial condition   \eqref{eq:initialConditions}  is understood as 
%\begin{equation}\label{trace-sense}
%\ll M u(0, \cdot),v \rr_\HH  = \ll M u_0,v \rr_\HH, \quad \mbox{ for all } v \in \VV. 
%\end{equation}
%Under the assumptions $u \in  L^\infty(0,T;\HH)$, $v \in \VV$, $f \in {L}^1(0,T;\HH)$  and $\Lambda \in  L^1 \big(0,T;\mathcal M_6( L^\infty(\R) \big)$, one can check that $ \ll (\Lambda \star u)(t),v\rr_\HH$, $ \ll u(t), \mathbb A v \rr_\HH$, $\ll f(t),v \rr_\HH$ are in $L^1(0, T)$. It follows from \eqref{eq:defWeakSolution} that 
%\begin{equation}
%\ll M u(t),v \rr_\HH  \in W^{1, 1}(0, T). 
%\end{equation}
%This in turn ensures the trace sense of $\ll M u(0, \cdot),v \rr_\HH $ in \eqref{trace-sense}. 
%
%\medskip 

We next discuss the finite speed propagation for \eqref{eq:problem}. 
In what follows, $B(a,R)$ stands for the ball in $\R^3$ of radius $R>0$ and centered at $a \in \R^3$. In the case $a = 0$ -- the origin -- we simply denote $B(0, R)$ by $B_R$. 
Set  
\begin{equation}
c(x) := \gamma_e (x) \gamma_{m} (x),   \quad  \mbox{ for } x \in \R^3, 
\end{equation}
where $\gamma_e(x)$ and $\gamma_m(x)$ are the largest eigenvalues of $\eps_{rel}(x)^{-1/2}$ and $\mu_{rel}(x)^{-1/2}$, respectively. According to assumption \eqref{epsmu}, $c(x)$ is bounded above and below  by a positive constant. For $a \in \R^3$ and $R> 0$, we denote  
\begin{equation}\label{eq:defSpeed}
c_{a, R}:= \mathop{\mbox{ess sup}}_{x \in B(a, R)} c(x).   
\end{equation}

The following result  is on the  finite speed propagation of \eqref{eq:problem}, see \cite[Theorem 3.2]{Ng-Vinoles}:

\begin{theorem}\label{thm-FS} Let $R>0$, $a \in \R^3$, and $u_0  \in \HH$. For $T > R/c_{a,R}$, let $f \in  L^1(0, T;\HH)$ and $\Lambda \in  L^1(0, T;\mathcal M_6\big( L^\infty(\R^3)\big)$. Assume that \eqref{epsmu}, \eqref{assumption-lambda}, \eqref{eq:causality} and
\eqref{eq:passivity} hold, 
\begin{equation}
{supp}\, u_0 \cap B(a,R)  = \emptyset,
\end{equation}
and
\begin{equation}
{supp}\, f(t, \cdot) \cap B(a,R-c_{a, R} t) = \emptyset, \quad \mbox{ for almost every } t \in (0, R/ c_{a, R}). 
\end{equation}
Let $u \in  L^\infty(0, T; \HH)$ be  the unique weak solution of \eqref{eq:problem} on $(0, T)$. Then 
\begin{equation}
{supp}\, u(t, \cdot) \cap  B(a,R-c_{a, R} t) = \emptyset, \quad \mbox{ for almost every } t \in (0, R/ c_{a, R}). 
\end{equation}
\end{theorem}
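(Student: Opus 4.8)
The plan is to run a localized energy estimate on the backward light cone whose opening is dictated by the effective speed $c_{a,R}$. For $t\in[0,R/c_{a,R})$ put $\rho(t):=R-c_{a,R}\,t$ and $B(t):=B(a,\rho(t))$, and define the local energy
$$
\mathcal E(t):=\frac12\int_{B(t)} M(x)\,u(t,x)\cdot u(t,x)\,dx .
$$
By \eqref{epsmu} the matrix $M$ is coercive, so $\mathcal E(t)$ is comparable to $\|u(t,\cdot)\|_{L^2(B(t))}^2$, and the hypothesis $\supp u_0\cap B(a,R)=\emptyset$ gives $\mathcal E(0)=0$. It suffices to prove $\mathcal E\equiv 0$ on $[0,R/c_{a,R})$, since then $u(t,\cdot)$ vanishes a.e.\ on $B(t)=B(a,R-c_{a,R}t)$, which is the claim.

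Differentiating $\mathcal E$ (formally for the moment; the rigorous version is discussed below) via the Reynolds transport formula produces the moving-boundary term $-\tfrac{c_{a,R}}{2}\int_{\partial B(t)}Mu\cdot u\,dS$, while $\int_{B(t)}M\partial_t u\cdot u=\tfrac12\int_{B(t)}\partial_t(Mu\cdot u)$ is rewritten using the equation in \eqref{eq:problem}. The term $\int_{B(t)}(\mathbb{A}u)\cdot u$ becomes, via $(\nabla\times H)\cdot E-(\nabla\times E)\cdot H=-\nabla\cdot(E\times H)$ and the divergence theorem, the flux $-\int_{\partial B(t)}(E\times H)\cdot\nu\,dS$. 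The crucial pointwise bound is that, with $\gamma_e,\gamma_m$ as in the definition of $c$, one has $|E|^2\le\gamma_e^2\,\eps_{rel}E\cdot E$ and $|H|^2\le\gamma_m^2\,\mu_{rel}H\cdot H$, hence on $\partial B(t)\subset\overline{B(a,R)}$ (recall $Mu\cdot u=\eps_{rel}E\cdot E+\mu_{rel}H\cdot H$),
$$
|(E\times H)\cdot\nu|\le|E|\,|H|\le\tfrac{c(x)}{2}\,Mu\cdot u\le\tfrac{c_{a,R}}{2}\,Mu\cdot u .
$$
Thus the flux term is dominated by the moving-boundary term and the two cancel, leaving
$$
\mathcal E'(t)\le-\int_{B(t)}(\Lambda\star u)(t,x)\cdot u(t,x)\,dx+\int_{B(t)}f(t,x)\cdot u(t,x)\,dx ,
$$
and the last integral vanishes because $\supp f(t,\cdot)\cap B(t)=\emptyset$.

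It remains to absorb the memory term by Grönwall. Bounding the entries of $\Lambda(t-\tau,\cdot)$ by $\|\Lambda(t-\tau,\cdot)\|_{L^\infty(\R^3)}$, using the Cauchy–Schwarz inequality in $x$, the inclusion $B(t)\subset B(\tau)$ for $\tau\le t$, and coercivity of $M$ (so that $\|u(\tau,\cdot)\|_{L^2(B(t))}\le\|u(\tau,\cdot)\|_{L^2(B(\tau))}\le C\,\mathcal E(\tau)^{1/2}$), one gets
$$
\mathcal E'(t)\le C\,\mathcal E(t)^{1/2}\int_0^t\|\Lambda(t-\tau,\cdot)\|_{L^\infty(\R^3)}\,\mathcal E(\tau)^{1/2}\,d\tau .
$$
Writing $N(t):=\sup_{[0,t]}\mathcal E$ and $\int_0^t\|\Lambda(s,\cdot)\|_{L^\infty(\R^3)}\,ds\le\|\Lambda\|_{L^1(0,T;\mathcal M_6(L^\infty(\R^3)))}$, this yields $\mathcal E'(t)\le C\,\|\Lambda\|_{L^1}\,N(t)$, hence $N(t)\le C\,\|\Lambda\|_{L^1}\int_0^t N(s)\,ds$, and Grönwall together with $N(0)=0$ forces $N\equiv 0$, i.e.\ $\mathcal E\equiv 0$.

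The main obstacle is rigor: $u$ is merely a weak solution in $L^\infty(0,T;\HH)$, so $u(t,\cdot)\notin\VV$, $\mathcal E$ is not a priori differentiable, and the integrations by parts above are not literally justified. I would handle this by regularization: mollify in space, $u_\eta:=u*\varphi_\eta\in\VV$, test the weak formulation \eqref{eq-WS} against $u_\eta$ times a smooth spatial cutoff adapted to $B(t)$, control the commutator of $\mathbb{A}$ with the mollifier by a Friedrichs-type lemma (it tends to $0$ in $L^2$), run the computation above for the now absolutely continuous regularized local energy, and then let $\eta\to0$ and relax the cutoff to $\chi_{B(t)}$; the cutoff's gradient, supported near $\partial B(t)$, is precisely what the flux estimate is designed to absorb. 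An alternative is to establish the cone estimate at the level of the Galerkin or semigroup approximations used to construct the weak solution in Theorem~\ref{thm-WP} and pass to the limit. Apart from this technical point the argument is the classical energy–cone method, the only genuinely new ingredient being that the causal, $L^1$-in-time structure of $\Lambda$ lets the memory term be closed by Grönwall rather than destroying the cone property.
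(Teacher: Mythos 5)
Your formal core is right, and it is exactly the ``standard'' cone computation the paper alludes to: the shrinking-ball energy $\mathcal E(t)=\tfrac12\int_{B(a,R-c_{a,R}t)}Mu\cdot u$, the pointwise bound $|(E\times H)\cdot\nu|\le\gamma_e\gamma_m(\eps_{rel}E\cdot E)^{1/2}(\mu_{rel}H\cdot H)^{1/2}\le\tfrac{c_{a,R}}{2}Mu\cdot u$ cancelling the Reynolds boundary term, and the source term dropping out by the support hypothesis on $f$. Your treatment of the memory term is a genuinely different (and valid) route from the paper's: the paper disposes of it by the passivity inequality \eqref{key-Galerkin} (which localizes to the cone precisely because $B(a,R-c_{a,R}t)\subset B(a,R-c_{a,R}\tau)$ for $\tau\le t$), whereas you bound it by $C\,\mathcal E(t)^{1/2}\int_0^t\|\Lambda(t-\tau,\cdot)\|_{L^\infty}\mathcal E(\tau)^{1/2}\,d\tau$ using the same nesting and close with Gr\"onwall; this uses only causality and $\Lambda\in L^1(0,T;\mathcal M_6(L^\infty(\R^3)))$, not passivity, which is a nice observation.

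The gap is in the rigor step, which is the entire non-routine content of the theorem, and your two proposed fixes do not close it. For the mollification route: the commutator of $\mathbb A$ with $\varphi_\eta*$ is identically zero ($\mathbb A$ has constant coefficients); the problematic commutator is that of the matrix $M$ (i.e.\ $\eps_{rel},\mu_{rel}$) with the mollifier, which appears when you try to convert $\ll \partial_t(Mu),\chi u_\eta\rr+\ll Mu,\chi\partial_t u_\eta\rr$ into the derivative of a local energy; Friedrichs-type lemmas require at least Lipschitz coefficients, while \eqref{epsmu} only gives $L^\infty$, and the commutator error is paired against $\mathbb A u_\eta$, whose $L^2$ norm is not uniformly bounded in $\eta$ since $u$ is only in $\HH$. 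Moreover \eqref{eq-WS} admits only time-independent test functions $v\in\VV$, so plugging in $\chi u_\eta(t,\cdot)$ already requires a separate justification of the chain rule in time for an $L^\infty(0,T;\HH)$ function. The Galerkin alternative is worse: the projections are nonlocal, so the approximate solutions satisfy no support or cone property that could pass to the limit. The paper's device, absent from your proposal, is to work with $U(t,\cdot)=\int_0^t u(s,\cdot)\,ds$: integrating \eqref{eq-WS} in time shows that $\mathbb A U(t)\in\HH$, i.e.\ $U(t)\in\VV$, while $\partial_t U=u\in L^\infty(0,T;\HH)$, so your cone energy computation (flux bound included) can be run rigorously for $U$ without any mollification or commutators; the statement for $u$ then follows by differentiating $U(\cdot,x)$ in $t$ on the time interval $t<(R-|x-a|)/c_{a,R}$ for a.e.\ $x$.
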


We briefly mention here the ideas of the proofs of Theorems~\ref{thm-WP} and \ref{thm-FS}. The construct of a solution in Theorem~\ref{thm-WP} is based on the Galerkin method. One of the key observations is the following inequality 
\begin{equation}\label{key-Galerkin}
 \int_0^t  \ll (\Lambda \star v)(s, \cdot),  v(s, \cdot) \rr_\HH  d s \ge 0, \quad \mbox{ for } v \in L^\infty(0, T; \HH), \ t \in (0,T). 
\end{equation}
Similar observations in the acoustic setting were used in different contexts, see, e.g., \cite{MinhLinh, NguyenVogelius}. 
The inequality \eqref{key-Galerkin} plays an important role in deriving the following estimate for an approximate  solution $u_n$ after multiplying the equation of $u_n$ by $u_n$ and integrating by parts, which gives
\begin{equation}\label{eq:tmp3}
\ll M u_n(t, \cdot), u_n(t, \cdot) \rr_\HH 
\le \ll M u_{n}(t=0, \cdot), u_{n}(t=0, \cdot) \rr_\HH  + 2 \int_0^t \|  f(s, \cdot) \|_{\HH} \| u_{n}(s, \cdot) \|_{\HH}\,   ds.
\end{equation}
By Gronwall's lemma, this in turn implies the desired estimate for a solution $u$ obtained via the standard compactness argument, see,  e.g.,  \cite{Evans}. . The uniqueness of $u$ is quite standard as in the standard wave equations after noting \eqref{key-Galerkin}.  The proof of Theorem~\ref{thm-FS} is standard via \eqref{key-Galerkin} if one knows that  the solution $u$ is regular. To overcome the lack of the regularity of $u$, we consider the function 
$$
U(t, x): = \int_0^t u(s, x) \, ds, \mbox{ for } t \in [0, T), x \in \R^3
$$
and show that 
\begin{equation}
{supp}\, U(t, \cdot) \cap  B(a,R-c_{a, R} t) = \emptyset, \quad \mbox{ for almost every } t \in (0, R/ c_{a, R}). 
\end{equation}
This yields the desired conclusion. As far as we know, the proof of finite speed propagation for energy solutions is not presented in standard references on partial differential equations.

\section{Other topics and future directions} \label{sect-perspectives}

Some interesting aspects of NIMs are not discussed in this survey, such as the stability of NIMs and cloaking a source via anomalous localized resonance, because we have nothing new to add to these topics.  The stability of NIMs in the frequency domain for acoustic waves was investigated by Costabel and Stephan in 1985 \cite{CostabelErnst} using the integral method. Later, this problem was studied by the integral method and the pseudo-differential operators theory \cite{Ola} and by the $T$-coercivity approach (see \cite{AnneSophieChesnelCiarlet1,AnneSophie-Ciarlet01} and references therein). In these works, the well-posedness was established in the Fredholm sense in $H^1$, meaning  that the compactness holds; the existence and the uniqueness are not discussed. Recently,  \cite{Ng-WP} we introduced a new approach to study the stability aspect of NIMs. More precisely, we investigated the well-posedness  of the Helmholtz equations involving sign changing coefficients. Our approach involved   the study of Cauchy problems, which are derived by  reflections in the spirit of the proofs presented in Sections~\ref{sect-SCM}, \ref{sect-CCM}, and \ref{sect-CALR-object} using the change of variables formula in Lemma~\ref{lem-TO}. We then proposed various methods to study these Cauchy problems. One method was  via the prominent work of Agmon, Douglis, and Nirenberg \cite{ADNII} (via Fourier analysis or fundamental solutions) and others were based on variational methods/ multiplier techniques.  In consequence, we can unify and extend largely  known works.  In particular, we proved that (see \cite[Corollary 1]{Ng-WP}) the well-posedness holds if, under some smoothness assumptions, 
$$
A_+ > A_- \mbox{ on } \Gamma \mbox{ or } A_+ < A_- \mbox{ on } \Gamma, 
$$
for all  connected component $\Gamma$ of the sign changing coefficient interface, $A_+$ is the restriction of $A$ in the region $A>0$, and $A_-$ is the restriction of $-A$ in the region $A < 0$. We also showed that the complementary property of media is almost necessary for the occurrence of resonance (see \cite[Proposition 2]{Ng-WP}). A numerical algorithm in the spirit of  this approach was also studied in \cite{Abdulle16}. 

 The second aspect we do not discuss in this survey is cloaking {\it a source} via anomalous localized resonance. This cloaking technique is relative due to the fact that the power, which is roughly speaking the standard energy of the fields in the region of NIMs multiplied by  the loss,  must be normalized for the cloaking purpose. This phenomenon  was observed by Milton and Nicorovici in \cite{MiltonNicorovici} (see also \cite{NicoroviciMcPhedranMiltonPodolskiy1}) for a symmetrical radial structures in a two dimensional quasistatic regime and was considered  in a general setting, the setting of  doubly complementary media in \cite{Ng-CALR, Ng-CALR-frequency} for the acoustic regime (see also \cite{BouchitteSchweizer10, AmmariCiraoloKangLeeMilton2, KohnLu} for related results in some specific  settings).  It has been shown \cite{Ng-CALR, Ng-CALR-frequency} that $i)$ cloaking a source via anomalous localized resonance appears if and only if the power blows up; $ii)$ the power blows up if the source is located  ``near" the  plasmonic layer made of NIMs; $iii)$ the power remains bounded if the source is far away from the plasmonic layer.  It is worth noting that there is no connection between the blow up of the power and the localized resonance in general  \cite{MinhLoc1}.  Finally, we want to mention that the design of metamaterials poses new and  interesting problems that are being extensively investigated  in the litterature, see \cite{CL, BouchitteFelbacq,GuyenneauZolla,KohnShipman} and the refences therein. 

An interesting direction concerning NIMs, or more generally metamaterials is to study these metamaterials in the time domain.  For example, it would be interesting to understand conditions under which the energy of solutions of the Maxwell equations considered in Section~\ref{sect-Maxwell} decay in any bounded domain; this is known for  (standard) positive index media. Another interesting question would be to investigate the limiting amplitude principle, which concerns the behavior of the fields in the time domain generated by a harmonic forcing term for large time. In some particular settings, the limiting amplitude principle was already considered in \cite{GralakTip, Joly16}, but the question for a general setting remains open.

\bigskip
\noindent{\bf Acknowledgement:}  This paper is an extended  version  of the lecture given by  the author at 
VIASM annual meeting in 2017 at Vietnam Institute for Advanced Study in Mathematics. The author warmly thanks  the institute for the hospitality. 

%\newpage


\begin{thebibliography}{99}
	
\bibitem{ADNII}
S.~Agmon, A.~Douglis, and L.~Nirenberg, \emph{{Estimates near the boundary for
  solutions of elliptic partial differential equations satisfying general
  boundary conditions II.}}, Comm. Pure Appl. Math. \textbf{17} (1964),
  35--92.
  
\bibitem{Abdulle16}
A.~Abdulle, M.~E. Huber, and S.~Lemaire, \emph{An optimization-based numerical
  method for diffusion problems with sign-changing coefficients},  C. R. Math. Acad. Sci. Paris {\bf 355} (2017),  472--478. 


%\bibitem{AluEngheta}
%A.~Alu and N.~Engheta, \emph{{Achieving transparency with plasmonic and
%  metamaterial coatings}}, Phys. Rev. E \textbf{95} (2005), 106623.

\bibitem{AmmariCiraoloKangLeeMilton2}
H.~Ammari, G.~Ciraolo, H.~Kang, H.~Lee, and G.~W. Milton, \emph{{Anomalous
localized resonance using a folded geometry in three dimensions}}, Proc. R.
Soc. Lond. Ser. A \textbf{469} (2013), 20130048.

%
%\bibitem{AmmariCiraoloKangLeeMilton}
%H.~Ammari, G.~Ciraolo, H.~Kang, H.~Lee, and G.~W. Milton, \emph{{Spectral theory of a Neumann-Poincar\'e-type operator and
%  analysis of cloaking due to anomalous localized resonance}}, Arch. Rational
%  Mech. Anal. \textbf{218} (2013), 667--692.

\bibitem{BBH}
F.~Bethuel, H.~Brezis, and F.~Helein, \emph{{Ginzburg Landau vortices}}, 
Progress in Nonlinear Differential Equations and  Their Applications, vol. \textbf{13}, Birkh\"auser Boston,  1994.


\bibitem{AnneSophieChesnelCiarlet1}
 A.~S. Bonnet-Ben~Dhia, L.~Chesnel,  and P.~Ciarlet, \emph{{T-coercivity for scalar
  interface problems between dielectrics and metamaterials}}, ESAIM Math.
  Model. Numer. Anal. \textbf{46} (2012), 1363--1387.

%\bibitem{AnneSophieChesnelCiarlet2}
% A.~S. Bonnet-Ben~Dhia, L.~Chesnel,  and P.~Ciarlet, \emph{{T-coercivity for the Maxwell problem with sign-changing
%  coefficients}}, Comm. Partial Differential Equations \textbf{39} (2014),
%  1007--1031.
%
%\bibitem{AnneSophie-Ciarlet0}
%A.~S. Bonnet-Ben~Dhia, P.~Ciarlet, and C.~M. Zw\"olf, \emph{{Two- and three-field
%  formulations for wave transmission between media with opposite sign
%  dielectric constants}}, J. Comput. Appl. Math. \textbf{204} (2007), 408--417.

\bibitem{AnneSophie-Ciarlet01}
A.~S. Bonnet-Ben~Dhia, P.~Ciarlet, and C.~M. Zw\"olf, \emph{{A new compactness result for electromagnetic waves. Application
  to the transmission problem between dielectrics and metamaterials}}, Math.
  Models Methods Appl. Sci. \textbf{18} (2008), 1605--1631.

%\bibitem{AnneSophie-Ciarlet1}
%A.~S. Bonnet-Ben~Dhia, P.~Ciarlet, and C.~M. Zw\"olf, \emph{{Time harmonic wave diffraction problems in materials with
%  sign-shifting coefficients}}, J. Comput. Appl. Math. \textbf{234} (2010),
%  1912--1919.

\bibitem{BonnetierNguyen} E.~Bonnetier, H.-M. Nguyen, \emph{Superlensing using hyperbolic metamaterials: the scalar case},  J. \'Ec. polytech. Math. {\bf 4} (2017), 973--1003. 


\bibitem{BouchitteFelbacq}
G.~Bouchitt{\'e} and D.~Felbacq, \emph{Homogenization near resonances and
  artificial magnetism from dielectrics}, C. R. Math. Acad. Sci. Paris
  \textbf{339} (2004), 377--382.

\bibitem{BouchitteSchweizer10}
G.~Bouchitt\'e and B.~Schweizer, \emph{{Cloaking of small objects by anomalous
  localized resonance}}, Quart. J. Mech. Appl. Math. \textbf{63} (2010),
  437--463.
%
%\bibitem{BouchitteSchweizer}
%G.~Bouchitt{\'e} and B.~Schweizer, \emph{Homogenization of maxwell's equations
%  in a split ring geometry}, Multiscale Model. Simul. \textbf{8} (2010),
%  717--750.


%\bibitem{BrAnalyse1}
%H.~Brezis, \emph{{Functional Analysis, Sobolev Spaces and Partial Differential
%  Equations}}, Universititex, Springer, 2010.
  
  
%\bibitem{BrunoLintner07}
%O.~P. Bruno and S.~Lintner, \emph{{Superlens-cloaking of small dielectric
%  bodies in the quasistatic regime}}, J. Appl. Phys. \textbf{102} (2007),
%  12452.


\bibitem{CL} Y.~Chen and R. Lipton, \emph{Resonance and double negative behavior in metamaterials},  Arch. Ration. Mech. Anal. {\bf 209} (2013), 835--868.


\bibitem{CostabelErnst}
M.~Costabel and E.~Stephan, \emph{{A direct boundary integral equation method for
  transmission problems}}, J. Math. Anal. Appl. \textbf{106} (1985), 367--413.



\bibitem{Evans} L. C. Evans, \emph{Partial differential equations},  Graduate Studies in Mathematics, 19. American Mathematical Society, Providence, RI, 1998. 

\bibitem{GralakTip} B. Gralak, A. Tip, \emph{Macroscopic Maxwell's equations and negative index materials},   J. Math. Phys. {\bf 51} (2010) 052902.

%\bibitem{GreenleafLassasUhlmann}
%A.~Greenleaf, M.~Lassas, and G.~Uhlmann, \emph{{On nonuniqueness for Calderon's
%  inverse problem}}, Math. Res. Lett. \textbf{10} (2003), 685--693.

\bibitem{GuyenneauZolla}
S.~Guenneau and F.~Zolla, \emph{Homogenization of 3d finite chiral photonic
  crystals}, Physica B: Condens. Matter \textbf{394} (2007), 145--147.


\bibitem{jackson} J. D. Jackson, \emph{Classical electrodynamics}, third edition, John Wiley \& Sons, 1999.


\bibitem{Joly16}
M.~Cassier, C.~Hazard, and P.~Joly, \emph{{Spectral theory for Maxwell's
  equations at the interface of a metamaterial. Part I: Generalized Fourier
  transform}}, Commun. Part. Diff. Eq., to appear, https://arxiv.org/abs/1610.03021.


%\bibitem{KettunenLassas}
%H.~Kettunen, M.~Lassas, and P.~Ola, \emph{{On absence and existence of the
%  anomalous localized resonance without the quasi-static approximation}},
%  http://arxiv.org/abs/1406.6224.
%


\bibitem{KohnLu}
R.~V. Kohn, J.~Lu, B.~Schweizer, and M.~I. Weinstein, \emph{A variational
  perspective on cloaking by anomalous localized resonance}, Comm. Math. Phys.
  \textbf{328} (2014), 1--27.

%
%\bibitem{KohnOnofreiVogeliusWeinstein}
%R.~V. Kohn, D.~Onofrei, M.~S. Vogelius, and M.~I. Weinstein, \emph{{Cloaking
%  via change of variables for the Helmholtz equation}}, Comm. Pure Appl. Math.
%  \textbf{63} (2010), 973--1016.
%

\bibitem{KohnShipman}
R.~V. Kohn and S.~P. Shipman, \emph{Magnetism and homogenization of
  microresonators}, Multiscale Model. Simul. \textbf{7} (2008), 62--92.



\bibitem{LaiChenZhangChanComplementary}
Y.~Lai, H.~Chen, Z.~Zhang, and C.~T. Chan, \emph{{Complementary media
  invisibility cloak that cloaks objects at a distance outside the cloaking
  shell}}, Phys. Rev. Lett. \textbf{102} (2009).

%\bibitem{Leonhardt}
%U.~Leonhardt, \emph{{Optical conformal mapping}}, Science \textbf{312} (2006),
%1777--1780.

\bibitem{NicoroviciMcPhedranMiltonPodolskiy1}
G.~W. Milton, N.~A. Nicorovici, R.~C. McPhedran, and V.~A. Podolskiy, \emph{{A
  proof of superlensing in the quasistatic regime, and limitations of
  superlenses in this regime due to anomalous localized resonance}}, Proc. R.
  Soc. Lond. Ser. A \textbf{461} (2005), 3999--4034.

\bibitem{MiltonNicorovici}
G.~W. Milton and N.~A.~P. Nicorovici, \emph{{On the cloaking effects associated
  with anomalous localized resonance}}, Proc. R. Soc. Lond. Ser. A \textbf{462}
  (2006), 3027--3059.

%\bibitem{Milton-folded}
%G.~W. Milton, N.~P. Nicorovici, R.~C. McPhedran, K.~Cherednichenko, and
%  Z.~Jacob, \emph{{Solutions in folded geometries, and associated cloaking due
%  to anomalous resonance}}, New J. Phys. \textbf{10} (2008), 115021.


%\bibitem{NguyenHelmholtz} H-M. Nguyen \emph{Cloaking via change of variables for the Helmholtz Equation in the whole space}, Comm. Pure
%  Appl. Math. \textbf{63} (2010), 1505-1524
%
%\bibitem{NguyenPerfectCloaking}
%H-M. Nguyen, \emph{{Approximate cloaking for the Helmholtz equation via
%  transformation optics and consequences for perfect cloaking}}, Comm. Pure
%  Appl. Math. \textbf{65} (2012), 155--186.

\bibitem{Ng-Complementary}
H-M.Nguyen, \emph{{Asymptotic behavior of solutions to the Helmholtz
  equations with sign changing coefficients}}, Trans. Amer. Math. Soc. {\bf 367} (2015), 6581--6595.


\bibitem{Ng-Superlensing}
H-M. Nguyen, \emph{{Superlensing using complementary media}}, Ann. Inst. H.
  Poincar\'e Anal. Non Lin\'eaire \textbf{32} (2015), 471--484.


%\bibitem{Ng-CALR-CRAS}
%H-M. Nguyen, \emph{{Cloaking via anomalous localized resonance. A connection
%  between the localized resonance and the blow up of the power for doubly
%  complementary media}}, C. R. Math. Acad. Sci. Paris \textbf{353} (2015),
%  41--46.


\bibitem{Ng-CALR}
H-M. Nguyen, \emph{{Cloaking via anomalous localized resonance for doubly
  complementary media in the quasistatic regime}}, J. Eur. Math. Soc. (JEMS) {\bf 17} (2015), 1327--1365. 

\bibitem{Ng-Negative-Cloaking}
H-M. Nguyen, \emph{{Cloaking using complementary media in the quasistatic regime}},
  Ann. Inst. H. Poincar\'e Anal. Non Lin\'eaire, {\bf 33} (2016), 1509--1518. 

\bibitem{Ng-WP}
H-M. Nguyen, \emph{{Limiting absorption principle and well-posedness for the
  Helmholtz equation with sign changing coefficients}}, J. Math. Pures Appl. {\bf 106} (2016),  342--374.


\bibitem{Ng-survey-NIM-1} H-M. Nguyen, \emph{Negative index materials and their applications: recent mathematics progress},  Chin. Ann. Math. Ser. B {\bf 38} (2017),  601--628.

\bibitem{Ng-CALR-object}
H-M. Nguyen,  \emph{Cloaking an arbitrary object via anomalous localized resonance: the cloak is independent of the object},  SIAM J. Math. Anal. {\bf 49} (2017), 3208--3232. 



\bibitem{Ng-Superlensing-Maxwell}
H-M. Nguyen, \emph{{Superlensing using complementary media and reflecting complementary media for electromagnetic waves}},  Adv. Nonlinear Anal., to appear,  https://doi.org/10.1515/anona-2017-0146. 



\bibitem{Ng-CALR-frequency}
H-M. Nguyen,  \emph{Cloaking via anomalous localized resonance for doubly complementary media in the finite frequency regime}, J. Anal. Math., to appear, https://arxiv.org/abs/1511.08053. 

\bibitem{Ng-Cloaking-Maxwell}
H-M. Nguyen,  \emph{Cloaking using complementary media for electromagnetic waves},  ESAIM Control Optim. Calc. Var., to appear, https://arxiv.org/abs/1701.02339. 


\bibitem{MinhLoc1}
H-M. Nguyen and H.~L. Nguyen, \emph{Complete resonance and localized resonance in plasmonic
  structures}, ESAIM: Math. Model. Numer. Anal. \textbf{49} (2015), 741--754.


\bibitem{MinhLoc2}
H-M. Nguyen and H.~L. Nguyen, \emph{Cloaking using complementary media for the Helmholtz equation and a three spheres inequality for second order elliptic equations}, 
 Trans. Amer. Math. Soc. Ser. B {\bf 2} (2015), 93--112.


\bibitem{Ng-Vinoles} H.-M. Nguyen and V. Vinoles, \emph{Electromagnetic wave propagation in dispersive metamaterials}, submitted, https://arxiv.org/abs/1710.08648. 
  

\bibitem{MinhLinh}
H-M. Nguyen and L.~Nguyen, \emph{{Generalized impedance boundary conditions for
  scattering by strongly absorbing obstacles for the full wave equation: the
  scalar case}}, Math. Models Methods Appl. Sci. \textbf{25} (2015), 1927--1960.

\bibitem{NguyenVogelius}
H-M. Nguyen and M.~S. Vogelius, \emph{{Approximate cloaking for the full wave equation via change of variables: The Drude-Lorentz model}}, J. Math. Pures Appl.   {\bf 106} (2016), 797--836. 
  


%\bibitem{NguyenVogelius2}  H-M. Nguyen and M.~S. Vogelius \emph{Full range scattering estimates and their application to cloaking}, , Arch. Rational Mech. Anal. \textbf{203} (2012) 769--807.
%
%\bibitem{NguyenVogelius3}
%H-M. Nguyen and M.~S. Vogelius, \emph{{Approximate cloaking for the wave
%  equation via change of variables}}, SIAM J. Math. Anal. \textbf{44} (2012),
%  1894--1924.
%
%\bibitem{NguyenVogelius4}
%H-M. Nguyen and M.~S. Vogelius, \emph{{Approximate cloaking for the full wave equation: a study of the
%  Lorentz model}},  (2015), revised.


%\bibitem{Nedelec} J.-C. Nedelec,  Acoustic and Electromagnetic Equations, 
%Appl. Math. Sci., vol. 144, Springer-Verlag 2001. 

\bibitem{NicoroviciMcPhedranMilton94}
N.~A. Nicorovici, R.~C. McPhedran, and G.~M. Milton, \emph{{Optical and
  dielectric properties of partially resonant composites}}, Phys. Rev. B
  \textbf{49} (1994), 8479--8482.


\bibitem{Ola}
P.~Ola, \emph{{Remarks on a transmission problem}}, J. Math. Anal. Appl.
  \textbf{16} (1995), 639--658.

\bibitem{PendryNegative}
J.~B. Pendry, \emph{{Negative refraction makes a perfect lens}}, Phys. Rev.
  Lett. \textbf{85} (2000), 3966--3969.

\bibitem{PendryCylindricalLenses}
J.~B. Pendry, \emph{{Perfect cylindrical lenses}}, Optics Express \textbf{1} (2003),
  755--760.

%\bibitem{Pendry1999}
%J.~B. Pendry, A.~J. Holden, D.~J. Robbins, and W.~J. Stewart, \emph{Magnetism
%  from conductors and enhanced nonlinear phenomena}, IEEE Trans. Microw. Theory
%  Tech. \textbf{47} (1999), 2075--2084.

%\bibitem{PendrySchurigSmith}
%J.~B. Pendry, D.~Schurig, and D.~R. Smith, \emph{{Controlling electromagnetic
%  fields}}, Science \textbf{312} (2006), 1780--1782.

\bibitem{Protter60}
M.~H. Protter, \emph{{Unique continuation for elliptic equations}}, Trans.
  Amer. Math. Soc. \textbf{95} (1960), 81--91.

\bibitem{PendryRamakrishna0}
S.~A. Ramakrishna and J.~B. Pendry, \emph{Focusing light using negative
  refraction}, J. Phys. Condens. Matter \textbf{15} (2003), 6345-6364.

\bibitem{PendryRamakrishna}
S.~A. Ramakrishna and J.~B. Pendry, \emph{{Spherical perfect lens: Solutions of
  Maxwell's equations for spherical geometry}}, Phys. Rev. B \textbf{69}
  (2004), 115115.

\bibitem{ShelbySmithSchultz}
R.~A. Shelby, D.~R. Smith, and S.~Schultz, \emph{{Experimental Verification of
  a Negative Index of Refraction}}, Science \textbf{292} (2001), 77--79.

\bibitem{Veselago}
V.~G. Veselago, \emph{{The electrodynamics of substances with simultaneously
  negative values of $\eps$ and $\mu$}}, Usp. Fiz. Nauk \textbf{92} (1964),
  517--526.	
	
\end{thebibliography}
\end{document}